\documentclass[12pt]{amsart}

\usepackage{amssymb,amsthm,amsmath}
\usepackage[numbers,sort&compress]{natbib}
\usepackage{color}
\usepackage{graphicx}
\usepackage{tikz}
\usepackage{amssymb,amsthm,amsmath}
\usepackage[numbers,sort&compress]{natbib}
\usepackage{color}
\usepackage{graphicx}
\usepackage{tikz}
\usepackage{xparse}
\usepackage{mathrsfs}

\hoffset -3.5pc

\title[ ]{ Irreducibility of the Fermi variety for discrete periodic Schr\"odinger operators and embedded eigenvalues}

\author{Wencai Liu}
\address[W. Liu]{ Department of Mathematics, Texas A\&M University, College Station, TX 77843-3368, USA} \email{liuwencai1226@gmail.com; wencail@tamu.edu}

\keywords{ analytic variety, algebraic variety, Fermi variety, Bloch variety,  irreducibility, extrema, band function, band edge,  embedded eigenvalue, unique continuation, Landis' conjecture,  periodic Schr\"odinger  operator.}

\subjclass[2010]{  14H10 (primary); 47A75, 35J10 (secondary)}

\theoremstyle{plain}
\newtheorem{theorem}{Theorem}[section]
\newcommand{\R}{\mathbb{R}}
\newtheorem{corollary}[theorem]{Corollary}
\newtheorem{lemma}[theorem]{Lemma}
\newtheorem{proposition}[theorem]{Proposition}
\newtheorem{remark}{Remark}
\newcommand{\C}{\mathbb{C}}
\newcommand{\T}{\mathbb{T}}
\newcommand{\Z}{\mathbb{Z}}

\theoremstyle{plain}
\newtheorem{definition}{Definition}
\newtheorem{conjecture}{Conjecture}

\begin{document}
	
	
	\begin{abstract}
		Let $H_0$ be a discrete periodic  Schr\"odinger operator on $\ell^2(\Z^d)$:
		$$H_0=-\Delta+V,$$
		where $\Delta$ is the discrete Laplacian and $V:\Z^d\to \C$ is periodic.    We prove that  for any $d\geq3$,    the Fermi variety at every energy level  is irreducible  (modulo periodicity).  For $d=2$,    we prove that the Fermi variety at every energy level except for the average of  the potential    is irreducible  (modulo periodicity) and  the Fermi variety at the average of  the potential has at most two irreducible components  (modulo periodicity). 
 This is sharp since for  $d=2$ and a constant potential  $V$,   
		the Fermi variety at  $V$-level  has exactly  two irreducible components (modulo periodicity).  
		We also prove that  the Bloch variety  is irreducible 
		(modulo periodicity)  for any $d\geq 2$.  
		
		As  applications, we prove that when  $V$ is a  real-valued periodic function, 
		the level set of any  extrema of any  spectral band functions,  spectral band edges  in particular,  
		has dimension at most $d-2$  for any $d\geq 3$,  and  finite  cardinality
		for $d=2$. 
		We also  show  that $H=-\Delta +V+v$ does not have any embedded eigenvalues provided that $v$ decays super-exponentially.  
	\end{abstract}
	
	\maketitle 
	\section{Introduction and main results}
	
	Periodic  elliptic  operators  have been  studied intensively  in both mathematics and physics, in particular for their role in  solid state theory.  One of the  difficult and unsolved   problems    is the (ir)reducibility of  Bloch and  Fermi  varieties 
	~\cite{GKTBook,ktcmh90,bktcm91,bat1,batcmh92,ls,shi2,fls,GKToverview,shva}.
	Besides its  importance in algebraic geometry, the
	(ir)reducibility is crucial  in the study of spectral properties of periodic elliptic operators, e.g., the structure of spectral  band edges and 
	the  existence of embedded eigenvalues  under  a    suitable decaying perturbation  of the potential ~\cite{kv06cmp,kvcpde20,shi1,IM14,AIM16}.
	We refer readers to  a survey ~\cite{ksurvey} for the history and most recent developments.

	In this paper, we will concentrate  on     discrete  periodic  Schr\"odinger operators  on $\Z^d$.
	Given $q_i\in \Z_+$, $i=1,2,\cdots,d$,
	let $\Gamma=q_1\Z\oplus q_2 \Z\oplus\cdots\oplus q_d\Z$.
	We say that a function $V: \Z^d\to \C$ is  $\Gamma$-periodic (or just periodic)  if 
	for any $\gamma\in \Gamma$,  $V(n+\gamma)=V(n)$.

	Let  $\Delta$ be the discrete Laplacian on $\ell^2(\Z^d)$, namely
	\begin{equation*}
	(\Delta u)(n)=\sum_{||n^\prime-n||_1=1}u(n^\prime),
	\end{equation*}
	where $n=(n_1,n_2,\cdots,n_d)\in\Z^d$, $n^\prime=(n_1^\prime,n_2^\prime,\cdots,n_d^\prime)\in\Z^d$ and 
	\begin{equation*}
	||n^\prime-n||_1=\sum_{i=1}^d |n_i-n^\prime_i|.
	\end{equation*}
		We consider the  discrete    Schr\"{o}dinger operator on $\ell^2({\Z}^d)$,
	\begin{equation} \label{h0}
	H_0=-\Delta +V .
	\end{equation}
	
	In this paper,  we always assume 	the greatest common factor of $q_1,q_2,\cdots, q_d$ is 1, $V$ is periodic and
	$H_0$ is the  discrete periodic Schr\"{o}dinger operator given by ~\eqref{h0}.
	
	Let $\{\textbf{e}_j\}$, $j=1,2,\cdots d$,  be the   standard basis in $\Z^d$:
	\begin{equation*}
	\textbf{e}_1 =(1,0,\cdots,0),\textbf{e}_2 =(0,1,0,\cdots,0),\cdots, \textbf{e}_{d}=(0,0,\cdots,0,1).
	\end{equation*}
	
	\begin{definition}
		The   { \it Bloch variety} $B(V)$ of  $-\Delta+V$ consists of all pairs $(k,\lambda )\in \C^{d+1}$ for which
		there exists a non-zero solution of the equation 
		\begin{equation} \label{spect_0}
			(-\Delta u)(n)+V(n) u(n)=\lambda u(n) ,n\in\Z^d,  
		\end{equation}
		satisfying the so called Floquet-Bloch  boundary condition
		\begin{equation} \label{Fl}
		u(n+q_j\textbf{e}_j)=e^{2\pi i k_j}u(n),j=1,2,\cdots,d, \text{ and }  n\in \Z^d, 
		\end{equation}
		where $k=(k_1,k_2,\cdots,k_d)\in \C^d$.
		
	\end{definition}
	\begin{definition}
	Given  $\lambda\in \C$,
		the  Fermi surface (variety) $F_{\lambda}(V)$ is defined as the level set of the  Bloch variety:
		\begin{equation*}
		F_{\lambda}(V)=\{k: (k,\lambda)\in B(V)\}.
		\end{equation*}
	\end{definition}
	Our main interest in the present paper is   the irreducibility of Bloch and Fermi    varieties as analytic  sets.   
	\begin{definition}
		A subset   $\Omega\subset \C^k$ is called an analytic set if for any $x\in \Omega$, there is a neighborhood $U\subset \C^k$ of $x$, and analytic functions 
		$f_1,f_2,\cdots,f_p$ in $U$ such that
		\begin{equation*}
		\Omega\cap U=\{y\in U: f_1(y)=0,f_2(y)=0,\cdots,f_p(y)=0\}.
		\end{equation*}
	\end{definition}
	\begin{definition}
		An analytic set $ \Omega$ is said to be irreducible if it can not be represented as the union of two   non-empty proper  analytic subsets.  
	\end{definition}

	It is widely believed that the 
	Bloch/Fermi  variety (modulo periodicity)  is  always irreducible for periodic Schr\"odinger operators \eqref{h0}, which has been 
	formulated  as   conjectures:
	\begin{conjecture} ~\cite[Conjecture 5.17]{ksurvey} \label{conjb}
		The Bloch variety 
		$B(V)$  is irreducible  (modulo periodicity).
	\end{conjecture}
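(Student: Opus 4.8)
The plan is to pass to the variables $z_j=e^{2\pi i k_j}$ — under which ``modulo periodicity'' turns $B(V)$ into the zero set, inside $(\C^*)^d\times\C$, of the single Laurent polynomial
\begin{equation*}
	P(z,\lambda):=\det\bigl(H_W(z)-\lambda I\bigr)\in\C[z_1^{\pm1},\dots,z_d^{\pm1},\lambda],
\end{equation*}
where $H_W(z)$ is the $Q\times Q$ matrix ($Q:=q_1\cdots q_d$) that \eqref{spect_0}--\eqref{Fl} induce on a fundamental cell $W$ — and then to prove that $P$ is irreducible in $\C[z^{\pm},\lambda]$; since an irreducible polynomial defines an irreducible (hence analytically irreducible) hypersurface, this gives the theorem.

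The heart is to pin down the Newton polytope of $P$ after projecting exponents onto the $(z_1,\lambda)$-plane, treating $z_2,\dots,z_d$ as coefficients in the integral domain $S:=\C[z_2^{\pm},\dots,z_d^{\pm}]$; write $\mathrm{Newt}_{z_1,\lambda}(\cdot)$ for the convex hull of the pairs $(a,b)$ whose $z_1^{a}\lambda^{b}$-coefficient in $S$ is nonzero. Expanding $P=\sum_\sigma\operatorname{sgn}(\sigma)\prod_{n\in W}(H_W(z)-\lambda I)_{n,\sigma(n)}$, I would show: (i) the permutation that cycles each $z_1$-fibre of $W$ once in the $+\mathbf e_1$ direction is the \emph{unique} $\sigma$ of $z_1$-winding $N_1:=Q/q_1$; it involves no diagonal entry and no bond in a direction other than $1$, and contributes exactly $(-1)^{N_1}z_1^{N_1}$ (likewise $(-1)^{N_1}z_1^{-N_1}$ for the reverse cycle); (ii) $\sigma=\mathrm{id}$ is the unique contributor of $\lambda^{Q}$, with coefficient $(-1)^{Q}$; (iii) for any $\sigma$, the $z_1$-exponent of its monomial equals its $z_1$-winding $a$, while its $\lambda$-exponent $b$ is at most its number of fixed points, and realizing winding $a$ costs at least $q_1|a|$ non-fixed points, so $q_1|a|+b\le Q$. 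With $b\ge0$, (i)--(iii) show that $\mathrm{Newt}_{z_1,\lambda}(P)$ is exactly the triangle $T:=\operatorname{conv}\{(N_1,0),(-N_1,0),(0,Q)\}$, the coefficients at its three vertices being the nonzero constants $(-1)^{N_1},(-1)^{N_1},(-1)^{Q}\in S$.

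Finally, if $P=FG$ with $F,G$ non-units, then — being divisors of the $\lambda$-polynomial $P$ — they lie in $\C[z^{\pm}][\lambda]$, so their $\lambda$-exponents are $\ge0$; by Ostrowski's theorem over $S$, $\mathrm{Newt}_{z_1,\lambda}(F)+\mathrm{Newt}_{z_1,\lambda}(G)=T$. A triangle is Minkowski-indecomposable, so one summand, say $\mathrm{Newt}_{z_1,\lambda}(F)$, is a single point $(\alpha,b_F)$; then $\mathrm{Newt}_{z_1,\lambda}(G)=T-(\alpha,b_F)$, and nonnegativity of the $\lambda$-exponents of both $F$ and $G$ forces $b_F=0$. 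Hence $F=z_1^{\alpha}G_0$ with $G_0\in S$, but $G_0$ divides the $\lambda^{Q}$-coefficient $(-1)^{Q}$ of $P$, so $G_0$ is a unit of $S$ and $F$ is a unit — a contradiction. I expect the only delicate point to be step (i)--(iii), i.e.\ the determination of $\mathrm{Newt}_{z_1,\lambda}(P)$ through the winding/fixed-point count and the uniqueness of the extremal cycle; everything after that is formal. (Fixing $\lambda$ breaks this argument: the $z_1$-slice of $\mathrm{Newt}\bigl(P(\cdot,\lambda)\bigr)$ is merely a segment, which does Minkowski-decompose — which is exactly why the Fermi-variety assertion is subtler and, for $d=2$, genuinely fails at one energy.)
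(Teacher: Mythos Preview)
Your Newton–polytope computation in (i)--(iii) is correct: the projection $\mathrm{Newt}_{z_1,\lambda}(P)$ is indeed the triangle $T=\operatorname{conv}\{(N_1,0),(-N_1,0),(0,Q)\}$, with vertex coefficients equal to the nonzero constants $(-1)^{N_1},(-1)^{N_1},(-1)^Q$. The gap is in what you call the ``formal'' last step. ``Minkowski-indecomposable'' for a triangle means only that every Minkowski summand is \emph{homothetic} to it, not that one summand must be a point. Concretely, $T=N_1\cdot T'$ where $T'=\operatorname{conv}\{(1,0),(-1,0),(0,q_1)\}$, so for every $1\le k\le N_1-1$ the decomposition $T=kT'+(N_1-k)T'$ is a splitting into two lattice polytopes, both with nonnegative $\lambda$-exponents. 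Nothing in your argument rules out $\mathrm{Newt}_{z_1,\lambda}(F)=kT'$ and $\mathrm{Newt}_{z_1,\lambda}(G)=(N_1-k)T'$; that the vertex coefficients of $P$ are units in $S$ only forces the vertex coefficients of $F,G$ to be units in $S$ (Laurent monomials in $z_2,\dots,z_d$), which is no contradiction. As a sanity check, already over $\C$ the Laurent polynomial $(z_1+z_1^{-1}-\lambda)^2$ has Newton polytope exactly $2T'$ and factors nontrivially. Projecting onto the other $(z_j,\lambda)$-planes does not save you either: the homothety ratio $c$ must be the same in each projection (it is the $\lambda$-degree of $F$ divided by $Q$), but the integrality constraint becomes $c\in\frac{1}{\gcd(N_1,\dots,N_d)}\Z$, and $\gcd(N_1,\dots,N_d)$ can exceed $1$ even when $\gcd(q_1,\dots,q_d)=1$ (e.g.\ $q=(2,3,4)$ gives $N=(12,8,6)$, $\gcd=2$).

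The paper takes a different route. It first proves the much harder \emph{Fermi}-variety irreducibility --- that for each fixed $\lambda$ the Laurent polynomial $\mathcal P(z,\lambda)$ is irreducible in $z$ (with the single exception $d=2$, $\lambda=[V]$) --- by analyzing the lowest-degree components of $\det(-z_1\cdots z_d(A+B))$ near $z=0$ and near $z_1=\cdots=z_{d-1}=0,\ z_d=\infty$, followed by a delicate degree count. Only then is the Bloch statement deduced by a short argument: if $\mathcal P=f_1f_2$ in $(z,\lambda)$, choose $\lambda$ large so that all $z$-coefficients of each $f_j$ are nonzero; Fermi-irreducibility at that $\lambda$ forces one factor, say $f_1$, to be a monomial in $z$, hence $f_1=t(\lambda)z^{a_0}$, and any root $\lambda_0$ of $t$ would make $\mathcal P(\cdot,\lambda_0)\equiv 0$, contradicting the known top $z_1$-term. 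Your polytope identification could be spliced into this framework (Fermi irreducibility at a single $\lambda_0$ would, via the bottom edge of $T$, force the homothety ratio $c\in\{0,1\}$), but as a stand-alone argument it does not close without the Fermi input.
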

	\begin{conjecture} ~\cite[Conjecture 5.35]{ksurvey} ~\cite[Conjecture 12]{kvcpde20} \label{conj1dc}
		Let  $d\geq 2$.  Then 
		$F_{\lambda}(V)/\Z^d$  is irreducible, possibly except for  finitely 
		many $\lambda\in \C$.
	\end{conjecture}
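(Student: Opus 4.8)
The plan is to transfer the problem to the Floquet picture and read irreducibility off the Newton polytope of a single determinant. First I would apply the Floquet transform: $H_0$ is unitarily equivalent to the family of $Q\times Q$ matrices $H_0(k)$ ($Q=q_1\cdots q_d$) whose entries are Laurent polynomials in $z_j:=e^{2\pi i k_j}$, and, modulo periodicity, $F_\lambda(V)$ becomes the zero set $\{p_\lambda=0\}\subset(\C^\ast)^d$ of $p_\lambda(z):=\det\!\big(H_0(k)-\lambda\big)\in\C[z_1^{\pm1},\dots,z_d^{\pm1}]$. Since Newton polytopes are additive under products and the units of this ring are precisely the monomials, every nontrivial factorization of $p_\lambda$ forces a nontrivial Minkowski decomposition of $N(p_\lambda)$; so the combinatorial heart of the matter is the claim that \emph{for every $\lambda$, $N(p_\lambda)$ is the cross-polytope $\operatorname{conv}\{\pm r_j\mathbf{e}_j:1\le j\le d\}$ with $r_j:=Q/q_j$, carrying nonzero constant coefficients at its $2d$ vertices}. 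I would establish this by a displacement count in the Leibniz expansion: the monomial $z_j^{r_j}$ can be produced only by using all $r_j$ of the ``$+\mathbf{e}_j$'' wrap-around entries of $H_0(k)$, after which the net $j$-th coordinate displacement across the fundamental domain forces the underlying permutation to be the cyclic shift $n\mapsto n+\mathbf{e}_j$, whose only contribution is $\pm1$; the same bookkeeping shows that no monomial of $p_\lambda$ leaves the cross-polytope. (Running the identical count for $\det(H_0(k)-\lambda)$ regarded as a polynomial in $z$ \emph{and} $\lambda$ shows its Newton polytope in $\C^{d+1}$ is the pyramid over the cross-polytope with apex at the $\lambda^Q$ vertex; I record this because it will handle the Bloch variety at once.)

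For $d\ge3$ the cross-polytope is Minkowski-indecomposable: all of its $2$-faces are triangles, in any summand the three edges of a triangular $2$-face must be scaled by a common factor, and the face structure is connected, so the only virtual summands are homothets. Hence in any factorization $p_\lambda=gh$ one of $N(g),N(h)$ is a point, i.e.\ one of $g,h$ is a unit, and $p_\lambda$ is irreducible for \emph{every} $\lambda$ — which is exactly why no exceptional levels survive in dimension $\ge3$. The pyramid over a cross-polytope is Minkowski-indecomposable in \emph{every} dimension (its triangular side-faces force uniform scaling, whence so does the base), so the same argument shows $\det(H_0(k)-\lambda)$ is irreducible in $\C[z^{\pm1}][\lambda]$, and therefore that $B(V)/\Z^d$ is irreducible, for all $d\ge2$; in particular $-\Delta+V$ has no flat bands.

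The real obstruction is $d=2$, where the cross-polytope is a quadrilateral which \emph{is} decomposable. Because $\gcd(r_1,r_2)=\gcd(q_2,q_1)=1$, its only nontrivial Minkowski decomposition is into the two primitive diagonal segments pointing in the directions $(r_1,\pm r_2)$; consequently $p_\lambda$ is reducible if and only if, for that particular $\lambda$, $p_\lambda$ is supported at exactly the four vertices of the diamond, and then the symmetry $p_\lambda(z^{-1})=p_\lambda(z)$ forces $p_\lambda=\alpha(z_1^{r_1}+z_1^{-r_1})+\beta(z_2^{r_2}+z_2^{-r_2})$ with $\alpha,\beta\in\{\pm1\}$ the vertex coefficients, a form which always factors into two binomials. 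So the question becomes: for which $(\lambda,V)$ does $\det(H_0(k)-\lambda)$ collapse to this four-term shape? I would settle it by examining the coefficient of $z_j^{r_j-1}$ in $\det(H_0(k)-\lambda)$ for an index $j$ with $r_j=q_{3-j}\ge2$ (one exists unless $V$ is constant, which is treated directly): achieving $z_j$-degree $r_j-1$ forces the cyclic shift with a single localized ``defect'', so this coefficient is an explicit expression in $z_{3-j}$ and $\lambda$, assembled from $\operatorname{tr}H_0(k)$ — which carries $\sum_nV(n)=Q\hat V$ — and related sub-determinants, and its identical vanishing, which the collapse demands, is possible only at the average $\lambda=\hat V:=\tfrac1Q\sum_nV(n)$. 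At $\lambda=\hat V$ the only admissible Newton polytopes for factors are the two indecomposable diagonal segments, so $p_{\hat V}$ splits into at most two irreducible pieces; two is attained, because for constant $V$ one has $z_1z_2\,p_{\hat V}=-(z_1+z_2)(1+z_1z_2)$. The hardest step is precisely this last one — converting ``$N(p_\lambda)$ decomposes in only one combinatorial way'' into the sharp statement that the split occurs exactly at the average of the potential and produces at most two components.
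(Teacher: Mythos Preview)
Your Newton-polytope approach is elegant and genuinely different from the paper's, but the $d\ge3$ step contains a real gap. You correctly argue that, since all $2$-faces of the weighted cross-polytope $P=\operatorname{conv}\{\pm r_j\mathbf{e}_j\}$ (with $r_j=Q/q_j$) are triangles, every Minkowski summand of $P$ is a homothet. But ``only homothetic summands'' is \emph{not} lattice-indecomposability, and it is the latter that would force one of $N(g),N(h)$ to be a point. Whenever $\gcd(r_1,\dots,r_d)>1$ the polytope $P$ is a nontrivial integer dilate of a smaller lattice cross-polytope and hence lattice-decomposable, and nothing in your argument then rules out a factorization $p_\lambda=fg$ with $N(f),N(g)$ proper homothets of $P$. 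This actually occurs under the paper's hypothesis $\gcd(q_1,\dots,q_d)=1$: e.g.\ $(q_1,q_2,q_3)=(2,2,3)$ gives $(r_1,r_2,r_3)=(6,6,4)$ with $\gcd=2$. The same defect hits your pyramid argument for $B(V)$, since the apex-to-base edges have direction $r_j(\mathbf{e}_j,-q_j)$ and are not primitive once $r_j>1$. Your $d=2$ reduction escapes precisely because there $\gcd(r_1,r_2)=\gcd(q_2,q_1)=1$, so the rhombus edges are primitive.

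What is missing is irreducibility of a \emph{facet polynomial} of $p_\lambda$: if $p_\lambda=fg$ with $N(f)=tP$, then on any facet $F$ one has $p_\lambda|_F=(f|_{tF})(g|_{(1-t)F})$, and irreducibility of $p_\lambda|_F$ forces $t\in\{0,1\}$. This is exactly the content of the paper's Lemma~\ref{key}: the polynomials $h_1,h_2$ are precisely the facet polynomials of $\mathcal{P}$ on two opposite-type facets of the cross-polytope (phrased there as lowest-degree components at $z\to0$ and at $z_1,\dots,z_{d-1}\to0,\ z_d\to\infty$), and their irreducibility is where the coprimality of the $q_j$ is actually used, via the non-coincidence \eqref{dckey} of the hyperplanes $\sum_j(\rho^j_{n_j}z_j)^{-1}=0$. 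So the paper's key lemma is the missing piece of your outline; once it is in place, your polytope framework and the paper's ``asymptotics at $0$ and $\infty$'' are two languages for the same mechanism. Your $d=2$ endgame is reasonable in spirit, but the assertion that the $z_j^{r_j-1}$ coefficient is controlled by $\operatorname{tr}H_0(k)$ and vanishes only at $\lambda=\hat V$ is not justified (several permutation types contribute, and you get a genuine polynomial in $\lambda$ whose root structure is not obvious); the paper instead first pins both factors to the explicit binomials \eqref{f1},~\eqref{g1} and then restricts the resulting identity \eqref{last} to $z_1=-z_2$ to read off $\lambda=[V]$.
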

We remark that in Conjecture 1, 
the irreducibility of  Bloch variety  modulo periodicity means 
for any two irreducible components $\Omega_1$ and $\Omega_2$  of    $B(V)$, there exists $k\in\Z^d$ such that $\Omega_1=(k,0)+\Omega_2$.
In Conjecture 2,  for fixed $\lambda$, 	$F_{\lambda}(V)/\Z^d$  is irreducible means 
for any two irreducible components $\Omega_1$ and $\Omega_2$  of    $F_{\lambda}(V)$, there exists $k\in\Z^d$ such that $\Omega_1=k+\Omega_2$.


	Conjectures \ref{conjb} and \ref{conj1dc}   have  been mentioned in  many  articles ~\cite{batcmh92,bktcm91,ktcmh90,GKTBook,bat1,kv06cmp}. It  seems  extremely hard to prove them, even for  ``generic"  periodic potentials.
	See  Conjecture 13 in ~\cite{kvcpde20} for a ``generic" version of Conjecture  \ref{conj1dc}.

	In this paper,  we will first  prove  both conjectures. For  any $d\geq3$,  we prove that  the Fermi variety at every level  is irreducible  (modulo periodicity).     For $d=2$,    we prove that  the Fermi variety at every level except for the average of the potential  is irreducible  (modulo periodicity).  We also prove that the Bloch variety  is irreducible  (modulo periodicity)  for any $d\geq 2$.

	\begin{theorem}\label{gcf1}
		Let $d\geq3$.  Then the Fermi variety  $F_{\lambda}(V)/\Z^d$ is irreducible for any $\lambda\in \C$.
		
	\end{theorem}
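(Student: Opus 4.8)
\emph{Proof proposal.} Fix $\lambda\in\C$ and write $Q=q_1q_2\cdots q_d$ and $Q_j=Q/q_j$. Passing to the Floquet--Bloch matrix, in the variables $z_j=e^{2\pi i k_j}\in\C^*$ the Fermi variety is the zero set of the Laurent polynomial
\[
p_\lambda(z)=\det\big(H_0(k)-\lambda\big)\in\C[z_1^{\pm1},\dots,z_d^{\pm1}],
\]
the characteristic polynomial of the $Q\times Q$ Floquet matrix $H_0(k)$. Since the $\Z^d$--action in the phrase ``modulo periodicity'' is trivial on the $z_j$'s, a covering--space argument shows that $F_\lambda(V)/\Z^d$ is irreducible precisely when $p_\lambda$ has a single irreducible factor in $\C[z_1^{\pm1},\dots,z_d^{\pm1}]$. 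Thus the theorem reduces to: for $d\ge 3$ and every $\lambda$, the Laurent polynomial $p_\lambda$ is irreducible.

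The plan is to control $p_\lambda$ through its Newton polytope. First I would compute $N(p_\lambda)$ \emph{exactly}. Expanding $\det(H_0(k)-\lambda)$ as a sum over permutations $\sigma$ of the fundamental domain $F=\prod_j\{0,\dots,q_j-1\}$, each term carries a Laurent monomial $z^{a(\sigma)}$ whose exponent is the total winding of $\sigma$ across the periodicity seams. A layer--counting argument --- for each direction $j$ the net flow of $\sigma$ is the same across all $q_j$ seams, while every cell of $F$ moves in at most one coordinate direction --- yields $\sum_j q_j|a_j(\sigma)|\le Q$ for every $\sigma$, i.e.\ $a(\sigma)$ lies in the stretched cross--polytope $\Diamond:=\operatorname{conv}(\pm Q_1e_1,\dots,\pm Q_de_d)$; the vertices $\pm Q_je_j$ are attained (by the cyclic shift of $F$ by one step in direction $j$) with a nonzero constant coefficient. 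Hence $N(p_\lambda)=\Diamond$ for every $\lambda$. Next I would look at the face polynomial $p_\lambda|_{\Phi}$ on a facet $\Phi=\operatorname{conv}(\epsilon_1Q_1e_1,\dots,\epsilon_dQ_de_d)$, $\epsilon\in\{\pm1\}^d$. Equality in the above inequality forces $\sigma$ to move \emph{every} cell of $F$, so the corresponding term of the determinant uses only the off--diagonal ($-1$ and $-z_j^{\pm1}$) entries and no diagonal entry; therefore $p_\lambda|_{\Phi}$ is independent of $\lambda$ and of $V$ and equals the corresponding face polynomial of the free Laplacian with the same periods. After the substitution $z_j=w_j^{q_j}$ (with $\zeta_j=e^{2\pi i/q_j}$) that face polynomial is $\prod_{m\in\prod_j\Z/q_j}\big(\sum_j\epsilon_j\zeta_j^{m_j}w_j\big)$, the norm of the linear form $\sum_j\epsilon_jw_j$ from $\C[w^{\pm1}]$ down to $\C[z^{\pm1}]$. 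Because $\gcd(q_1,\dots,q_d)=1$ these $Q$ linear forms are pairwise non--proportional, the Galois group $\prod_j\Z/q_j$ permutes them simply transitively, and so this norm --- that is, $p_\lambda|_{\Phi}$ --- is irreducible in $\C[z_1^{\pm1},\dots,z_d^{\pm1}]$.

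To finish I would combine these facts. For $d\ge 3$ the polytope $\Diamond$ is Minkowski--indecomposable: all its $2$--faces are triangles, the Minkowski ``strength'' of any summand must be constant on each triangle, the triangles are edge--connected, so the strength is globally constant and every Minkowski summand of $\Diamond$ is a homothet of $\Diamond$ (possibly a point). Now suppose $p_\lambda=fg$ with $f,g$ non--units. Then $\Diamond=N(f)+N(g)$, so $N(f)=\alpha\Diamond+t$ and $N(g)=(1-\alpha)\Diamond+t'$ for some $\alpha\in[0,1]$. If $\alpha\in\{0,1\}$ one factor is a monomial, hence a unit --- contradiction. If $\alpha\in(0,1)$, restrict to a facet $\Phi$: the face polynomial factors as $p_\lambda|_{\Phi}=\big(f|_{\alpha\Phi+t}\big)\big(g|_{(1-\alpha)\Phi+t'}\big)$, and since $\dim\Phi=d-1\ge 2$ and $0<\alpha<1$ both factors are non--monomial, contradicting irreducibility of $p_\lambda|_{\Phi}$. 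Hence $p_\lambda$ is irreducible and $F_\lambda(V)/\Z^d$ is irreducible.

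The hardest step will be the exact determination $N(p_\lambda)=\Diamond$: one must convert the winding combinatorics of permutations of the fundamental domain into a clean polytope identity, proving both the bound $\sum_jq_j|a_j(\sigma)|\le Q$ and the attainment of all $2d$ vertices with nonvanishing coefficients. It is here, together with the indecomposability of $\Diamond$, that $d\ge 3$ is used decisively: for $d=2$ the polytope $\Diamond$ is a rhombus, which \emph{is} Minkowski--decomposable, and the conclusion indeed fails exactly at $\lambda$ equal to the mean of $V$ --- so the planar case requires a separate, finer analysis of when the formal Minkowski splitting actually lifts to a factorization, which is precisely governed by the face polynomials singled out above.
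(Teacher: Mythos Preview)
Your argument is correct and genuinely different from the paper's. Both proofs share the same core lemma---the irreducibility, in $\C[z^{\pm1}]$, of each facet polynomial of $p_\lambda$, obtained by writing it (after $z_j=w_j^{q_j}$) as a norm of a linear form and using that the $Q$ linear factors are pairwise non-proportional and permuted transitively by $\prod_j\Z/q_j\Z$ (this is exactly the paper's Lemma~5.1 for $h_1,h_2$, which uses $\gcd(q_1,\dots,q_d)=1$ in the same way you do). Where you diverge is in how this lemma is leveraged. The paper fixes two asymptotic directions only (all $z_j\to0$, and all $z_j\to0$ except $z_d\to\infty$), shows that the zero set of any factor of $\mathcal{P}_1$ must accumulate at one of these two points, deduces there are at most two factors, and then for $d\ge3$ reaches a contradiction by a bare-hands degree count: the two ``lowest-degree parts'' $\tilde h_1,\tilde h_2$ force $\deg\tilde f+\deg\tilde g\ge(2d-1)Q>(d+1)Q$. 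You instead compute the full Newton polytope $N(p_\lambda)=\Diamond$, invoke Minkowski indecomposability of the (stretched) cross-polytope for $d\ge3$ (via the ``all $2$-faces are triangles'' criterion), and then restrict a putative factorization to any facet. Your route is more geometric and explains structurally why $d\ge3$ is the threshold---$\Diamond$ becomes a parallelogram when $d=2$---while the paper's route is more elementary, needing no polytope theory beyond degree bookkeeping. Your flow-conservation argument for $N(p_\lambda)\subseteq\Diamond$ is correct (the net crossing is indeed seam-independent), and the vertices are hit with nonzero coefficient, so $N(p_\lambda)=\Diamond$ holds for every $\lambda$.

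One small correction: your displayed formula for the facet polynomial is only right as written for the $(+,\dots,+)$ facet. For general $\epsilon$ the diagonal asymptotics of the matrix $A$ in Lemma~4.1 give, up to a unit, $\prod_{m}\big(\sum_j\zeta_j^{\epsilon_jm_j}w_j^{\epsilon_j}\big)$, i.e.\ $w_j^{\epsilon_j}$ rather than $\epsilon_jw_j$. This does not affect your irreducibility argument (the linear forms, now in the variables $w_j^{\epsilon_j}$, are still pairwise non-proportional and permuted simply transitively), and in any case the symmetry $z_j\leftrightarrow z_j^{-1}$ of $\mathcal{P}$ lets you reduce to a single facet.
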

	Denote by $[V]$ the average of $V$ over one periodicity cell, namely
	\begin{equation*}
	[V]=\frac{1}{q_1q_2\cdots q_d}\sum_{0\leq n_1\leq q_1-1\atop{\cdots\cdots\atop{0\leq n_d\leq q_d-1}}}V(n_1,n_2,\cdots,n_d).
	\end{equation*}
	
	\begin{theorem}\label{thm21}
		Let $d=2$.
		Then the  Fermi variety $F_{\lambda}(V)/\Z^2$  is irreducible for any $\lambda\in \C$ except  maybe  for $\lambda=[V] $. Moreover, if  $F_{[V]}(V)/\Z^2$ is reducible, it  has exactly two irreducible components.
	\end{theorem}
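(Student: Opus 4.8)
The plan is to reduce everything to the factorization behaviour of a single Laurent polynomial and then settle that by a Newton‑polygon argument which uses the standing hypothesis $\gcd(q_1,q_2)=1$.

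\emph{Step 1: algebraic reformulation.} Substituting $z_j=e^{2\pi ik_j}$, the Floquet–Bloch problem \eqref{spect_0}–\eqref{Fl} has a nonzero solution exactly when $\det(H_0(k)-\lambda)=0$, so $F_\lambda(V)$ is, in the $k$–picture, the full preimage under the universal cover $\exp\colon\C^2\to(\C^*)^2$ of the zero set in $(\C^*)^2$ of
\[
P_\lambda(z_1,z_2):=\det\!\big(H_0(k)-\lambda\big)\in\C[z_1^{\pm1},z_2^{\pm1}].
\]
Since the deck group $\Z^2$ acts by translation, the preimage of an irreducible subvariety of $(\C^*)^2$ is a single $\Z^2$–orbit of irreducible analytic sets; hence the number of irreducible components of $F_\lambda(V)/\Z^2$ equals the number of irreducible factors of $P_\lambda$ counted up to units, and units of $\C[z_1^{\pm1},z_2^{\pm1}]$ are exactly the monomials (this being the only remaining ambiguity, as algebraic and analytic irreducibility coincide on algebraic sets). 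So it suffices to show that $P_\lambda$ is, up to a unit, a power of one irreducible Laurent polynomial whenever $\lambda\neq[V]$, and a product of at most two non‑associate irreducibles when $\lambda=[V]$.

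\emph{Step 2: structure of $P_\lambda$.} Writing $H_0(k)-\lambda$ in block form along $\mathbf e_1$ ($q_1$ blocks of size $q_2$, with $z_1$ entering only through the corner blocks $-z_1^{\pm1}I_{q_2}$ and $z_2$ only inside the diagonal blocks) and inspecting the Leibniz expansion, one obtains: \textbf{(a)} the Newton polygon of $z_1^{q_2}z_2^{q_1}P_\lambda$ is the lattice ``diamond'' $N$ with vertices $(0,q_1),(q_2,0),(2q_2,q_1),(q_2,2q_1)$; \textbf{(b)} the coefficients of $P_\lambda$ at the four vertex monomials $z_1^{\pm q_2},z_2^{\pm q_1}$ are nonzero constants ($\pm1$, produced by the unique ``maximally winding'' permutations, which never meet a diagonal block); and \textbf{(c)} — the crucial point — there is an interior lattice point $p^\ast$ of $N$ (its location depending on $q_1,q_2$) at which the coefficient of $P_\lambda$ equals $\pm\,q_1q_2\,(\lambda-[V])$. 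In (c) the average $[V]$ enters because the permutations contributing to $p^\ast$ leave exactly one ``free'' slab, so their signed sum is affine in the diagonal entries and collects into a nonzero multiple of $\sum_{\mathrm{cell}}V-q_1q_2\lambda$.

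\emph{Step 3: Newton‑polygon combinatorics and conclusion.} Because $\gcd(q_1,q_2)=1$, every edge of $N$ is a single primitive lattice vector, and $N$ has exactly two edge directions, $\pm(q_2,q_1)$ and $\pm(q_2,-q_1)$, each of lattice length $1$. In a factorization $P_\lambda=u\,G_1\cdots G_m$ into a unit $u$ and irreducible non‑units $G_i$, the Newton polygons are Minkowski–additive and lengths of parallel edges add; since each direction contributes total length $1$, at most one $G_i$ has an edge in each of the two directions. Hence $m\le2$, each $N(G_i)$ is either a primitive segment in one of the two directions or (if $m=1$) a translate of $N$; and if $m=2$ the two factors are, up to monomials, binomials supported on opposite edges of $N$, so $z_1^{q_2}z_2^{q_1}P_\lambda$ is then supported on the four vertices of $N$ only. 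By (c) this forces $\pm q_1q_2(\lambda-[V])=0$, i.e. $\lambda=[V]$. Therefore, for every $\lambda\neq[V]$ we must have $m=1$, $P_\lambda$ is irreducible up to a unit, and $F_\lambda(V)/\Z^2$ is irreducible; while for $\lambda=[V]$ either $P_{[V]}$ is still irreducible (one component) or it splits into exactly the two edge‑binomials above, which are irreducible and non‑associate (their Newton polygons are segments of different directions) — giving exactly two irreducible components, as claimed.

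\emph{Main obstacle.} The heart of the proof is Step 2(c): identifying a specific interior coefficient of $\det(H_0(k)-\lambda)$ and proving it is \emph{precisely} a nonzero multiple of $\lambda-[V]$, rather than merely a polynomial in $\lambda$ that happens to vanish at $[V]$ among other roots; carrying this out requires determining which permutations in the Leibniz formula realize $p^\ast$ and bookkeeping their signed contributions, and the correct choice of $p^\ast$ (for instance $(q_2-1,0)$ when $q_1=1$, $(0,q_1-1)$ when $q_2=1$, and a point with both coordinates nonzero in general) is itself part of the work. The reduction in Step 1 and the combinatorics of Step 3 are routine once 2(a)–(b) are in place; note also that square‑freeness of $P_\lambda$ is not needed, since we only count irreducible components of the reduced variety.
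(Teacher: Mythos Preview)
Your Newton–polygon framework in Steps 1, 2(a)(b) and 3 is correct and is a genuinely different (and cleaner) route to the ``at most two factors'' statement than the paper's. The paper proves this by computing the lowest–degree part of $z_1^{Q}z_2^{Q}\tilde{\mathcal P}$ at the two ``corners'' $z_1=z_2=0$ and $z_1=0,\,z_2=\infty$ and showing these edge polynomials $h_1,h_2$ are irreducible (Lemmas~\ref{key} and~\ref{lesin}); your observation that each edge of the diamond has lattice length $1$ because $\gcd(q_1,q_2)=1$, together with Minkowski additivity, encodes exactly the same information with less computation. So far, so good.

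The gap is Step~2(c), and it is not merely a routine omission. Your precise claim --- that some interior monomial $p^\ast$ carries the coefficient $\pm q_1q_2(\lambda-[V])$ --- appears to fail already for $q_1=2,\,q_2=3$: using the block formula $\mathcal P=\det\big((D_0-\lambda)(D_1-\lambda)-s_1 I\big)$ with $s_1=2+z_1+z_1^{-1}$, one finds that the coefficient of $z_1^{q_2-1}z_2^{0}$ equals $3\lambda^{2}-6[V]\lambda+\sum_{j}V(0,j)V(1,j)$, which is quadratic in $\lambda$ and generically vanishes at two values unrelated to $[V]$. More generally, as soon as $q_1,q_2\ge 2$ the ``one free slab'' heuristic no longer isolates a single diagonal entry; the nearest interior lattice point to a vertex is several $\ell_1$–steps away, and the contributing permutations pick up products of diagonal entries and hence higher powers of $\lambda$. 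What you actually need is the weaker statement that \emph{all} non–vertex coefficients vanish only when $\lambda=[V]$, but that is essentially the content of the theorem and you have not indicated a mechanism to prove it. The paper handles this step by a different device: once the two factors are pinned down as the edge binomials (your Step~3 gives this too), it writes the resulting identity as \eqref{last}, restricts to the line $z_1=-z_2$ in the lifted variables (so that the $(0,0)$ diagonal factor of $A$ vanishes identically), and reads off the highest–degree surviving term of $\det(A+B)$, which is a nonzero constant times $\lambda\,z_1^{q_1q_2-1}$ --- forcing $\lambda=[V]$. To finish your argument you would need either to import that specialization trick or to find a genuinely new way to extract $\lambda=[V]$ from the four–term condition; Step~2(c) as written does not do this.
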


	\begin{theorem}\label{corbv1}
		Let $d\geq2$.   Then the Bloch variety  $B(V)$ is irreducible (modulo periodicity).
	\end{theorem}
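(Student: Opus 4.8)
To prove Theorem~\ref{corbv1} the plan is to reduce the statement to an algebraic factorisation question and feed in the Fermi‑variety irreducibility of Theorems~\ref{gcf1} and~\ref{thm21}. Writing $z_j=e^{2\pi i k_j}$ and reducing \eqref{spect_0}--\eqref{Fl} to the fundamental cell, the Bloch variety is the preimage, under the covering map $(k,\lambda)\mapsto(z,\lambda)$ with $z=(z_1,\dots,z_d)$, of the hypersurface $\{D(z,\lambda)=0\}\subset(\C^*)^d\times\C$, where $D(z,\lambda)=\det(H_0(z)-\lambda I)$ is a Laurent polynomial in $z_1^{\pm1},\dots,z_d^{\pm1}$ and a polynomial of degree $Q:=q_1\cdots q_d$ in $\lambda$ with leading coefficient $(-1)^Q$. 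Since this covering is a local biholomorphism whose deck group is exactly the $\Z^d$ appearing in the definition of ``modulo periodicity'', the irreducible components of $B(V)$ modulo periodicity are in bijection with the irreducible factors of $D$ up to unit monomials; hence it suffices to show that $D$ has a single irreducible factor. If all $q_j=1$ then $V$ is constant, $Q=1$, and $D=[V]-\lambda-\sum_{j}(z_j+z_j^{-1})$ is manifestly irreducible; so I may assume, after relabelling, that $q_1\ge2$.

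The crucial algebraic input is a leading‑coefficient computation. Expanding $D=\sum_{\sigma}\operatorname{sgn}(\sigma)\prod_{n}(H_0(z)-\lambda I)_{n,\sigma(n)}$, every matrix entry has $z_1$‑degree in $\{-1,0,1\}$, with $z_1$‑degree $1$ only for the ``$+\textbf{e}_1$ wrap‑around'' entries, whose rows are indexed by the $N_1:=Q/q_1$ cells with first coordinate $q_1-1$; consequently the $z_1$‑degree of each summand is at most $N_1$, and a short combinatorial argument shows that equality forces $\sigma$ to be the cyclic shift $\sigma(n)=n+\textbf{e}_1$. As $q_1\ge2$ this permutation is fixed‑point‑free, so it contributes no diagonal entry and hence no $\lambda$; a direct evaluation then gives that the coefficient of $z_1^{N_1}$ in $D$ is a nonzero constant, and likewise for $z_1^{-N_1}$ (statements of this kind are in the spirit of the directional compactifications in the earlier literature). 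Two consequences follow for any factorisation $D=u\prod_i p_i^{m_i}$ into distinct non‑associate irreducibles with $u$ a unit: (i) each $p_i$ has positive $\lambda$‑degree, since otherwise $p_i$ would divide the leading $\lambda$‑coefficient $(-1)^Q$; and (ii) each $p_i$ genuinely involves $z_1$ with leading $z_1$‑coefficient a nonzero constant, since otherwise $p_i^{m_i}$ would divide the constant coefficient of $z_1^{N_1}$ in $D$ inside the UFD $\C[z_2^{\pm1},\dots,z_d^{\pm1},\lambda]$, forcing $p_i$ to be a monomial. In particular $\deg_{z_1}p_i(\cdot,\lambda_0)=\deg_{z_1}p_i\ge1$ for \emph{every} $\lambda_0$, and for all but finitely many $\lambda_0$ no $p_i(\cdot,\lambda_0)$ is a monomial, so each $\{p_i(\cdot,\lambda_0)=0\}$ is then a nonempty hypersurface in $(\C^*)^d$.

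Assume for contradiction that $D$ has at least two non‑associate irreducible factors, and write $D=u\,D_1D_2$ with $D_1=p_1$ and $D_2=\prod_{i\ge2}p_i^{m_i}$; then $D_1,D_2$ are coprime non‑units, each involving $z_1$ with constant leading $z_1$‑coefficient. Pick $\lambda_0$ avoiding the finitely many exceptional values above, and (if $d=2$) with $\lambda_0\neq[V]$; then $D_1(\cdot,\lambda_0)$ and $D_2(\cdot,\lambda_0)$ are non‑monomials. Thus $F_{\lambda_0}(V)=\{D_1(\cdot,\lambda_0)=0\}\cup\{D_2(\cdot,\lambda_0)=0\}$ is a union of two nonempty hypersurfaces, while Theorem~\ref{gcf1} (for $d\ge3$) or Theorem~\ref{thm21} (for $d=2$, $\lambda_0\neq[V]$) forces $F_{\lambda_0}(V)$ to be irreducible; hence both sets coincide with one common irreducible hypersurface $\{g_{\lambda_0}=0\}$. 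Because $D_1(\cdot,\lambda_0)$ has nonzero constant leading $z_1$‑coefficient, $g_{\lambda_0}$ itself must involve $z_1$ (otherwise $D_1(\cdot,\lambda_0)$ would be a single power of $z_1$ with non‑constant leading coefficient). Therefore $D_1(\cdot,\lambda_0)$ and $D_2(\cdot,\lambda_0)$ share a common factor of positive $z_1$‑degree over $\C(z_2,\dots,z_d)$, so $\operatorname{Res}_{z_1}\big(D_1(\cdot,\lambda_0),D_2(\cdot,\lambda_0)\big)=0$; since the leading $z_1$‑coefficients are nonvanishing constants, this resultant equals $R(z_2,\dots,z_d,\lambda_0)$ where $R:=\operatorname{Res}_{z_1}(D_1,D_2)\in\C[z_2^{\pm1},\dots,z_d^{\pm1},\lambda]$. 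As this holds for infinitely many $\lambda_0$, $R\equiv0$. But $D_1,D_2$ are coprime in the UFD $\C[z_1^{\pm1},\dots,z_d^{\pm1},\lambda]$, hence coprime in $\C(z_2,\dots,z_d,\lambda)[z_1]$ by Gauss's lemma, so $R\not\equiv0$ — a contradiction. Therefore $D$ has a single irreducible factor, and $B(V)$ is irreducible modulo periodicity.

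The routine ingredients are the Floquet reduction and the commutative‑algebra bookkeeping (Gauss's lemma and the behaviour of resultants under specialisation when leading coefficients do not vanish). I expect the main obstacle to be the leading‑coefficient lemma for $\det(H_0(z)-\lambda I)$: one must exclude every permutation other than the cyclic shift from contributing to the top $z_1$‑degree, and one must carefully account for the coincidental extra hopping terms that arise when some $q_j$ equals $2$. It is exactly this lemma that supplies the two structural facts used throughout — that no irreducible factor of $D$ is independent of $z_1$, and that every factor is monic up to a nonzero constant in $z_1$.
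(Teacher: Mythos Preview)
Your proof is correct and rests on the same two ingredients as the paper: the Fermi irreducibility (Theorems~\ref{gcf1} and~\ref{thm21}) and the fact that the coefficient of $z_1^{Q/q_1}$ in $\mathcal{P}(z,\lambda)$ is a nonzero constant (this is exactly Fact~(2) at the end of Section~\ref{Sflo}; your combinatorial ``cyclic shift'' argument reproves it). One cosmetic slip: with $D=u\prod_i p_i^{m_i}$ you set $D_1=p_1$ and $D_2=\prod_{i\ge2}p_i^{m_i}$, which gives $D=uD_1D_2$ only when $m_1=1$; to keep $D_1,D_2$ coprime in general, take $D_1=p_1^{m_1}$ instead, and the rest of your argument goes through unchanged.

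The route, however, differs from the paper's proof of Theorem~\ref{thmbv1ne}. You specialise at \emph{all} generic $\lambda_0$: Fermi irreducibility forces $\{D_1(\cdot,\lambda_0)=0\}=\{D_2(\cdot,\lambda_0)=0\}$, hence $D_1(\cdot,\lambda_0)$ and $D_2(\cdot,\lambda_0)$ share a factor of positive $z_1$-degree, so $\operatorname{Res}_{z_1}(D_1,D_2)$ vanishes at infinitely many $\lambda_0$ and therefore identically, contradicting coprimality via Gauss's lemma. The paper instead picks a \emph{single} large $\lambda$ at which every monomial coefficient $t_j^a(\lambda)$ of each factor is nonzero; Fermi irreducibility at that one $\lambda$ forces one factor, say $f_1$, to be a $z$-monomial there, hence $f_1(z,\lambda)=t(\lambda)z^{a_0}$ for all $\lambda$, and a root $\lambda_0$ of $t$ makes $\mathcal{P}(z,\lambda_0)\equiv0$, contradicting the constant leading $z_1$-coefficient. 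The paper's argument is shorter and yields the slightly stronger conclusion that $\mathcal{P}$ is irreducible (not merely square-free with one prime factor), though for Theorem~\ref{corbv1} your conclusion already suffices. Your resultant approach is a perfectly good alternative and has the virtue of being entirely mechanical once the leading-coefficient lemma is in hand.
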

\begin{remark}
	\begin{enumerate}
		\item The special situation with  the Fermi variety at the  average level in Theorem \ref{thm21}  is not surprising. 
		When $d=2$,
		for  a constant function  $V$,   $F_{[V]}(V)/\Z^2$   has  two irreducible components.  
		\item We should mention that in Theorems \ref{gcf1},  \ref{thm21}, and   \ref{corbv1},  $V$ is allowed to be any complex-valued periodic function.  
			\item 	It  is easy to show  that   Conjecture \ref{conjb}  holds for $d=1$.  See p.18 in ~\cite{GKTBook} for a proof.
	\end{enumerate}

\end{remark}

	Significant progress in proving  those  Conjectures  
	 has been made for $d=2,3$.
	When  $d=2$,  Theorem \ref{corbv1}  was  proved by B{\"a}ttig  ~\cite{battig1988toroidal}.
	In ~\cite{GKTBook},  Gieseker, Kn\"orrer and Trubowitz proved  that 
	$F_{\lambda}(V)/\Z^2$ is irreducible except for finitely many values of  $\lambda$.
	When   $d=3$,  Theorem \ref{gcf1}   has been proved by B{\"a}ttig  ~\cite{batcmh92}.

	For continuous (rather than discrete)  periodic Schr\"odinger operators,  Kn\"orrer and Trubowitz proved  that  the Bloch variety    is irreducible  (modulo periodicity)
	when $d=2$  ~\cite{ktcmh90}.
	
	When the periodic  potential  is  separable,  B{\"a}ttig,  Kn\"orrer and Trubowitz proved that 
	the Fermi variety  at any level  is irreducible   (modulo periodicity) 
	for $d=3$  ~\cite{bktcm91}. 
	
	
	In  ~\cite{GKTBook,ktcmh90,bktcm91,bat1,batcmh92,battig1988toroidal},   proofs   heavily  depend on the construction of toroidal   and directional  compactifications of Fermi  and  Bloch varieties.
	
	A   new approach will be introduced in this  paper. 
	Instead of compactifications, we 
	focus on studying  the  Laurent polynomial  $\mathcal{P}$  arising from the eigen-equation \eqref{spect_0} and \eqref{Fl} after changing the variables.   
	We develop an approach to study the irreducibility of a class of    Laurent polynomials. 
	Firstly, we  show that the closure of the  zero set  of every factor of the   Laurent  polynomial   $\mathcal{P}$ must contain either $z_1=z_2=\cdots =z_d=0$ or $z_1=z_2=\cdots =z_{d-1}=0,z_d=\infty$.   Secondly, we prove that ``asymptotics" of the   Laurent  polynomial at  $z_1=z_2=\cdots =z_d=0$ and  $z_1=z_2=\cdots =z_{d-1}=0, z_d=\infty$ are irreducible. This allows us to conclude  that the   Laurent   polynomial  $\mathcal{P}$  has at most two non-trivial factors.
	Finally, we use degree arguments to show that the only case that $\mathcal{P}$  has    two factors is $d=2$ and $\lambda=[V]$, which completes the proof.   We mention that the irreducibility of  the  Laurent  polynomial allows  a difference of  monomials (see Def.\;\ref{de1}), same issue applies to the  calculations of ``asymptotics". This creates an extra difficulty  in the degree arguments. 
	We introduce  a   polynomial $\mathcal{P}_1$ based on the Laurent polynomial $\mathcal{P}$  multiplying   by a proper monomial.  Delicately playing between the polynomial $\mathcal{P}_1$   and the Laurent polynomial $\mathcal{P}$ is another significant ingredient 	to make the whole proof work.
	
	Although the proof is  written for   Laurent  polynomials coming from the Fermi variety of discrete periodic Schr\"odinger  operators, 
	it  works for a larger class of     Laurent  polynomials. Some ideas  developed in the proof    have been  extended to study 
	  the irreducibility of the Bloch variety in  more general settings ~\cite{flm}.

Irreducibility is a powerful tool to study many aspects of  the  spectral theory of  periodic operators. 
	Let $Q=q_1q_2\cdots q_d$. Assume  that $V$ is a real valued periodic potential. Thus $H_0=-\Delta+V$ is a self-adjoint operator on $\ell^2(\Z^d)$ and its 
	 spectrum 
	\begin{equation}\label{gband}
	\sigma (H_0)={\bigcup }_{m= 1}^{Q}[a_m,b_m]
	\end{equation}
	is the union of spectral bands $[a_m,b_m]$, $m=1,2,\cdots, Q$,  which is the range of a band function $\lambda_m (k)$, $k\in\R^d$. See Section \ref{Sdis} for the precise definition of $\lambda_m(k)$. 
	
	The structure of extrema of band functions plays a  significant role  in many problems, such as  homogenization theory, Green's function asymptotics and Liouville type theorems. We refer readers to ~\cite{kusu01,ksurvey,fk18,col91,dks} and references therein for more details.
	
	It is well known and widely believed  that generically the band functions are Morse functions.  The following conjecture gives a precise description.
	\begin{conjecture}~\cite[Conjecture 5.25]{ksurvey}~\cite[Conjecture 5.1]{kp07}~\cite[Conjecture 5]{dks}\label{cband}
		Generically (with respect to the potentials and other free parameters of the operator), the extrema of band functions
		\begin{itemize}
			\item[(1)]are attained by a single band;
			\item[(2)] are isolated;
			\item[(3)]are nondegenerate, i.e., have nondegenerate Hessians.
		\end{itemize}

	\end{conjecture}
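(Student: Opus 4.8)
The plan is to deduce Conjecture~\ref{cband} from the irreducibility theorems of this paper, together with the level-set dimension bounds for band-edge values, by a parametrized genericity argument. Fix the period lattice $\Gamma$ and set $Q=q_1q_2\cdots q_d$, so that the self-adjoint operators in the family are parametrized by the real potentials $V\in\R^Q\subset\C^Q$. After the substitution $z_j=e^{2\pi ik_j}$ the Bloch variety $B(V)$ is the zero set of a Laurent polynomial $\mathcal{P}_V(z,\lambda)$ whose coefficients are polynomials in $V$, and by Theorem~\ref{corbv1} this $\mathcal{P}_V$ is irreducible --- hence squarefree, after removing monomial factors --- for \emph{every} $V$. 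The three assertions of the conjecture are, in order: (i)~no two distinct band functions attain a common band-edge value at a common quasimomentum (and, likewise, distinct bands do not share a common edge value, a further codimension-one condition on $V$); (ii)~each band-edge value is attained on a finite set of quasimomenta; (iii)~the active band function at such a quasimomentum has nondegenerate Hessian. Since (iii) together with (i) yields (ii) by the Morse lemma, it suffices to treat (i) and (iii). For each, the strategy is: \textbf{(a)}~show the set of ``bad'' potentials lies in a proper algebraic subvariety of $\C^Q$ --- hence in $\R^Q$ a set of Lebesgue measure zero with dense open complement --- by bounding the dimension of a suitable algebraic incidence variety; and \textbf{(b)}~verify that this subvariety is proper by exhibiting one potential at which the property holds. (For $d=2$ the single exceptional value $\lambda=[V]$ of Theorem~\ref{thm21} is, for generic $V$, interior to a spectral band and so irrelevant to band edges, and part~(ii) for $d=2$ is already settled unconditionally by the level-set bounds of this paper.)

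\textbf{Step 1: simplicity of band edges.} If a band-edge value $\lambda_0$ is attained by two distinct bands at a common quasimomentum $k_0$, then $(z_0,\lambda_0)$ solves the $d+2$ equations $\mathcal{P}_V=\partial_\lambda\mathcal{P}_V=\partial_{z_1}\mathcal{P}_V=\cdots=\partial_{z_d}\mathcal{P}_V=0$ on $(\C^*)^d\times\C$; squarefreeness of $\mathcal{P}_V$, which is exactly what Theorem~\ref{corbv1} provides, is what forces the genuine double-root condition $\partial_\lambda\mathcal{P}_V=0$ here rather than a repeated-factor artifact. Let $\mathcal{Z}_1\subset\C^Q\times(\C^*)^d\times\C$ be the variety defined by these $d+2$ equations. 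The crucial point is $\dim\mathcal{Z}_1\le Q-1$, which I would obtain by showing that the jet map $(V,z,\lambda)\mapsto\bigl(\mathcal{P}_V,\partial_\lambda\mathcal{P}_V,\partial_{z_1}\mathcal{P}_V,\ldots,\partial_{z_d}\mathcal{P}_V\bigr)(z,\lambda)$ is a submersion in the $V$-directions at every relevant point --- using the concrete structure of $\mathcal{P}_V$ as a determinant carrying $V$ on the diagonal --- so that $\mathcal{Z}_1$ is smooth of codimension $d+2$. Then $\mathcal{Z}_1\to\C^Q$ is not dominant, its image lies in a proper subvariety $S_1$, and every real potential off $S_1\cap\R^Q$ has only simple band edges. (Equivalently, a double band edge would be a singular point of the irreducible Fermi variety $F_{\lambda_0}(V)$ of Theorems~\ref{gcf1}--\ref{thm21}, and such singular points are controlled by the dimension bounds of this paper.)

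\textbf{Step 2: nondegenerate Hessians.} Outside $S_1$ the active band function $\lambda_m$ is a simple eigenvalue branch, real-analytic near each edge point $k_0$, so $\partial_\lambda\mathcal{P}_V(z_0,\lambda_0)\ne0$ and $k_0$ solves the $d+1$ equations $\mathcal{P}_V=\partial_{z_1}\mathcal{P}_V=\cdots=\partial_{z_d}\mathcal{P}_V=0$; by the implicit function theorem $\operatorname{Hess}\lambda_m(k_0)$ is an explicit rational function of the partials of $\mathcal{P}_V$ at $(z_0,\lambda_0)$ of order $\le2$, so ``$\det\operatorname{Hess}\lambda_m(k_0)=0$ at some edge point'' imposes exactly one more polynomial equation. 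The incidence variety $\mathcal{Z}_3$ is then cut out by $d+2$ equations in $\C^Q\times(\C^*)^d\times\C$, and the same submersion computation gives $\dim\mathcal{Z}_3\le Q-1$, with proper image $S_3\subset\C^Q$. To verify $S_1\cup S_3\ne\C^Q$ I would use a \emph{generic separable} potential $V(n)=\sum_{j=1}^{d}V_j(n_j)$: then $-\Delta+V$ is a Kronecker sum of one-dimensional operators, the Bloch eigenvalues are sums of one-dimensional Bloch eigenvalues, band edges are sums of one-dimensional band edges, critical points of the band functions are products of one-dimensional critical points, and the Hessian is block-diagonal with the (generically nonzero) second derivatives of the one-dimensional band functions as blocks; simplicity of the edges likewise reduces to a generic non-coincidence among one-dimensional edge values. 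Invoking the classical one-dimensional band theory (and the separable irreducibility results mentioned in the Introduction) then exhibits such a potential outside $S_1\cup S_3$. Hence for $V$ real outside the proper real-algebraic set $(S_1\cup S_3)\cap\R^Q$ all three assertions of Conjecture~\ref{cband} hold.

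\textbf{The main obstacle.} The decisive step is establishing that $\mathcal{Z}_1$ and $\mathcal{Z}_3$ really have codimension $d+2$, that is, that the relevant jet map is a submersion in the potential variables at \emph{every} point --- including over special loci in $(z,\lambda)$-space where $\mathcal{P}_V$ and its derivatives might be forced to share factors. This is where irreducibility of the Bloch variety must be converted into quantitative non-degeneracy: irreducibility (Theorem~\ref{corbv1}) removes the cheap obstruction --- for a reducible $\mathcal{P}_V$, $\partial_\lambda\mathcal{P}_V$ would vanish on a whole component and destroy the count --- but ruling out accidental degeneracies along the real locus $\{|z_j|=1\}$ still requires a parametrized transversality statement for the family $(V,k)\mapsto\lambda_m(k)$ on $\R^Q\times\T^d$, anchored by the explicit non-degeneracy of the separable base case of Step~2 and by the level-set dimension bounds proved here. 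Upgrading the ``bad'' locus from merely subanalytic to genuinely algebraic --- so that ``proper subvariety'' turns Baire-category genericity into full-measure genericity --- is the crux of the program; I would expect the nearly-constant (small $Q$) potentials to be handled by direct computation and the bulk of the effort to go into this transversality statement.
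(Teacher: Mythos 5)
The statement you are trying to prove is not a theorem of this paper at all: Conjecture~\ref{cband} is quoted from the literature as background, and the paper deliberately does not claim it. What the paper actually establishes is much weaker and non-generic: item (1) is attributed to Klopp--Ralston, and for item (2) the paper proves (Theorems~\ref{thmex2} and \ref{thmex3}, via Theorem~\ref{thmextrem}, square-freeness of $\mathcal{P}_1$, and B\'ezout/dimension arguments) that for \emph{every} real periodic $V$ the level set of any extremum is finite ($d=2$) or of dimension at most $d-2$ ($d\ge 3$); nothing is proved about nondegenerate Hessians. So there is no proof in the paper against which your argument could be matched, and a correct submission here would have been to prove the paper's actual results (the level-set bounds), not the open conjecture.

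As a proof of the conjecture itself, your proposal has a genuine gap, which you yourself flag: everything hinges on the claim that the incidence varieties $\mathcal{Z}_1,\mathcal{Z}_3$ have codimension $d+2$ in $\C^Q\times(\C^\star)^d\times\C$, i.e.\ that the jet map is a submersion in the $V$-directions at \emph{every} relevant point, and on the properness of the resulting discriminant $S_1\cup S_3$. Neither is established; the irreducibility/square-freeness results of the paper only rule out the ``repeated factor'' degeneracy and do not give the needed parametrized transversality (indeed, for operators on other lattices such transversality can fail, cf.\ the diatomic example mentioned after Theorem~\ref{thmex2}, so some genuinely new input about this specific family is required). There are also secondary unproved steps: the Hessian-degeneracy condition is a rational expression whose denominator $\partial_\lambda\mathcal{P}_V$ must be controlled before ``one more polynomial equation'' makes sense; ``extremum'' is not an algebraic condition, so your varieties capture all critical points and you must still argue the edge points you care about lie in them with the claimed codimension; and the separable base case (simplicity of edges and nonvanishing second derivatives of 1D band functions, plus non-coincidence of sums of 1D edge values after reordering into bands $\lambda_m$) is asserted rather than verified. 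As written, this is a plausible research program, not a proof, and it does not reproduce (or substitute for) the arguments the paper actually gives for its extrema results.
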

The statement (1) of Conjecture   \ref{cband} was proved in ~\cite{kr}.
	Some progress has been made towards   Conjecture \ref{cband}  at the bottom of the spectrum ~\cite{ks87} or small potentials ~\cite{col91}. 
	Recently, a celebrated work of  Filonov and Kachkovskiy ~\cite{fk18} proves that for a  wide class (not ``generic") of 2D periodic elliptic operators (continuous version), the global extrema of all spectral band functions are isolated.

	 
		
	
	As an application of  the  irreducibility\footnote{Indeed,  a  much weaker assumption is sufficient for our arguments. See Remark \ref{relast}.} (Theorem \ref{thm21}) and Theorem \ref{thmextrem} in Section \ref{Smain}, we  are able to prove  a  stronger version (work for all extrema) of Filonov and Kachkovskiy's results ~\cite{fk18}
	in  the discrete settings.
	The advantage for discrete cases is that the Fermi variety is   algebraic in Floquet variables $ e^{2\pi ik_j}$, $j=1,2,\cdots,d$ which allows us to  use B\'ezout's theorem to do the proof.

	\begin{theorem}\label{thmex2}
		Let $d=2$.  Let $\lambda_{*}$ be an extremum  of   $\lambda_m(k)$, $k\in [0,1)^2,  m=1,2,\cdots,Q$. Then the level set
		\begin{equation}\label{last18}
		\{k\in[0,1)^2:  \lambda_{{m}}(k)=\lambda_{*} \}
		\end{equation}
		has cardinality at most $4(q_1+q_2)^2$.
	\end{theorem}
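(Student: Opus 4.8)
The plan is to turn the statement into an upper bound for the number of singular points of a plane curve and then obtain that bound from B\'ezout's theorem, using Theorem~\ref{thm21} together with Theorem~\ref{thmextrem} to supply the finiteness that makes B\'ezout applicable. First I set up coordinates: applying the Floquet--Bloch transform, for each $k\in\C^2$ the equations \eqref{spect_0}--\eqref{Fl} become an eigenvalue equation for a $Q\times Q$ matrix $H_0(k)$ whose entries are Laurent monomials in $z_1=e^{2\pi ik_1}$ and $z_2=e^{2\pi ik_2}$, and, modulo $\Z^2$, the Fermi variety $F_\lambda(V)$ is the zero set in $(\C^{*})^2$ of the Laurent polynomial $\mathcal{P}(z_1,z_2,\lambda)=\det(\lambda I-H_0(k))$, up to a monomial factor. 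Clearing denominators, set $\mathcal{P}_1=z_1^{q_2}z_2^{q_1}\mathcal{P}$. Only $q_2$ bonds of the discrete Laplacian cross the period cell in the first coordinate and only $q_1$ cross it in the second, so every monomial occurring in $\det(\lambda I-H_0(k))$ has $z_1$-exponent in $[-q_2,q_2]$ and $z_2$-exponent in $[-q_1,q_1]$; hence $\mathcal{P}_1$ is an ordinary polynomial with $\deg_{z_1}\mathcal{P}_1\le2q_2$ and $\deg_{z_2}\mathcal{P}_1\le2q_1$, so in particular the specialization $\mathcal{P}_1(z_1,z_2,\lambda_{*})$ has total degree at most $2(q_1+q_2)$.

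The crucial reduction is that every $k$ in the level set \eqref{last18} gives a \emph{singular} point of the curve $\{\mathcal{P}_1(\cdot,\lambda_{*})=0\}$. Fix such a $k^{(0)}$; as $\lambda_{*}$ is the maximum (or minimum) of $\lambda_m$ over the torus, $k^{(0)}$ is a global maximizer (resp.\ minimizer) of $\lambda_m$. If $\lambda_{*}$ is a simple eigenvalue of $H_0(k^{(0)})$ this is immediate from the implicit function theorem: there is a holomorphic branch $\mu(z)$ with $\mathcal{P}(z,\mu(z))\equiv0$ near $z^{(0)}=(e^{2\pi ik^{(0)}_1},e^{2\pi ik^{(0)}_2})$ and $\mu(e^{2\pi ik})=\lambda_m(k)$ for $k$ near $k^{(0)}$, so $\nabla_k\lambda_m(k^{(0)})=0$ yields $\nabla_z\mu(z^{(0)})=0$ (chain rule, using $z^{(0)}_j\ne0$), and differentiating $\mathcal{P}(z,\mu(z))\equiv0$ gives $\nabla_z\mathcal{P}(z^{(0)},\lambda_{*})=0$; since the monomial $z_1^{q_2}z_2^{q_1}$ is a unit at $z^{(0)}$, the point $z^{(0)}$ is then a singular point of $\{\mathcal{P}_1(\cdot,\lambda_{*})=0\}$. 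The remaining case --- the extremum occurring at a band crossing, so that $\lambda_{*}$ is a degenerate eigenvalue of $H_0(k^{(0)})$ --- is exactly what Theorem~\ref{thmextrem} is designed to handle, and I would invoke it here. Since distinct $k\in[0,1)^2$ give distinct $z^{(0)}$ on the unit torus, the cardinality of \eqref{last18} is at most the number of singular points of $\{\mathcal{P}_1(\cdot,\lambda_{*})=0\}$ lying in $(\C^{*})^2$.

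It remains to bound the number of such singular points. By Theorem~\ref{thm21}, $F_{\lambda_{*}}(V)/\Z^2$ is irreducible when $\lambda_{*}\ne[V]$ and has at most two irreducible components when $\lambda_{*}=[V]$, so after discarding the monomial factors $z_1$ and $z_2$ the polynomial $\mathcal{P}_1(\cdot,\lambda_{*})$ has at most two distinct irreducible factors; moreover, the extremality of $\lambda_{*}$ forces $\{\mathcal{P}_1(\cdot,\lambda_{*})=0\}$ to be reduced in $(\C^{*})^2$ --- this is the substantive content of Theorem~\ref{thmextrem}, and it is precisely what excludes a positive-dimensional level set (for which a far weaker hypothesis than full irreducibility is needed; cf.\ Remark~\ref{relast}). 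Hence the singular locus of $F_{\lambda_{*}}(V)$ in $(\C^{*})^2$ is finite and is contained in the common zero set of the two polynomials $\partial_{z_1}\mathcal{P}_1(\cdot,\lambda_{*})$ and $\partial_{z_2}\mathcal{P}_1(\cdot,\lambda_{*})$, each of total degree at most $2(q_1+q_2)-1$. B\'ezout's theorem in $\mathbb{P}^2$ therefore bounds the isolated common zeros of these two polynomials --- and hence the cardinality of \eqref{last18} --- by $(2(q_1+q_2)-1)^2\le4(q_1+q_2)^2$.

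The main obstacle is the middle step at band crossings: proving that an extremal value $\lambda_{*}$ forces $\{\mathcal{P}_1(\cdot,\lambda_{*})=0\}$ to be both singular and reduced even when $\lambda_{*}$ is a degenerate eigenvalue of $H_0(k^{(0)})$ --- equivalently, ruling out the scenario in which $H_0(k)$ carries a constant degenerate eigenvalue $\lambda_{*}$ along an entire curve, which would make the level set infinite. This is not accessible by soft arguments; it is exactly what Theorem~\ref{thmextrem} is built to provide (and it is consonant with the single-band property of extrema of~\cite{kr}). Once that input is granted, the degree count for $\mathcal{P}_1$ and the B\'ezout estimate are routine.
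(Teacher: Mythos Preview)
Your approach is the paper's: use Theorem~\ref{thmextrem} to place the level set inside the singular locus of the curve $\{\mathcal{P}_1(\cdot,\lambda_*)=0\}$, use square-freeness of $\mathcal{P}_1(\cdot,\lambda_*)$ to make that locus finite, and bound it by B\'ezout. Two points need tightening, though.

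First, you misattribute the source of reducedness. The sentence ``the extremality of $\lambda_{*}$ forces $\{\mathcal{P}_1(\cdot,\lambda_{*})=0\}$ to be reduced \ldots\ this is the substantive content of Theorem~\ref{thmextrem}'' is not right: Theorem~\ref{thmextrem} says nothing about square-freeness, and extremality plays no role there. Square-freeness of $\mathcal{P}_1(\cdot,\lambda)$ holds for \emph{every} $\lambda$ and comes from Theorem~\ref{thm21} (indeed the paper cites the more primitive Lemma~\ref{lesin} at this step; cf.\ Remark~\ref{relast}). Theorem~\ref{thmextrem} supplies only the inclusion of the level set in the singular locus.

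Second, your B\'ezout step intersects $\partial_{z_1}\mathcal{P}_1$ with $\partial_{z_2}\mathcal{P}_1$, but you have not checked that these two share no common factor, and square-freeness of $\mathcal{P}_1$ does not by itself guarantee this (e.g.\ for $f(z_1+z_2)$ with $f$ separable the two partials coincide). The clean fix, which is what the paper does, is to bound instead $\{\mathcal{P}_1=0,\ \partial_{z_1}\mathcal{P}_1=0\}$: since $\mathcal{P}_1$ is square-free and its top $z_1$-term is $z_1^{2q_2}$ (so no irreducible factor is independent of $z_1$), $\mathcal{P}_1$ and $\partial_{z_1}\mathcal{P}_1$ are coprime, and B\'ezout gives at most $2(q_1+q_2)\cdot 2(q_1+q_2)=4(q_1+q_2)^2$ common zeros. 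With these two adjustments your argument is exactly the paper's.
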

	In particular, Theorem \ref{thmex2} shows that any  extremum  of any band function can only be attained at  finitely many   points, which  is   a stronger version (not ``generic") than  the statement  (2) of Conjecture \ref{cband}.

	It is worth  pointing out that  Theorem \ref{thmex2} may not hold   for   discrete periodic Schr\"odinger operators  on  a diatomic lattice in $\Z^2$ ~\cite{fk18}.
	\begin{theorem}\label{thmex3}
		Let $d\geq 3$.  Let $\lambda_{*}$ be an extremum of  $\lambda_m(k)$, $k\in [0,1)^d, m=1,2,\cdots,Q$. Then the level set
		\begin{equation*}
		\{k\in[0,1)^d:  \lambda_{{m}}(k)=\lambda_{*}\}
		\end{equation*}
		has dimension at most $d-2$.
	\end{theorem}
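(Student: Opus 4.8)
The plan is to transport the problem into the algebraic setting of the Laurent polynomial $\mathcal{P}$ of Section \ref{Smain} and then let Theorem \ref{gcf1} do the work. Passing to the Floquet variables $z_j=e^{2\pi ik_j}$ turns the complex Fermi variety $F_{\lambda_*}(V)$ into the zero set $\widetilde F=\{z\in(\C^\ast)^d:\mathcal{P}(z,\lambda_*)=0\}$; the covering map $k\mapsto z$ has deck group exactly $\Z^d$, so $\widetilde F$ realizes $F_{\lambda_*}(V)/\Z^d$. Since $\lambda_*$ is an eigenvalue of some Floquet matrix but not of all of them, $\mathcal{P}(\cdot,\lambda_*)$ is neither zero nor a unit, so $\widetilde F$ is a nonempty hypersurface of pure complex dimension $d-1$, and by Theorem \ref{gcf1} it is irreducible. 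The real torus $[0,1)^d$ corresponds to $\mathbb{T}^d=\{|z_1|=\cdots=|z_d|=1\}$, which is the fixed-point set of the antiholomorphic involution $\sigma(z)=(1/\bar z_1,\dots,1/\bar z_d)$; as $V$ is real and $\lambda_*\in\R$, the Floquet matrix satisfies $H_0(z)^\ast=H_0(\sigma(z))$ and hence $\det(H_0(\sigma(z))-\lambda_*)=\overline{\det(H_0(z)-\lambda_*)}$, so $\widetilde F$ is $\sigma$-invariant; therefore so is its singular locus $\widetilde F_{\mathrm{sing}}$.

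The crucial input, which I would establish as a separate statement and which is exactly the content of Theorem \ref{thmextrem}, is that an \emph{extremal} value pins the level set inside the singular locus:
\[
\{k\in[0,1)^d:\lambda_m(k)=\lambda_*\}\ \subseteq\ \widetilde F_{\mathrm{sing}}\cap\mathbb{T}^d .
\]
The mechanism: $[0,1)^d$ has no boundary, so the extremum is interior and $\nabla_k\lambda_m(k_0)=0$ at every point $k_0$ of the level set. If $\lambda_*$ is a simple eigenvalue at the corresponding $z_0$, then near $z_0$ one has $\mathcal{P}(z,\lambda)=(\lambda-\mu(z))\,\mathcal{Q}(z,\lambda)$ with $\mathcal{Q}(z_0,\lambda_*)\neq0$, so $\widetilde F$ is locally the graph-hypersurface $\{\mu(z)=\lambda_*\}$ where $\mu$ is the holomorphic continuation of $\lambda_m$; since $\mu$ is real-valued on the totally real $\mathbb{T}^d$, its holomorphic differential at $z_0$ coincides with the real differential of $\lambda_m$ at $k_0$, which vanishes, so $d\mu(z_0)=0$. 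Because the square-free part of $\mathcal{P}$ is irreducible in $\lambda$ (a consequence of Theorem \ref{corbv1}), the vanishing order of $\mathcal{P}(\cdot,\lambda_*)$ along $\widetilde F$ is constant, and $d\mu(z_0)=0$ then forces $z_0\in\widetilde F_{\mathrm{sing}}$. The genuinely delicate situation — and the main obstacle — is when $\lambda_*$ is a \emph{multiple} eigenvalue (bands crossing at the extremum), possibly along all of $\widetilde F$; excluding a large level set there is where the argument of Theorem \ref{thmextrem} has to do real work.

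Granting the displayed inclusion the theorem follows by bookkeeping. Since $\widetilde F$ is irreducible of pure dimension $d-1$, its singular locus is a proper analytic subset, so $\dim_\C\widetilde F_{\mathrm{sing}}\le d-2$. The fixed-point set of an antiholomorphic involution on a complex analytic set of complex dimension $e$ is a real-analytic set of real dimension at most $e$ (stratify into smooth strata and linearize the involution on each). Hence
\[
\dim_\R\{k\in[0,1)^d:\lambda_m(k)=\lambda_*\}\ \le\ \dim_\R\big(\widetilde F_{\mathrm{sing}}\cap\mathbb{T}^d\big)\ \le\ \dim_\C\widetilde F_{\mathrm{sing}}\ \le\ d-2 ,
\]
as claimed. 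The identical scheme gives Theorem \ref{thmex2}: when $d=2$ the set $\widetilde F_{\mathrm{sing}}$ is finite, and bounding its cardinality by the classical count for singular points of a plane curve in terms of the $z_1$- and $z_2$-degrees of $\mathcal{P}(\cdot,\lambda_*)$ yields the explicit constant $4(q_1+q_2)^2$.
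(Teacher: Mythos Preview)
Your approach is the paper's: invoke Theorem~\ref{thmextrem} to place the level set inside the singular locus of the Fermi variety, bound the complex dimension of that singular locus by $d-2$ using irreducibility, and then bound the real dimension by the complex one. Two comments. First, as the paper notes in Remark~\ref{relast}, full irreducibility is more than is required here: square-freeness of $\mathcal{P}_1(\cdot,\lambda_*)$, which already follows from Lemma~\ref{lesin}, suffices to make $\{\mathcal{P}=0,\ \nabla_z\mathcal{P}=0\}$ a proper subvariety of complex dimension $\le d-2$. Second, your assessment of the difficulty in Theorem~\ref{thmextrem} is inverted: the multiple-eigenvalue case is the \emph{easy} one, since multiplicity $m_1\ge2$ at $k_0$ forces $P(\cdot,\lambda_*)$ to vanish to order $\ge2$ there (Kato--Rellich along each coordinate axis gives the factorization~\eqref{gc13}), so $\nabla_kP(k_0,\lambda_*)=0$ holds automatically with no use of the extremum hypothesis; only in the simple-eigenvalue case is the extremum condition actually needed, and there your sketch is correct.
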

	Since  the edge of each spectral band  is an extremum of the band function,  immediately we have  the following two corollaries.
	\begin{corollary}\label{corband2}
		Let $d=2$.  
		Then both  level sets
		\begin{equation*}
		\{k\in[0,1)^2:  \lambda_{{m}}(k)=a_m \} 
		\text{ and }
		\{k\in[0,1)^2:  \lambda_{{m}}(k)=b_m\} 
		\end{equation*}
		have  cardinality at most $4(q_1+q_2)^2$.
	\end{corollary}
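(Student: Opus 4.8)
The plan is to obtain this as an immediate consequence of Theorem~\ref{thmex2}, once one observes that spectral band edges are genuine extrema of the corresponding band function. First I would recall, from the band decomposition \eqref{gband} and the definition of the band functions $\lambda_m(k)$ (as given in Section~\ref{Sdis}), that the $m$-th spectral band $[a_m,b_m]$ is exactly the range of $\lambda_m(k)$ as $k$ varies over $[0,1)^2$, or equivalently over the torus $\R^2/\Z^2$, on which $\lambda_m$ is continuous. Since $[0,1)^2$ (identified with the compact torus) carries a continuous function $\lambda_m$, the values
\[
a_m=\min_{k\in[0,1)^2}\lambda_m(k),\qquad b_m=\max_{k\in[0,1)^2}\lambda_m(k)
\]
are both attained. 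In particular $a_m$ is a global (hence local) minimum and $b_m$ a global (hence local) maximum of $\lambda_m(k)$, so each of $a_m$ and $b_m$ is an extremum of $\lambda_m(k)$, $k\in[0,1)^2$, in the precise sense used in the statement of Theorem~\ref{thmex2}.

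With that identification in hand, the remainder is just an application of the already-proved theorem. I would apply Theorem~\ref{thmex2} once with $\lambda_*=a_m$ and once with $\lambda_*=b_m$; each application states that the level set $\{k\in[0,1)^2:\lambda_m(k)=\lambda_*\}$ has cardinality at most $4(q_1+q_2)^2$. Taking $\lambda_*=a_m$ gives the bound for $\{k\in[0,1)^2:\lambda_m(k)=a_m\}$ and $\lambda_*=b_m$ gives it for $\{k\in[0,1)^2:\lambda_m(k)=b_m\}$, which is exactly the assertion of the corollary.

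Since all of the substantive work—the irreducibility input from Theorem~\ref{thm21} and the B\'ezout-type counting argument—has already been carried out in Theorem~\ref{thmex2}, there is no real obstacle here. The only point that deserves a sentence is the verification that band edges qualify as extrema in the sense of Theorem~\ref{thmex2}: one should note that it is the \emph{global} extremal character of $a_m$ and $b_m$, combined with continuity and periodicity of $\lambda_m$ (so that the extremum is genuinely attained at some $k\in[0,1)^2$ rather than only approached), that legitimizes quoting the theorem with $\lambda_*\in\{a_m,b_m\}$.
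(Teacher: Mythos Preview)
Your proposal is correct and is exactly the paper's approach: the paper simply remarks that each spectral band edge is an extremum of the corresponding band function (this is Lemma~\ref{lespband}) and then deduces Corollary~\ref{corband2} immediately from Theorem~\ref{thmex2}. There is nothing to add.
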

	
	\begin{corollary}\label{corband3}
		Let $d\geq 3$.  
		Then both level sets
		\begin{equation*}
		\{k\in[0,1)^d:  \lambda_{{m}}(k)=a_m \} 
		\text{ and }
		\{k\in[0,1)^d:  \lambda_{{m}}(k)=b_m \} 
		\end{equation*}
		have dimension at most $d-2$.
	\end{corollary}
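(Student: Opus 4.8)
The plan is to reduce Corollary~\ref{corband3} directly to Theorem~\ref{thmex3}. By the band decomposition \eqref{gband}, for each $m$ the interval $[a_m,b_m]$ is precisely the range of the band function $\lambda_m(k)$ as $k$ ranges over $[0,1)^d$; since $\lambda_m$ is continuous and $\Z^d$-periodic, this range is a compact interval, and its endpoints $a_m$ and $b_m$ are respectively the global minimum and the global maximum of $\lambda_m$ on $[0,1)^d$. In particular, $a_m$ and $b_m$ are extrema of $\lambda_m$ in the sense required by Theorem~\ref{thmex3}.

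It therefore suffices to apply Theorem~\ref{thmex3} twice: with $\lambda_*=a_m$ it gives that $\{k\in[0,1)^d:\lambda_m(k)=a_m\}$ has dimension at most $d-2$, and with $\lambda_*=b_m$ it gives the same bound for $\{k\in[0,1)^d:\lambda_m(k)=b_m\}$. This is exactly the assertion of the corollary, so nothing further is needed. The only step worth an explicit sentence is the identification of the band edges with extrema of the band functions, which is immediate from \eqref{gband} and the definition of $\lambda_m(k)$; all of the genuine difficulty lies upstream, in the proof of Theorem~\ref{thmex3} (and, through it, in the irreducibility results Theorems~\ref{gcf1} and~\ref{corbv1}), which we are entitled to assume here. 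In short, once Theorem~\ref{thmex3} is in hand there is no real obstacle to the corollary itself; had that theorem not been available, the ``main obstacle'' would be exactly its proof.
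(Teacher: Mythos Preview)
Your argument is correct and matches the paper's own reasoning: the paper simply notes that each band edge $a_m$, $b_m$ is an extremum of the band function $\lambda_m$ and then invokes Theorem~\ref{thmex3}. Nothing more is needed.
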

	
	\begin{remark}
The   statements  in Theorem \ref{thmex3} and Corollary \ref{corband3} are sharp for periodic Schr\"odinger operators on a particular lattice  in $\Z^d$ \cite{par1}.
	\end{remark}
	
	The  results of   Corollary \ref{corband2} without the explicit bound of the cardinality and Corollary   \ref{corband3}  were announced by I. Kachkovskiy ~\cite{fk5} during a seminar talk at TAMU, as a part of a joint work with N. Filonov ~\cite{fk2}.  During   Kachkovskiy's talk, we realized that  we could  provide the    approach to study   the upper bound of    dimensions  of    level sets of extrema based on    the Fermi variety. 
	In private communication, we were made aware that the proof from ~\cite{fk2} extends to   Theorem \ref{thmex2} without the explicit bound of the cardinality  and Theorem \ref{thmex3}. However, their approach is very different and is based on the arguments from ~\cite{fk18}.


	We are going to  talk about another application.
	Let
	us introduce  a perturbed periodic operator:
	\begin{equation}\label{gh}
	H=H_0+v=-\Delta +V +v,   
	\end{equation}
	where $v:\Z^d\to \C$ is a decaying function.

	The (ir)reducibility of the Fermi variety is   closely related  to  the existence of eigenvalues embedded into    spectral bands  of perturbed periodic operators ~\cite{kvcpde20,kv06cmp}. 
	We postpone the full set up and background to Section \ref{Smain}, and formulate one main theorem before closing this section.
	Based on the irreducibility (Theorems \ref{gcf1} and \ref{thm21}), the arguments in ~\cite{kvcpde20}, 
	and a unique continuation result for the discrete Laplacian on $\Z^d$,  we are able to prove that
	\begin{theorem}\label{thm1new}
		Assume that $V$ is real and periodic.
		If  
		there exist constants $C>0$ and $\gamma>1$ such that the complex-valued function $v :\Z^d\to\C$ satisfies 
		\begin{equation}\label{ggdecaynew}
		|v(n)|\leq Ce^{-|n|^{\gamma}},
		\end{equation}
		then   $H=-\Delta+V+v$ does not have any embedded eigenvalues, i.e., for any $\lambda \in \bigcup_{m=1}^{Q}(a_m,b_m)$, $\lambda$ is not an eigenvalue of $H$.
	\end{theorem}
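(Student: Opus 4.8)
The plan is to argue by contradiction, combining the irreducibility of the Fermi variety with a Paley--Wiener type analyticity argument and a quantitative unique continuation estimate. Suppose $\lambda_0\in\bigcup_{m=1}^Q(a_m,b_m)$ is an eigenvalue of $H=H_0+v$ with eigenfunction $\psi\in\ell^2(\Z^d)$. The first step is to study the Fourier transform $\widehat{\psi}$ on the torus (or, after the change of variables to Floquet variables $z_j=e^{2\pi i k_j}$, on the appropriate algebraic torus). The hypothesis \eqref{ggdecaynew} with $\gamma>1$ forces $v$, and hence $v\psi$, to decay super-exponentially; by a Paley--Wiener argument this makes the relevant transform of $v\psi$ extend to an entire function of suitable order on $\C^d$ (respectively a Laurent-analytic function on $(\C^\ast)^d$). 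Feeding this into the eigenvalue equation $(H_0-\lambda_0)\psi=-v\psi$, one writes, in Floquet--Bloch coordinates, $\widehat\psi$ as the ratio of this entire object and the defining Laurent polynomial $\mathcal P(z,\lambda_0)$ of the Fermi variety $F_{\lambda_0}(V)$. The key point is that $\widehat\psi$, being the transform of an $\ell^2$ (indeed rapidly decaying, by bootstrapping) function, must itself be analytic across the real torus, so all the zeros of $\mathcal P(\cdot,\lambda_0)$ on the real torus must be cancelled by zeros of the numerator.

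The second, central step is to invoke irreducibility (Theorem \ref{gcf1} for $d\ge 3$, Theorem \ref{thm21} for $d=2$). Since $\lambda_0$ is in the interior of a spectral band, $F_{\lambda_0}(V)$ has nonempty real part, i.e. $\mathcal P(\cdot,\lambda_0)$ genuinely vanishes somewhere on the real torus. If one can show the numerator also vanishes on a component of $F_{\lambda_0}(V)$, then — because the Fermi variety modulo periodicity is irreducible (for $d\ge 3$, or $d=2$ with $\lambda_0\ne[V]$; the band-interior condition should be shown to exclude, or otherwise handle, the exceptional level $[V]$, e.g. via the two-component structure) — the numerator must vanish on the \emph{entire} Fermi variety $F_{\lambda_0}(V)$. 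This divisibility, transported back through the Fourier transform, shows that $\widehat{v\psi}$ is divisible by $\mathcal P(\cdot,\lambda_0)$ as analytic functions, which in turn forces $\widehat\psi$ to extend to an entire function of finite order on all of $\C^d$; equivalently $\psi$ itself decays super-exponentially.

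The third step is the unique continuation input: a discrete periodic Schr\"odinger operator on $\Z^d$ has no nontrivial super-exponentially decaying solutions of $H\psi=\lambda_0\psi$. This is precisely the ``unique continuation result for the discrete Laplacian on $\Z^d$'' the introduction refers to; with it, $\psi$ super-exponentially decaying and a solution of the eigenvalue equation forces $\psi\equiv 0$, contradicting that $\psi$ is an eigenfunction. Assembling the three steps yields the theorem. For $d=2$ at $\lambda_0=[V]$ one must be slightly more careful: either observe this level cannot lie in the open band $(a_m,b_m)$ for the relevant $m$ under the real-valued hypothesis, or run the same argument on each of the (at most two) irreducible components and use the unique continuation bound on the resulting decay.

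The main obstacle I expect is the bookkeeping in the first and second steps: making the Paley--Wiener/divisibility argument rigorous in the Laurent (torus) setting — keeping track of monomial prefactors, the difference-of-monomials subtlety flagged in the introduction, and the precise order of growth needed so that ``numerator divisible by $\mathcal P$'' genuinely upgrades the decay of $\psi$ from $\ell^2$ to super-exponential. The irreducibility itself is a black box here (Theorems \ref{gcf1}, \ref{thm21}), and the unique continuation statement is a separate, self-contained ingredient; the delicate part is the interface between them, i.e. showing that the analyticity of $\widehat\psi$ on the real torus really does translate into vanishing of the numerator on a whole component of the Fermi variety rather than merely at isolated points, so that irreducibility can be brought to bear.
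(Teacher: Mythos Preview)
Your plan is essentially the paper's: it modularizes the argument as Theorem~\ref{thm1} (the Kuchment--Vainberg scheme, conditional on ``Condition~1'' --- every irreducible component of $F_\lambda(V)$ contains a real analytic hypersurface of dimension $d-1$), proved exactly via the Floquet--Bloch/Cramer's-rule representation $\hat u=\tilde S\hat\psi/\tilde P$, the divisibility Lemma (Theorem~\ref{thmentire}), the growth-of-quotient Lemma~\ref{thmgrowth}, the Paley--Wiener Lemma~\ref{leande}, and the discrete unique-continuation bound Lemma~\ref{lecontinuation}; Theorem~\ref{thm1new} then reduces to checking Condition~1, which is where the irreducibility Theorems~\ref{gcf1} and~\ref{thm21} enter, together with Lemma~\ref{led-1}. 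Two small loose ends in your write-up: for $d=2$ and $\lambda_0=[V]$ the paper does \emph{not} exclude this level from the open bands --- instead it uses the explicit factorization \eqref{last} to see directly that each of the two components has a real $(d-1)$-dimensional piece, so Condition~1 still holds and the argument runs on each component; and the statement as written covers $d=1$ as well, which the paper disposes of separately by citing a known one-dimensional result, since the Fermi-variety machinery is a $d\geq 2$ story.
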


Finally, we mention that 
the  irreducibility results established in this paper   provide   opportunities  to  explore more applications  ~\cite{liu21,liuicmp,liuprivate}.
	
	\section{ Main results }\label{Smain}

	\begin{definition}
		Let  $\C^{\star}=\C\backslash \{0\}$ and $z=(z_1,z_2,\cdots,z_d)$. 
		The Floquet variety  is defined as 
		\begin{equation}\label{gff}
		\mathcal{F}_{\lambda}(V)=\{z \in (\C^\star)^d  :z_j=e^{2\pi i k_j}, j=1,2,\cdots,d,  k\in F_{\lambda}(V)\}.
		\end{equation}

	\end{definition}
	\noindent In other words,   $z\in (\C^\star)^d \in \mathcal{F}_{\lambda}(V)$
	if the equation
	\begin{equation}\label{flo}
	 (-\Delta u) (n)+V(n)u(n)=\lambda u (n), n\in\Z
	\end{equation}
	with the boundary condition
	\begin{equation}\label{flo2}
	u(n+q_j\textbf{e}_j)=z_j u(n), j=1,2,\cdots, d, \text{ and } n\in \Z^d
	\end{equation}
has  a   non-trivial   function.
	Introduce a fundamental domain $W$ for $\Gamma$:
	\begin{equation*}
	W=\{n=(n_1,n_2,\cdots,n_d)\in\Z^d: 0\leq n_j\leq q_{j}-1, j=1,2,\cdots, d\}.
	\end{equation*}
	By writing out $-\Delta +V$  as acting on the  $Q$ dimensional space $\{u(n),n\in W\}$, 
	the eigen-equation \eqref{flo} and \eqref{flo2} (\eqref{spect_0} and \eqref{Fl})
	translates into the eigenvalue problem for a  $Q\times Q$ matrix $\mathcal{D}(z)$ ($ D(k)$). Let  
	$\mathcal{P}(z,\lambda)$ (${P}(k,\lambda)$) be the determinant of 
 $ \mathcal{D}(z)-\lambda I$ (${D}(k)-\lambda I$).     We should mention that $\mathcal{D}(z)$ (${D}(k)$) and 	$\mathcal{P}(z,\lambda)$ ($P(k,\lambda)$) depend on the potential $V$. Since the potential is fixed, we drop the dependence  during the proof.
	
	From the notations above, one has that 
		\begin{equation}\label{g11}
F_{\lambda}(V) =\{k\in\C^d: {P}(k,\lambda) =0\}, \mathcal{F}_{\lambda}(V) =\{ z\in (\C^\star)^d : \mathcal{P}(z,\lambda) =0\}.
		\end{equation}
	
It is easy to see that 
	$\mathcal{P}(z,\lambda) $ is a   polynomial  in the variables $\lambda$ and 
	$$z_1,z_1^{-1}, z_2,z_2^{-1}, \cdots,z_d,z_d^{-1}.$$ 
	In other words $\mathcal{P}(z,\lambda) $ is a  Laurent polynomial  of 
	$z_1, z_2, \cdots,z_d$  and polynomial in $\lambda$.
	Therefore, the Floquet variety $\mathcal{F}_{\lambda}(V) $     is an algebraic set\footnote{ Usually, an algebraic set is defined as  common zeros of a collection of polynomials. Here, we  call   $X\subset (\C^{\star})^d$ an algebraic set even though $X$ is the  zeros of a Laurent polynomial.  }. 
	It implies  that both $B(V)$    and $F_{\lambda}(V)$  are  (principal) analytic sets.
	 Since  the identity \eqref{Fl} is unchanged  under the shift: $k\to k+\Z^d$,  it is natural to study  $F_{\lambda}(V)/\Z^d$.

	In our proof, we  focus on studying   the Floquet variety $\mathcal{F}_{\lambda}(V)$ to  benefit from its algebraicity.
	
	A    Laurent polynomial  of a  single term  is called monomial, i.e.,   $Cz_1^{a_1}z_2^{a_2}\cdots z_{k}^{a_k}$, where  $a_j\in\Z$, $j=1,2,\cdots,k$,
	and $C$ is a non-zero constant.
	\begin{definition}\label{de1}
		We say  that  a Laurent polynomial $h\left(z_{1}, z_{2},\cdots,z_k\right)$ is irreducible if it  can not be factorized non-trivially, that is, there are no   non-monomial Laurent polynomials $f\left(z_{1}, z_{2},\cdots,z_k\right)$  and $g\left(z_{1}, z_{2},\cdots,z_k\right)$ such that 
		$h=fg$.
	\end{definition}
\begin{remark}
When $h$ is a polynomial, the definition of irreducibility in Def. \ref{de1}  differs  the traditional one\footnote{A  polynomial $h$  is called irreducible if  there are no  non-constant polynomials $f$ and $g$ such that $ h=fg$.} (because of the monomial).   For example, the polynomial $z^2+z$ is irreducible according to  Def. \ref{de1}.  
 This will not create any trouble since all  polynomials arising from this paper do not have factors $z_j$, $j=1,2,\cdots,k$. 
\end{remark}

	Based on the above notations and definitions, we have the following simple facts.
		\begin{proposition}\label{prop}
		Fix $\lambda\in \C$. We have
		\begin{enumerate}
			\item The Fermi variety/surface  ${F}_{\lambda}(V)/\Z^d$  is irreducible if and only if $\mathcal{F}_{\lambda}(V)$  is 
		irreducible;
		\item 	If the  Laurent polynomial  	$\mathcal{P}(z,\lambda)$ (as a function of $z$) is irreducible, then 
		$\mathcal{F}_{\lambda}(V)$  is irreducible.
		\end{enumerate}
	\end{proposition}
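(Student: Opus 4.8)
The plan is to reduce both assertions to the fact that the coordinatewise exponential
$$\Phi:\C^{d}\to(\C^{\star})^{d},\qquad \Phi(k)=\bigl(e^{2\pi i k_{1}},\dots,e^{2\pi i k_{d}}\bigr),$$
is a holomorphic covering map whose group of deck transformations is precisely the translation action of $\Z^{d}$ on $\C^{d}$, and that by \eqref{g11} and \eqref{gff} one has the identity $F_{\lambda}(V)=\Phi^{-1}\bigl(\mathcal F_{\lambda}(V)\bigr)$. Since $\Phi$ is, locally, a biholomorphism, it transports analytic sets in both directions; the whole proof is then bookkeeping about irreducible components under a Galois covering. I will use without comment the standard facts that the preimage of an analytic set under a holomorphic map is analytic, that the irreducible components of an analytic set form a locally finite family and that the union of any subfamily of them is again analytic, and that an irreducible analytic set contained in a finite union of analytic sets lies in one of them.

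For (2) I would argue algebraically first. The Laurent ring $R=\C[z_{1}^{\pm1},\dots,z_{d}^{\pm1}]$ is the localization of $\C[z_{1},\dots,z_{d}]$ at the multiplicative set of monomials, hence a UFD whose units are exactly the monomials $cz^{a}$ with $c\in\C^{\star}$, $a\in\Z^{d}$. Therefore irreducibility of $\mathcal P(\cdot,\lambda)$ in the sense of Definition \ref{de1} (``no non-monomial factorization'') is literally irreducibility of $\mathcal P(\cdot,\lambda)$ as an element of $R$, so $\bigl(\mathcal P(\cdot,\lambda)\bigr)$ is a prime ideal and $\mathcal F_{\lambda}(V)=V(\mathcal P(\cdot,\lambda))\subset(\C^{\star})^{d}$ is an irreducible affine algebraic variety. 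One then invokes the classical fact that an irreducible complex algebraic variety is irreducible as an analytic set (its regular locus is connected), which gives (2).

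For (1) I would prove the reformulation: $\mathcal F_{\lambda}(V)$ is irreducible if and only if $\Z^{d}$ acts transitively on the set of irreducible components of $F_{\lambda}(V)$, which is exactly the meaning of ``$F_{\lambda}(V)/\Z^{d}$ is irreducible'' recalled after Conjecture \ref{conj1dc}. For the forward implication, assume $\mathcal F_{\lambda}(V)$ is irreducible; if $\Z^{d}$ did not act transitively on the components of $F_{\lambda}(V)=\Phi^{-1}(\mathcal F_{\lambda}(V))$, split the components into one $\Z^{d}$-orbit and the rest and let $Z_{1},Z_{2}$ be the unions of these two subfamilies. Each $Z_{i}$ is analytic (a union of a subfamily of a locally finite family of irreducible components), $\Z^{d}$-invariant, and a proper subset of $F_{\lambda}(V)$, with $Z_{1}\cup Z_{2}=F_{\lambda}(V)$. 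Since the $Z_{i}$ are saturated closed sets for the deck action and $\Phi$ is a covering map, $\Phi(Z_{i})$ is closed (quotient map) and locally analytic (local biholomorphism), hence analytic; the two sets $\Phi(Z_{i})$ are proper subsets of $\mathcal F_{\lambda}(V)$ whose union is $\mathcal F_{\lambda}(V)$, contradicting its irreducibility. For the converse, suppose $F_{\lambda}(V)/\Z^{d}$ is irreducible but $\mathcal F_{\lambda}(V)=Y_{1}\cup\cdots\cup Y_{r}$ with $r\geq2$ distinct irreducible components. Since no $Y_{j}$ is contained in the union of the others, I can pick $p_{1}\in\Phi^{-1}\bigl(Y_{1}\setminus\bigcup_{i\geq2}Y_{i}\bigr)$ and $p_{2}\in\Phi^{-1}\bigl(Y_{2}\setminus\bigcup_{i\neq2}Y_{i}\bigr)$; letting $\Omega_{j}$ be the irreducible component of $F_{\lambda}(V)=\bigcup_{i=1}^{r}\Phi^{-1}(Y_{i})$ through $p_{j}$, irreducibility forces $\Omega_{j}\subseteq\Phi^{-1}(Y_{j})$. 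Transitivity of the $\Z^{d}$-action then gives $n\in\Z^{d}$ with $\Omega_{1}=n+\Omega_{2}$, hence $\Phi(\Omega_{1})=\Phi(\Omega_{2})\subseteq Y_{1}\cap Y_{2}$; but $\Phi(p_{1})\in\Phi(\Omega_{1})$ while $\Phi(p_{1})\notin Y_{2}$, a contradiction. Thus $r=1$.

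I do not expect a genuine obstacle here, since the proposition is, as the authors say, a collection of simple facts; the only points requiring care are the two analytic-geometry inputs — that a saturated closed set descends to an analytic set under the covering $\Phi$, and that a union of a subfamily of the (locally finite) irreducible components of an analytic set is analytic — together with the algebraic-to-analytic transfer of irreducibility used in (2). All three are standard, so the write-up amounts to arranging them in the right order.
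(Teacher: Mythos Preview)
Your argument is correct. The paper does not actually prove Proposition~\ref{prop}: it is stated as a list of ``simple facts'' immediately following the definitions and is then used without further comment. Your write-up is therefore not a comparison target but rather a careful verification of what the author leaves implicit, and the ingredients you identify---the covering map $\Phi$ with deck group $\Z^{d}$, the identification $F_{\lambda}(V)=\Phi^{-1}(\mathcal F_{\lambda}(V))$ from \eqref{g11} and \eqref{gff}, the UFD structure of the Laurent ring, and the passage from algebraic to analytic irreducibility---are exactly the standard ones.

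Two small points worth tightening in a final version. First, in the converse direction of (1) you use that the irreducible components $Y_{1},\dots,Y_{r}$ of $\mathcal F_{\lambda}(V)$ are finitely many; this is true because $\mathcal F_{\lambda}(V)$ is algebraic (zero set of a Laurent polynomial), but it deserves a word since local finiteness alone would not suffice on the noncompact $(\C^{\star})^{d}$. Second, in (2) you should note that $\mathcal P(\cdot,\lambda)$ is never a unit of $R$ (equivalently, never a monomial), which follows from Fact~(2) at the end of Section~\ref{Sflo}; otherwise Definition~\ref{de1} would vacuously call it irreducible while the zero set would be empty. With these remarks your proof is complete.
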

	
%

	\begin{theorem}\label{gcf}
		Let $d\geq3$.   Then for  any $\lambda\in \C$,  the Laurent polynomial $\mathcal{P}(z,\lambda) $ (as a function of $z$) is irreducible. 
		
	\end{theorem}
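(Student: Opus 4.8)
\textbf{Proof proposal for Theorem \ref{gcf}.}

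The plan is to set up explicit coordinates for the Laurent polynomial $\mathcal{P}(z,\lambda)$ and then follow the three-step strategy outlined in the introduction. After diagonalizing the Floquet problem on the fundamental domain $W$, one checks that $\mathcal{P}(z,\lambda)$ has the shape
\begin{equation*}
\mathcal{P}(z,\lambda)=\pm\left(z_1+z_1^{-1}+z_2+z_2^{-1}+\cdots+z_d+z_d^{-1}\right)^{?}+\text{lower order in the }z_j+z_j^{-1},
\end{equation*}
but more usefully one isolates the dependence on a single variable, say $z_d$: writing $\mathcal{P}$ as a Laurent polynomial in $z_d$ with coefficients that are Laurent polynomials in $z_1,\dots,z_{d-1}$ and a polynomial in $\lambda$, the top and bottom $z_d$-degree coefficients should both equal a nonzero constant (coming from the $-\Delta$ in the $\mathbf{e}_d$ direction, which contributes a single $z_d^{q_d}$ and a single $z_d^{-q_d}$ term in the determinant expansion, each with coefficient $\pm 1$ independent of the other variables). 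This pins down the $z_d$-degree of $\mathcal{P}$ and, by symmetry in the $d$ coordinates, the $z_j$-degree in every variable. First I would record these degree/leading-coefficient facts precisely, since they drive everything else.

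Second, I would prove the geometric localization claim: the closure (in $(\C^\ast)^{d-1}\times\overline{\C}\times\C$, or in a suitable toric/projective completion) of the zero set of any nonconstant factor $h$ of $\mathcal{P}$ must pass through the locus $z_1=\cdots=z_d=0$ or through $z_1=\cdots=z_{d-1}=0,\ z_d=\infty$. The mechanism is: fix generic $\lambda$ and restrict attention to how $h$ behaves as $z_1,\dots,z_{d-1}\to 0$; because $\mathcal{P}$ has a genuine $z_d+z_d^{-1}$ leading behaviour, any factorization $\mathcal{P}=fg$ forces one of $f,g$ to "see" the $z_d^{q_d}$ term and the other to "see" the $z_d^{-q_d}$ term, OR both factors must degenerate at the origin. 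Concretely I would substitute $z_j=t\,w_j$ for $j<d$ and expand in $t$; the lowest-order-in-$t$ part of $\mathcal{P}$ is, up to a monomial, something of the form $c\,z_d^{-q_d}+(\lambda-[V])\cdot(\text{monomial})+c\,z_d^{q_d}+\cdots$, whose structure is explicit and controllable. The key output is that the "asymptotic" Laurent polynomials attached to the two distinguished loci are themselves irreducible — essentially because after a monomial change they become (up to lower order) one-variable polynomials $c\,\zeta^{2q_d}+(\text{junk})\zeta^{q_d}+c$ type expressions with distinct roots, or products over the $d$ coordinates of such, handled by an elementary Eisenstein/Newton-polygon argument.

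Third comes the counting. Having shown every factor's closure meets one of the two loci, and that the local defining data at each locus is irreducible, one concludes $\mathcal{P}$ has at most two nonmonomial factors $\mathcal{P}=\mathcal{P}^{(0)}\mathcal{P}^{(\infty)}$, one attached to each locus. Now a degree count finishes it: the total $z_d$-degree spread of $\mathcal{P}$ is $2q_d$ (from $z_d^{-q_d}$ to $z_d^{q_d}$), and each genuine factor must contribute at least the $z_d^{q_d}$ or the $z_d^{-q_d}$ monomial with matching coefficient; comparing this against the degrees forced in the variables $z_1,\dots,z_{d-1}$ and against the $\lambda$-degree (which is $Q=q_1\cdots q_d$ and, crucially, \emph{odd interaction} with the coordinates when $d\geq 3$) shows the two-factor scenario is impossible once $d\geq 3$. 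To make the degree bookkeeping honest I would pass to the auxiliary honest polynomial $\mathcal{P}_1(z,\lambda)=\left(\prod_{j=1}^d z_j^{q_j}\right)\mathcal{P}(z,\lambda)$, which has no monomial factors and to which ordinary unique factorization and degree arguments apply cleanly; a factorization of $\mathcal{P}$ over Laurent polynomials corresponds to a factorization of $\mathcal{P}_1$ into polynomials each of whose $z_j$-support is an interval, and one tracks these supports.

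I expect the main obstacle to be the second step — proving that the only possible "directions of degeneration" for a factor are the two distinguished loci, and that the associated asymptotic polynomials are irreducible. This is where the paper's remark about "a difference of monomials" bites: the asymptotic object is not literally a product of one-variable polynomials but differs from one by a monomial, so one cannot cite a textbook irreducibility criterion directly and must instead run a hands-on Newton-polygon / valuation argument at each locus, carefully excluding spurious factorizations supported on sub-intervals of the coordinate exponent ranges. The third (counting) step should then be routine once the parity obstruction for $d\geq 3$ is identified, and indeed it is precisely this parity/degree mismatch that fails exactly when $d=2,\ \lambda=[V]$, matching Theorem \ref{thm21}.
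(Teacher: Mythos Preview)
Your three-step outline matches the paper's strategy, but several of the technical expectations you build the plan on are wrong, and as written the argument would not close.

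First, a minor but propagating error: the correct auxiliary polynomial is $\mathcal{P}_1(z,\lambda)=(-1)^Q\,z_1^{Q/q_1}\cdots z_d^{Q/q_d}\,\mathcal{P}(z,\lambda)$, not $\prod_j z_j^{q_j}$. The top and bottom $z_j$-degrees of $\mathcal{P}$ are $\pm Q/q_j$, not $\pm q_j$ (these coincide only when $d=2$).

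Second, and this is the real gap, your picture of the ``asymptotic'' polynomials at the two loci is incorrect. After lifting via $z_j\mapsto z_j^{q_j}$ and multiplying by $z_1\cdots z_d$, the lowest-degree homogeneous part of $\det(-z_1\cdots z_d(A+B))$ is
\[
\tilde h_1(z)=z_1^Q\cdots z_d^Q\prod_{n\in W}\Bigl(\sum_{j=1}^d\frac{1}{\rho^j_{n_j}z_j}\Bigr),
\]
a genuinely $d$-variable homogeneous polynomial of degree $(d-1)Q$ that does not involve $\lambda$ at all; it is not reducible to a one-variable $z_d$ analysis. Its irreducibility (after descent to $h_1$) is not an Eisenstein or Newton-polygon argument: it rests on the standing hypothesis $\gcd(q_1,\dots,q_d)=1$, which ensures the $Q$ linear forms $\sum_j(\rho^j_{n_j}z_j)^{-1}$ are pairwise nonproportional, so any $\mu$-invariant factor of $\tilde h_1$ is forced to contain all of them. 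You never invoke coprimality, and without it this step fails.

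Third, the closing step is not a ``parity obstruction''. The paper's count is: if $\mathcal{P}_1=fg$ nontrivially, then after lifting $\deg\tilde f\geq\deg\tilde h_1=(d-1)Q$ and $\deg\tilde g\geq\deg(z_d^Q\tilde h_2)=dQ$, forcing $\deg(\tilde f\tilde g)\geq(2d-1)Q$; but $\deg\det(z_1\cdots z_d(A+B))\leq(d+1)Q$. The inequality $(2d-1)>(d+1)$ for $d\geq3$ is the contradiction. The borderline case $d=2$ is not a parity failure but the equality $(2d-1)=(d+1)=3$, which is why Theorem~\ref{thm2} requires a separate, finer analysis.
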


	\begin{theorem}\label{thm2}
		Let $d=2$.
		Then 	the Laurent polynomial   $ \mathcal{P}(z,\lambda)$ (as a function of $z$)  is irreducible for any $\lambda\in \C$ except maybe for  $\lambda=[V] $, where $[V]$ is the average of $V$ over one periodicity cell. Moreover, if  $ \mathcal{P}(z,[V])$ is reducible,   
		$ \mathcal{P}(z,[V])$ has exactly two distinct non-trivial  irreducible factors (each factor has  multiplicity one).
	\end{theorem}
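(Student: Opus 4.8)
The plan is to follow, in the case $d=2$, the general strategy behind Theorem~\ref{gcf}, keeping track of the extra information that survives when the degree argument fails to kill a second factor. By Proposition~\ref{prop} (and the notion of irreducibility in Definition~\ref{de1}) everything is a statement about the Laurent polynomial $\mathcal{P}(z,\lambda)=\det(\mathcal{D}(z)-\lambda I)$ in $z=(z_1,z_2)$, for a fixed $\lambda$. First I would pin down the Newton polygon of $\mathcal{P}$. Expanding the determinant over cycle covers of the quotient graph $(\Z/q_1)\times(\Z/q_2)$ --- with weight $-1$ on ordinary bonds, $-z_1^{\pm1},-z_2^{\pm1}$ on the bonds that wind around the torus, and $V(n)-\lambda$ on the loops (fixed points) --- a cover contributing a monomial $z_1^{a}z_2^{b}$ must traverse at least $|a|q_1$ bonds in direction $1$ and at least $|b|q_2$ in direction $2$, so $|a|q_1+|b|q_2\le q_1q_2$, and all four extreme values are attained by the ``straight'' covers. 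Hence $N(\mathcal{P})$ is the rhombus $R$ with vertices $(\pm q_2,0)$ and $(0,\pm q_1)$, the four vertex coefficients are nonzero constants independent of $\lambda$, and it is convenient to work also with the honest polynomial $\mathcal{P}_1:=z_1^{q_2}z_2^{q_1}\mathcal{P}\in\C[z_1,z_2]$, whose Newton polygon is $R$ translated into the first quadrant.

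Next comes the structural core, i.e.\ the $d=2$ incarnation of the ``asymptotics are irreducible'' step. Because $d=2$ forces $\gcd(q_1,q_2)=1$, the two pairs of opposite edges of $R$ point in the primitive directions $(q_2,q_1)$ and $(q_2,-q_1)$ and contain no interior lattice points; consequently the only lattice Minkowski decompositions $R=A+B$ with neither $A$ nor $B$ a single point are, up to translation, $R=E_{+}+E_{-}$, where $E_{\pm}$ are the edge-segments in those two directions. I would then show, using multiplicativity of the edge (initial-form) polynomials of $\mathcal{P}$ together with the fact that those edge polynomials are irreducible binomials $c_0+c_1z_1^{q_2}z_2^{\pm q_1}$ (this is where the monomial ambiguity of Definition~\ref{de1} intervenes, exactly as in the $d\ge 3$ proof), that in any nontrivial factorization $\mathcal{P}=fg$ one has $\{N(f),N(g)\}=\{E_{+},E_{-}\}$ up to translation. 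Thus each of $f,g$ is a binomial $(\text{monomial})\bigl(1+cz_1^{q_2}z_2^{\pm q_1}\bigr)$; in particular $\mathcal{P}$ has at most two nontrivial factors, these are distinct and of multiplicity one, and whenever $\mathcal{P}(\cdot,\lambda)$ is reducible its support is contained in the four vertices of $R$ --- equivalently, the coefficient of $z_1^{a}z_2^{b}$ in $\mathcal{P}$ vanishes for every lattice point $(a,b)$ in the interior of $R$.

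The last step identifies when this can happen. Since $\gcd(q_1,q_2)=1$ there is an interior lattice point $p=(a,b)$ with $|a|q_1+|b|q_2=q_1q_2-1$. Any cycle cover contributing to the coefficient $c_p(\lambda)$ of $z^{p}$ then has at most one fixed point, so $c_p(\lambda)$ is affine in $\lambda$; a direct computation of the (highly constrained) extremal covers, in which essentially the only freedom is the position of the single loop, gives $c_p(\lambda)=\alpha(\lambda-[V])$ with $\alpha\ne0$, the cell-average $[V]$ appearing because the $\lambda$-independent part of the coefficient involves $\sum_{n}V(n)$. Hence if $\mathcal{P}(\cdot,\lambda)$ is reducible then $c_p(\lambda)=0$, i.e.\ $\lambda=[V]$; and at $\lambda=[V]$ the structural description from the previous paragraph gives exactly two distinct irreducible factors, each of multiplicity one. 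The constant-potential computation in the Remark, where $\mathcal{P}(z,[V])$ does split as a product of two binomials, shows $[V]$ is genuinely exceptional, which completes the proof of Theorem~\ref{thm2}.

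The main obstacle, I expect, is the structural core of the second paragraph together with the bookkeeping of the third: showing that a Laurent factor must live on an edge of $R$ requires controlling factorizations only up to monomials, so one keeps passing between $\mathcal{P}$ and the honest polynomial $\mathcal{P}_1$ and invokes unique factorization in $\C[z_1,z_2]$ with care; and the final computation that the extremal interior coefficient is $\alpha(\lambda-[V])$ with $\alpha\ne 0$ --- rather than merely affine with some unspecified root --- is precisely the delicate degree argument that separates $d=2$ from $d\ge3$.
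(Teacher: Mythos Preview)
Your structural core is essentially the paper's argument in Newton-polygon dress. The paper's lowest-degree components $h_1,h_2$ (Lemma~\ref{key}, Lemma~\ref{lesin}) are precisely your edge polynomials of $\mathcal{P}_1$ along the two pairs of parallel sides of the rhombus; because $\gcd(q_1,q_2)=1$ forces those sides to be lattice-primitive, $h_1,h_2$ are irreducible binomials, and the paper's degree bookkeeping \eqref{g14}--\eqref{g17} accomplishes exactly what Ostrowski's identity $N(fg)=N(f)+N(g)$ and the Minkowski indecomposability of the rhombus accomplish for you. So up through ``each factor is a binomial supported on a diagonal, hence $\mathcal{P}(\cdot,\lambda)$ lives on the four vertices,'' you and the paper are doing the same thing, yours a bit more cleanly.

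The genuine difference---and the gap---is the last step. The paper does \emph{not} compute a single interior Newton coefficient. Instead it passes to the lifted polynomial $\tilde{\mathcal P}=\det(A+B)$, substitutes $z_1=-z_2$ (which kills the factor $\tilde h_1$, so $\det(A+B)\equiv 0$), and reads off the top-degree term in $z_1$: the $(0,0)$ diagonal entry collapses to $\lambda$ while every other diagonal entry has leading term $(\rho^1_{n_1}-\rho^2_{n_2})z_1\neq 0$ by coprimality, so the top term is $\lambda\cdot\bigl(\prod_{(n_1,n_2)\neq(0,0)}(\rho^1_{n_1}-\rho^2_{n_2})\bigr)z_1^{Q-1}$ with a manifestly nonzero constant, forcing $\lambda=[V]$. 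Your route instead needs $c_p(\lambda)=\alpha(\lambda-[V])$ with $\alpha\neq 0$. The parity observation (a cover contributing to $z^p$ with $|a|q_1+|b|q_2=Q-1$ cannot have zero fixed points, since the total edge count and the $\ell^1$-displacement have the same parity) and translation invariance (giving $C_n$ independent of $n$) do yield $c_p(\lambda)=-C_*\,Q\,(\lambda-[V])$; but $C_*\neq 0$ is the signed count of taut cycle covers of $W\setminus\{0\}$ with the prescribed winding, and you have not shown this is nonzero. That is a genuine combinatorial claim, not just bookkeeping, and it is exactly what the paper sidesteps by working in the lifted variables where the analogous nonvanishing becomes the transparent product $\prod(\rho^1_{n_1}-\rho^2_{n_2})\neq 0$. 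Either fill that gap or switch to the paper's substitution for the final identification of $\lambda$.
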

By Theorems \ref{gcf} and \ref{thm2},  and some basic properties of $\mathcal{P}$, 	 we immediately obtain 
		\begin{theorem}\label{thmbv1ne}
	Let $d\geq2$.  Then the  Laurent polynomial   $\mathcal{P}(z,\lambda) $ (as a function of both $z$  and $\lambda$) is irreducible. 
	\end{theorem}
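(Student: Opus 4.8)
The plan is to deduce irreducibility of $\mathcal{P}(z,\lambda)$ as a Laurent polynomial in $(z_1,\dots,z_d,\lambda)$ from the fiberwise irreducibility statements of Theorems \ref{gcf} and \ref{thm2}, together with the elementary structure of $\mathcal{P}$ as a polynomial in $\lambda$. First I would record the key structural fact: since $\mathcal{P}(z,\lambda)=\det(\mathcal{D}(z)-\lambda I)$ with $\mathcal{D}(z)$ a $Q\times Q$ matrix whose entries are Laurent polynomials in $z$, the total $\lambda$-degree of $\mathcal{P}$ is exactly $Q$ and the leading coefficient is $(-1)^Q$, a nonzero constant. In particular $\mathcal{P}$, viewed in $\C[z_1^{\pm1},\dots,z_d^{\pm1}][\lambda]$, is monic in $\lambda$ up to a unit, hence primitive as a polynomial in $\lambda$ over the ring $R=\C[z_1^{\pm1},\dots,z_d^{\pm1}]$; and $R$ is a UFD (a localization of the polynomial ring), so Gauss's lemma applies: $\mathcal{P}$ is irreducible in $R[\lambda]$ iff it is irreducible in $\mathrm{Frac}(R)[\lambda]$ and primitive. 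Since no factor $z_j$ can divide $\mathcal{P}$ (the $\lambda^Q$-coefficient is constant), ``irreducible as a Laurent polynomial in all variables'' in the sense of Definition \ref{de1} is then equivalent to the usual irreducibility in $R[\lambda]$.

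Next I would run the standard specialization argument. Suppose $\mathcal{P}=\mathcal{Q}\cdot\mathcal{R}$ is a nontrivial factorization in $R[\lambda]$ with $\deg_\lambda\mathcal{Q}=a$, $\deg_\lambda\mathcal{R}=b$, $a+b=Q$, and both $a,b\ge 1$ (the case where one factor has $\lambda$-degree $0$ is excluded: such a factor would lie in $R$, be non-monomial, and divide the monic-up-to-unit $\mathcal{P}$, forcing it to be a unit, i.e.\ a monomial, contradiction). For each fixed $\lambda_0\in\C$, specializing gives $\mathcal{P}(z,\lambda_0)=\mathcal{Q}(z,\lambda_0)\mathcal{R}(z,\lambda_0)$. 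The point is that $\mathcal{Q}(z,\lambda_0)$ and $\mathcal{R}(z,\lambda_0)$ are \emph{not} monomials for all but finitely many $\lambda_0$: indeed, since $\deg_\lambda \mathcal{Q}=a\ge 1$, some coefficient of $\lambda^j$ ($j<a$) in $\mathcal{Q}$ is a non-monomial element of $R$ or the $\lambda^a$-coefficient is (being a factor of a constant, the top coefficient of $\mathcal{Q}$ is a monomial, so actually one uses a lower coefficient); writing $\mathcal{Q}(z,\lambda_0)=\sum_j c_j(z)\lambda_0^j$, the condition ``$\mathcal{Q}(z,\lambda_0)$ is a monomial'' forces, for each pair of distinct exponent vectors appearing among the $c_j$, a polynomial equation in $\lambda_0$, hence holds only for finitely many $\lambda_0$; the same for $\mathcal{R}$. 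For $d\ge 3$, picking any $\lambda_0$ outside this finite bad set contradicts Theorem \ref{gcf}, which says $\mathcal{P}(z,\lambda_0)$ is irreducible. For $d=2$, picking $\lambda_0\ne[V]$ outside the bad set contradicts Theorem \ref{thm2}. In both cases $\mathcal{P}(z,\lambda)$ admits no nontrivial factorization in $R[\lambda]$, hence is irreducible as a Laurent polynomial in $(z,\lambda)$.

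The step I expect to be the main (though still routine) obstacle is the ``not a monomial for all but finitely many $\lambda_0$'' claim, i.e.\ controlling how a genuine two-variable factor can collapse to a monomial under specialization. The clean way is to note that if $\mathcal{Q}(z,\lambda_0)$ were a monomial for infinitely many $\lambda_0$, then — fixing two distinct monomials $z^{\alpha},z^{\beta}$ that appear in $\mathcal{Q}(z,\lambda)$ with coefficients $p_\alpha(\lambda),p_\beta(\lambda)\in\C[\lambda]$ not identically zero — at least one of $p_\alpha(\lambda_0),p_\beta(\lambda_0)$ vanishes for each such $\lambda_0$, impossible for infinitely many $\lambda_0$ unless $\mathcal{Q}$ has at most one monomial in $z$ to begin with, which combined with $\deg_\lambda\mathcal{Q}\ge1$ and primitivity would make $\mathcal{Q}$ itself (up to a monomial) a polynomial in $\lambda$ alone dividing $\mathcal{P}$; but then $\mathcal{R}$ would be, up to a monomial, a Laurent polynomial in $z$ only times a polynomial in $\lambda$, and comparing with $\mathcal{P}=\det(\mathcal{D}(z)-\lambda I)$ — which genuinely mixes $z$ and $\lambda$ in every top $\lambda$-power coefficient chain — yields a contradiction. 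Once this bookkeeping is in place the theorem follows immediately; I would keep the exposition short since all ingredients are either Gauss's lemma or direct consequences of $\mathcal{P}$ being monic of degree $Q$ in $\lambda$.
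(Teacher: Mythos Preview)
Your approach is essentially the same as the paper's: both specialize a hypothetical factorization $\mathcal{P}=f_1f_2$ to a fixed $\lambda$ and invoke the fiberwise irreducibility of Theorems \ref{gcf} and \ref{thm2} to force one factor to be a $z$-monomial, then rule that out. The paper does this more directly, without the Gauss/UFD layer: it simply picks $\lambda$ large enough that every $z$-coefficient $t_j^a(\lambda)$ of each $f_j$ is nonzero, so fiberwise irreducibility forces one $f_j$ to have a single $z$-monomial, i.e.\ $f_1(z,\lambda)=t(\lambda)z^{a_0}$ with $t$ non-constant.

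Your handling of exactly this residual case (``$\mathcal{Q}$ has at most one monomial in $z$'') is where the write-up gets vague; the sentence about $\mathcal{R}$ being ``a Laurent polynomial in $z$ only times a polynomial in $\lambda$'' is not correct, and ``genuinely mixes $z$ and $\lambda$'' is not an argument. The clean one-line fix, which is what the paper does, is: choose a root $\lambda_0$ of $t$; then $\mathcal{P}(z,\lambda_0)\equiv 0$ in $z$, contradicting the fact (recorded at the end of Section~\ref{Sflo}) that the $z_1^{Q/q_1}$-coefficient of $\mathcal{P}(z,\lambda_0)$ is $\pm1$. With that replacement your proof is complete and matches the paper's; the UFD/primitivity discussion is correct but unnecessary overhead.
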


	\begin{remark}
		\begin{enumerate}
			\item 	By \eqref{g11} and Prop.\ref{prop},  Theorems \ref{gcf1}, \ref{thm21} and \ref{corbv1}   follow from	Theorems \ref{gcf}, \ref{thm2} and \ref{thmbv1ne}.
		\item Denote by 	${\bf 0}$  the zero $\Gamma$-periodic potential. From \eqref{last} below, one can see that if $ \mathcal{P}(z,[V])$ is reducible ($d=2$),  then    $F_{[V]} (V)=F_{0}({\bf 0})$.
		\end{enumerate}

\end{remark}

	\begin{remark}
		Reducible Fermi surfaces are known to occur for  periodic graph operators, even at all energy levels, e.g.,  ~\cite{shi1,fls}. 
	\end{remark}

	Our next topic is about the extrema of  band functions.  

	\begin{theorem}\label{thmextrem}
		Assume that $V$ is a real valued periodic potential. 
		Let $\lambda_{*}$ be an extremum of a band function $\lambda_m(k)$, for some $m=1,2,\cdots,Q$.
		Then we have 
		\begin{equation}\label{gextrem}
		\{k\in \R^d: \lambda_m(k)=\lambda_{*}\} \subset \{k\in\R^d: {P}(k,\lambda_{*})=0, |\nabla_k {P}(k,\lambda_{*})|=0\},
		\end{equation}
		where $\nabla$ is the gradient.
	\end{theorem}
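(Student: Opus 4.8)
The plan is to work with the real-variable determinant $P(k,\lambda)=\det(D(k)-\lambda I)$, which is analytic in $k\in\R^d$ (through $z_j=e^{2\pi ik_j}$) and polynomial in $\lambda$, and to use that for real $k$ the matrix $D(k)$ is Hermitian (since $V$ is real) and that the band functions $\lambda_1(k)\le\lambda_2(k)\le\cdots\le\lambda_Q(k)$ are precisely its ordered eigenvalues. Fix $k_0\in\R^d$ with $\lambda_m(k_0)=\lambda_{*}$. Since $\lambda_{*}$ is then an eigenvalue of $D(k_0)$, one has $P(k_0,\lambda_{*})=0$ automatically, so the entire content of \eqref{gextrem} is the vanishing of $\nabla_k P(k_0,\lambda_{*})$. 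I would split on the multiplicity $r$ of $\lambda_{*}$ as an eigenvalue of $D(k_0)$; because $D(k_0)$ is Hermitian, hence diagonalizable, $r$ equals both the multiplicity of $\lambda_{*}$ as a root of $\lambda\mapsto P(k_0,\lambda)$ and $\dim\ker(D(k_0)-\lambda_{*}I)$.

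In the degenerate case $r\ge 2$ one has $\operatorname{rank}(D(k_0)-\lambda_{*}I)\le Q-2$, so every $(Q-1)\times(Q-1)$ minor of $D(k_0)-\lambda_{*}I$ vanishes, i.e.\ its adjugate matrix is zero. Combined with Jacobi's formula $\partial_{k_j}P(k,\lambda)=\operatorname{tr}\!\big(\operatorname{adj}(D(k)-\lambda I)\,\partial_{k_j}D(k)\big)$, this immediately gives $\nabla_k P(k_0,\lambda_{*})=0$. Notably, this case uses neither perturbation theory nor the hypothesis that $\lambda_{*}$ is an extremum, matching the weaker assumption advertised in Remark \ref{relast}.

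In the simple case $r=1$ one has $\partial_\lambda P(k_0,\lambda_{*})\ne 0$, so the analytic implicit function theorem produces an analytic branch $\mu(k)$ near $k_0$ with $\mu(k_0)=\lambda_{*}$ and $P(k,\mu(k))\equiv 0$. The next step is to check that $\mu(k)=\lambda_m(k)$ for real $k$ near $k_0$: for such $k$, $\mu(k)$ is a real eigenvalue of $D(k)$ (real because $D(k)$ is Hermitian) lying close to $\lambda_{*}$, and by continuity of the ordered eigenvalues together with the simplicity of $\lambda_{*}$, the only eigenvalue of $D(k)$ near $\lambda_{*}$ is $\lambda_m(k)$. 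Hence $\lambda_m$ is real-analytic near $k_0$ and attains a local extremum there, so $\nabla_k\lambda_m(k_0)=0$; differentiating $P(k,\mu(k))\equiv 0$ in $k$ and evaluating at $k_0$ then gives $\nabla_k P(k_0,\lambda_{*})=-\partial_\lambda P(k_0,\lambda_{*})\,\nabla_k\mu(k_0)=0$.

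I expect the only genuinely delicate point to be the identification $\mu\equiv\lambda_m$ near $k_0$ in the simple case, which rests on continuity of the ordered eigenvalues and on the spectral gap created by simplicity; the potentially troublesome degenerate case $r\ge 2$, where the ordered eigenvalue functions may fail to be differentiable, is disposed of painlessly by the adjugate identity and needs no eigenvalue perturbation theory.
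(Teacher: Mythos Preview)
Your proof is correct. The simple case $r=1$ is essentially identical to the paper's Case~1: implicit function theorem, identify the analytic branch with $\lambda_m$, use that an extremum of a differentiable function has vanishing gradient, and differentiate $P(k,\mu(k))\equiv0$.

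The degenerate case $r\ge 2$ is where you genuinely diverge from the paper. The paper fixes all but one coordinate and invokes Kato--Rellich one-parameter analytic perturbation theory to obtain $m_1$ analytic eigenvalue branches $\tilde\lambda_l(k_1)$ through $\lambda_*$, factors $P(k_1,k_0^2,\ldots,k_0^d,\lambda_*)=T(k_1)\prod_{l=1}^{m_1}(\tilde\lambda_l(k_1)-\lambda_*)$, and observes that a product with at least two vanishing factors has zero derivative. Your route via Jacobi's formula $\partial_{k_j}P=\operatorname{tr}\bigl(\operatorname{adj}(D-\lambda I)\,\partial_{k_j}D\bigr)$ and $\operatorname{rank}(D(k_0)-\lambda_* I)\le Q-2\Rightarrow\operatorname{adj}=0$ is more elementary: it is pure linear algebra, needs no analytic perturbation theory, and treats all coordinate directions at once. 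Both arguments correctly dispense with the extremum hypothesis in this case.

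One small expository point: your appeal to Remark~\ref{relast} is misplaced. That remark concerns the proofs of Theorems~\ref{thmex2} and~\ref{thmex3} and says only that square-freeness of $\mathcal P_1$ (rather than full irreducibility) suffices there; it is not a remark about weakening the hypotheses of Theorem~\ref{thmextrem}. Your observation that the degenerate case needs no extremum assumption is correct and worth noting, but it is not what Remark~\ref{relast} is about.
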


Recall that a point $x$ of an analytic set $\Omega$ 
is called a regular point if there is a neighborhood 
$U$ of $x$
such that 
$U\cap \Omega$
is an   analytic manifold. Any other point is called a singular point.

By  Theorems \ref{gcf} and  \ref{thm2}, one has that for any fixed $\lambda$, 
$\mathcal{P}(z,\lambda)$ (${P}(k,\lambda)$) is  a minimal defining function  (see p.27 in \cite{chi} for the precise definition)  of  $\mathcal{F}_\lambda (V)$ ($F_{\lambda}(V)$).
Therefore, Theorem \ref{thmextrem} implies (see p.27 in \cite{chi})
\begin{corollary}
			Let $\lambda_{*}$ be an extremum of a band function $\lambda_m(k)$, $k\in\R^d$, for some $m=1,2,\cdots,Q$.
			Then  $
			\{k\in \R^d: \lambda_m(k)=\lambda_{*}\} 
		$ is a subset of  singular points of the  Fermi variety $F_{\lambda_*}(V)$.
\end{corollary}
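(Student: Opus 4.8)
The plan is to work directly with the determinant $P(k,\lambda)=\det(D(k)-\lambda I)$ and to distinguish two cases according to the multiplicity of $\lambda_*$ as an eigenvalue of $D(k_0)$, exploiting that for real $k$ the matrix $D(k)$ is Hermitian (because $V$ is real-valued and $|e^{2\pi i k_j}|=1$; see Section~\ref{Sdis}). Fix $k_0\in\R^d$ with $\lambda_m(k_0)=\lambda_*$. The inclusion $\{k:\lambda_m(k)=\lambda_*\}\subset\{k:P(k,\lambda_*)=0\}$ is immediate from \eqref{g11}, since $(k_0,\lambda_*)\in B(V)$; what must be shown is $\nabla_k P(k_0,\lambda_*)=0$. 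Throughout I would use Jacobi's formula $\partial_{k_j}P(k,\lambda)=\operatorname{tr}\!\big(\operatorname{adj}(D(k)-\lambda I)\,\partial_{k_j}D(k)\big)$, where $\operatorname{adj}$ is the adjugate matrix.

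First I would dispose of the case where $\lambda_*$ is a multiple root of $P(k_0,\cdot)$, equivalently $\partial_\lambda P(k_0,\lambda_*)=0$. Here Hermiticity of $D(k_0)$ forces the algebraic and geometric multiplicities to agree, so $\operatorname{rank}(D(k_0)-\lambda_*I)\le Q-2$; consequently every $(Q-1)\times(Q-1)$ minor of $D(k_0)-\lambda_*I$ vanishes, i.e.\ $\operatorname{adj}(D(k_0)-\lambda_*I)=0$, and Jacobi's formula gives $\nabla_k P(k_0,\lambda_*)=0$ at once. Note that the extremum hypothesis plays no role in this case.

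It then remains to handle the case where $\lambda_*$ is a simple eigenvalue of $D(k_0)$. Here I would invoke analytic perturbation theory: near $k_0$ there is a unique eigenvalue branch of $D(k)$ close to $\lambda_*$, real-analytic in $k$, and by continuity of the ordered band functions it must coincide with $\lambda_m$. Since $\lambda_*$ is an extremum of $\lambda_m$ — so that $\lambda_m-\lambda_*$ does not change sign and $k_0$ is a local extremum of the analytic function $\lambda_m$ — one gets $\nabla_k\lambda_m(k_0)=0$. Since also $\partial_\lambda P(k_0,\lambda_*)\neq 0$, the implicit function theorem presents $B(V)$ near $(k_0,\lambda_*)$ as the graph $\lambda=\lambda_m(k)$, whence $\nabla_k P(k_0,\lambda_*)=-\partial_\lambda P(k_0,\lambda_*)\,\nabla_k\lambda_m(k_0)=0$. (Alternatively, one can combine Jacobi's formula with the identity $\operatorname{adj}(D(k_0)-\lambda_*I)=\big(\prod_{l\neq m}(\lambda_l(k_0)-\lambda_*)\big)\psi\psi^{*}$, where $\psi$ is a unit eigenvector, together with the Feynman--Hellmann identity $\partial_{k_j}\lambda_m(k_0)=\langle\psi,\partial_{k_j}D(k_0)\psi\rangle$.)

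I expect the only delicate point to lie in the simple-eigenvalue case — not the algebra, but carefully justifying the regularity of the ordered band function $\lambda_m$ near $k_0$ and that ``$\lambda_*$ is an extremum of $\lambda_m$'' genuinely makes $k_0$ a critical point of $\lambda_m$. The multiplicity dichotomy is precisely what lets the argument sidestep differentiating $\lambda_m$ at an eigenvalue crossing — where it need not be differentiable — replacing that step by the purely structural observation that Hermiticity annihilates the adjugate.
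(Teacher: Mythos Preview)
Your argument correctly establishes the key inclusion of Theorem~\ref{thmextrem},
\[
\{k\in\R^d:\lambda_m(k)=\lambda_*\}\ \subset\ \{k\in\R^d:P(k,\lambda_*)=0,\ \nabla_kP(k,\lambda_*)=0\},
\]
and your treatment of the multiplicity $\ge 2$ case is different from the paper's and arguably cleaner. The paper restricts to a single coordinate direction, invokes Kato--Rellich to obtain $m_1$ analytic one-variable branches $\tilde\lambda_l(k_1)$ with $\tilde\lambda_l(k_0^1)=\lambda_*$, and differentiates the factorization $P(k_1,k_0^2,\dots,k_0^d,\lambda_*)=T(k_1)\prod_{l=1}^{m_1}(\tilde\lambda_l(k_1)-\lambda_*)$. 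Your Jacobi--adjugate argument, using Hermiticity of $D(k_0)$ to force $\operatorname{rank}(D(k_0)-\lambda_*I)\le Q-2$ and hence $\operatorname{adj}(D(k_0)-\lambda_*I)=0$, avoids perturbation theory entirely and yields all partial derivatives at once. The simple-eigenvalue case is handled essentially the same way in both.

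There is, however, a step you have skipped in passing from $\nabla_kP(k_0,\lambda_*)=0$ to ``$k_0$ is a singular point of $F_{\lambda_*}(V)$.'' Vanishing of $P$ and $\nabla_kP$ at $k_0$ does not by itself imply singularity of the analytic set $\{P(\cdot,\lambda_*)=0\}$: if $P(\cdot,\lambda_*)$ had a repeated irreducible factor, its gradient would vanish along an entire smooth component. One must know that $P(\cdot,\lambda_*)$ is a \emph{minimal defining function} (equivalently, square-free) for $F_{\lambda_*}(V)$. The paper obtains this from the irreducibility Theorems~\ref{gcf} and~\ref{thm2} (even in the exceptional case $d=2$, $\lambda_*=[V]$, the two factors are distinct, so $\mathcal{P}$ remains square-free), and then invokes the standard fact (p.~27 in \cite{chi}) that for a minimal defining function the singular locus is precisely $\{P=0,\ \nabla P=0\}$. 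You should make this bridge explicit; without it the statement about singular points is not yet proved.
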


	The last topic we are going to  discuss is the  existence of  embedded eigenvalues  for  perturbed discrete  periodic operators
	\eqref{gh}.

	For $d=1$, the  existence/absence  of  embedded eigenvalues   has been understood very well ~\cite{ld,krs,ns12,kn07,rof64,liucriteria}.
	Problems of the existence of embedded eigenvalues  in  higher dimensions   are  a lot more complicated. The techniques of the  generalized Pr\"ufer transformation and oscillated integrals developed  for $d=1$  are not available.
	

	In ~\cite{kvcpde20}, Kuchment and Vainberg introduced  a    new approach to study the embedded  eigenvalue problem  for perturbed    periodic operators.
	It employs the analytic structure of the Fermi variety, unique continuation results, and techniques of several complex  variables theory.

	{\bf Condition 1:}  Given  $\lambda\in\bigcup(a_m,b_m)$, we say that  $\lambda$   satisfies Condition 1 if   any irreducible component of the   Fermi variety $F_{\lambda} (V)$ contains an open   analytic hypersurface   of dimension $d-1$ in $\R^d$.

	\begin{theorem}\label{thm1kv}~\cite{kvcpde20}
		Let $d=2,3$,  and $H_0$ and $H$  be  continuous versions of  \eqref{h0} and \eqref{gh} respectively.
		Assume that 	there exist constants $C>0$ and $\gamma>4/3$ such that
		\begin{equation*}
		|v(x)|\leq Ce^{-|x|^{\gamma}}.
		\end{equation*}
		Assume  Condition 1 for some  $\lambda\in \bigcup (a_m, b_m)$.
		Then this $\lambda $ can not be an eigenvalue of $H=-\Delta+V+v$.
	\end{theorem}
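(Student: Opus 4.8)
The plan is to follow the strategy of Kuchment and Vainberg \cite{kvcpde20}: argue by contradiction, push the eigenvalue equation through the Floquet transform, exploit the analytic structure of the Fermi variety to force any putative eigenfunction to decay super-exponentially, and then close with a quantitative unique continuation theorem.

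First I would suppose that $\lambda\in\bigcup_m(a_m,b_m)$ is an eigenvalue of $H=H_0+v$ with eigenfunction $0\neq u\in L^2(\R^d)$, and rewrite the equation as $(H_0-\lambda)u=f$ with $f:=-vu$. From $|v(x)|\le Ce^{-|x|^{\gamma}}$ and $u\in L^2$ one gets that $f$ decays super-exponentially in the localized $L^2$ sense, $\|f\|_{L^2(\{|x|>R\})}\le Ce^{-R^{\gamma}}\|u\|_{L^2}$. Next I would apply the Floquet--Gelfand transform, which conjugates $H_0$ to $\int^{\oplus}_{\T^d}H_0(k)\,dk$, a direct integral of operators on a fixed period cell with $k$-quasiperiodic boundary conditions, so that the equation becomes $(H_0(k)-\lambda)\widehat u(k)=\widehat f(k)$ for a.e.\ real $k$. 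Since $f$ decays super-exponentially, a Paley--Wiener--Schwartz type estimate shows that $\widehat f(\cdot)$ extends to an entire $L^2(\mathrm{cell})$-valued function on $\C^d$ of finite order $\rho=\gamma/(\gamma-1)$; the hypothesis $\gamma>4/3$ is exactly what forces $\rho<4$.

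The core step, and what I expect to be the main obstacle, is to show that $\widehat u(\cdot)$ also continues to an entire function on $\C^d$ of finite order, with the exponents arranged so that $\gamma>4/3$ survives. Off the Fermi variety $F_{\lambda}(V)$ one has $\widehat u(k)=(H_0(k)-\lambda)^{-1}\widehat f(k)$, which by analytic Fredholm theory (each $H_0(k)$ has compact resolvent) is meromorphic in $k$ with poles only along $F_{\lambda}(V)$. Because $u\in L^2$, the function $\widehat u$ lies in $L^2$ of the Brillouin zone and so cannot have true poles along the real points of $F_{\lambda}(V)$: at a real regular point of $F_{\lambda}(V)$ this forces the holomorphic numerator of $\widehat u$ (the residue term $R(k)\widehat f(k)$) to vanish, hence to vanish on the local real $(d-1)$-dimensional analytic hypersurface supplied by Condition 1. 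Such a hypersurface is a uniqueness set for holomorphic functions on the $(d-1)$-dimensional complex irreducible component of $F_{\lambda}(V)$ that contains it, so the numerator vanishes on that whole component, and, Condition 1 being imposed on every component, on all of $F_{\lambda}(V)$. Hence $\widehat u$ continues holomorphically across $F_{\lambda}(V)$, i.e.\ to an entire function on $\C^d$; a quantitative version of this removal of singularities, balancing the blow-up rate of $(H_0(k)-\lambda)^{-1}$ near $F_{\lambda}(V)$ against the order-$\rho$ decay of $\widehat f$, keeps the order below $4$. Inverting the Floquet transform, this entire $\widehat u$ forces $u$ to decay faster than $e^{-c|x|^{\sigma}}$ for some $\sigma>4/3$. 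Finally $u$ solves $(-\Delta+W)u=0$ with $W:=V+v-\lambda\in L^{\infty}(\R^d)$, so a Landis--Meshkov type unique continuation theorem, according to which a solution of a Schr\"odinger equation with bounded (possibly complex-valued) potential that decays faster than $e^{-c|x|^{4/3}}$ must vanish identically, gives $u\equiv0$, contradicting that $u$ is an eigenfunction.

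I expect the quantitative form of the middle step to be the real difficulty. Upgrading ``$\widehat u\in L^2$ on the real torus'' to ``$\widehat u$ entire on $\C^d$ of order $<4$'' requires two ingredients: the fact that the numerator, once known to vanish on an open real piece of each component of $F_{\lambda}(V)$, vanishes on the entire complex Fermi variety --- which follows from Condition 1 together with the irreducibility of the individual components of $F_{\lambda}(V)$ --- and a delicate comparison of how fast the resolvent $(H_0(k)-\lambda)^{-1}$ blows up near $F_{\lambda}(V)$ with the super-exponential decay of $\widehat f$, which is where the numerical threshold $\gamma>4/3$ enters. Additional care is needed at the singular locus of $F_{\lambda}(V)$ and in the case where $\lambda$ is an extremum of some band function (so that part of the real locus of $F_{\lambda}(V)$ is everywhere critical); here one leans on Condition 1 to guarantee a genuine regular $(d-1)$-dimensional real hypersurface inside each irreducible component.
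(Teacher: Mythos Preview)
The paper does not give its own proof of Theorem~\ref{thm1kv}: it is quoted verbatim from Kuchment--Vainberg~\cite{kvcpde20} as background, and the paper instead proves the discrete analogue, Theorem~\ref{thm1}, in Section~\ref{Semb}. Your outline is precisely the Kuchment--Vainberg scheme and matches the paper's proof of the discrete version step for step: contradiction, Floquet transform, Paley--Wiener growth for $\widehat f$, Cramer's rule $\widehat u=\tilde S\widehat f/\tilde P$, the removable-singularities Lemma~17 of~\cite{kvcpde20} (Theorem~\ref{thmentire} here) using Condition~1, order preservation under division (Lemma~\ref{thmgrowth}), inverse Paley--Wiener, and finally Meshkov's unique continuation estimate.

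One clarification on where the threshold $\gamma>4/3$ actually enters. In your middle step you describe it as arising from ``balancing the blow-up rate of $(H_0(k)-\lambda)^{-1}$ near $F_\lambda(V)$ against the order-$\rho$ decay of $\widehat f$''. That is not how the argument runs: the removable-singularities step (Theorem~\ref{thmentire}) needs only $\widehat u\in L^2_{\mathrm{loc}}$ on the real torus and Condition~1, with no quantitative blow-up estimate on the resolvent; and the subsequent order bound comes from the general fact (Lemma~\ref{thmgrowth}) that a quotient of entire functions of order $\rho$ that is itself entire still has order $\le\rho$. The order of $\widehat u$ is therefore at most $\rho=\gamma/(\gamma-1)$, which gives decay of $u$ faster than $e^{-|x|^{w}}$ for every $w<\gamma$. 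The number $4/3$ appears \emph{only} at the very last step, from the Meshkov unique continuation theorem~\eqref{gcon43}--\eqref{gcon43w}, which forbids nontrivial bounded-potential solutions decaying faster than $e^{-C|x|^{4/3}}$; requiring $\gamma>4/3$ is exactly what makes the contradiction fire. In the paper's discrete Theorem~\ref{thm1} the same architecture yields the sharper threshold $\gamma>1$ because the discrete unique continuation Lemma~\ref{lecontinuation} has critical exponent~$1$.
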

	
	For $\lambda$ in the interior of a spectral band, 	the irreducibility of the Fermi variety  $F_{\lambda}(V)$   implies   Condition 1 for this $\lambda$. See Lemma   \ref{led-1}.
	The restriction on $d=2,3$  
	and the critical exponent  $4/3$ arise  from a quantitative unique continuation result. 
	Suppose $u$ is a solution of 
	\begin{equation*}
	-\Delta u+\tilde{V}u=0 \text{ in } \R^d,
	\end{equation*}
	where    $|\tilde{V}|\leq C$, $|u|\leq C$ and $u(0)=1$.
	From the unique continuation
	principle,  $u$ cannot vanish identically on any open set.
	The quantitative result states ~\cite{BK05}
	\begin{equation}\label{gcon43}
	\inf_{|x_0|=R}  \sup_{|x-x_0|\leq 1}|u(x)|  \geq  e^{-CR^{4 / 3} \log R}.
	\end{equation}

		A   similar   version of \eqref{gcon43} was established  in ~\cite{Mesh92}  (also see Remark 2.6 in ~\cite{fhh}), namely, 
	there is  no non-trivial  solution of $(-\Delta +\tilde{V})u=0$ such that 
	\begin{equation}\label{gcon43w}
		|u(x)| \leq  e^{-C|x|^{4 / 3}} \text{ for any } C>0.
	\end{equation}


	For complex potentials $\tilde{V}$,
	the critical exponent  $ 4/3$  in  \eqref{gcon43} and \eqref{gcon43w} is optimal  in view of the Meshkov's
	example ~\cite{Mesh92}. It  has been conjectured (referred to as Landis' conjecture, which is still open for $d\geq3$) that the critical exponent is 1 for real potentials.  See ~\cite{ksw15,dkw19,lmnn} and references therein for the recent progress of the Landis' conjecture.
	However, the   unique  continuation principle for discrete Laplacians is well known not to  hold (see e.g., ~\cite{ku911,Jf07}).  
	This issue turns out to be the   obstruction to generalize Kuchment-Vainberg's approach to discrete 
	periodic  Schr\"odinger operators ~\cite{private}. 
	
	Fortunately, we realize that   a    weak  unique continuation result  is sufficient   for Kuchment-Vainberg's arguments 
	in  ~\cite{kvcpde20}.   Such a  unique continuation result is not difficult to    establish  for discrete Schr\"odinger operators on $\Z^d$. 
	Actually, the critical  component  can be  improved from ``$4/3$" to ``1". Therefore,
	we are able to establish the discrete version of Theorem \ref{thm1kv} for any dimension.

	\begin{theorem}\label{thm1}
		Assume $V$ is a real valued periodic function. 
		Let $d\geq 2$, $H_0$ and  $H$ be given by \eqref{h0} and \eqref{gh} respectively.
		Assume that 
		there exist constants $C>0$ and $\gamma>1$ such that
		\begin{equation}\label{ggdecay}
		|v(n)|\leq Ce^{-|n|^{\gamma}}.
		\end{equation}
		Assume  Condition 1  for some $\lambda\in \bigcup_{m=1}^{Q}(a_m,b_m)$.
		Then this $\lambda $ can not be an eigenvalue of $H=-\Delta+V+v$.
	\end{theorem}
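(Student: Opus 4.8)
The plan is to run the Kuchment--Vainberg scheme behind Theorem \ref{thm1kv}, with the quantitative unique continuation bound \eqref{gcon43} for the continuous Laplacian---the source of both the exponent $4/3$ and the restriction $d\le3$---replaced by an elementary unique continuation statement for the discrete Laplacian that costs only faster-than-exponential decay and holds in every dimension. Suppose, to get a contradiction, that $\psi\in\ell^2(\Z^d)$ is a non-zero solution of $H\psi=\lambda\psi$ with $\lambda\in(a_m,b_m)$ for some $m$ and $\lambda$ satisfying Condition 1. Set $f:=-v\psi$, so that $(H_0-\lambda)\psi=f$; since $\|\psi\|_{\ell^\infty}\le\|\psi\|_{\ell^2}$, the hypothesis \eqref{ggdecay} gives $|f(n)|\le Ce^{-|n|^\gamma}$ with $\gamma>1$, hence the Floquet transform of $f$ extends from $\T^d$ to a $\C^Q$-valued function holomorphic on all of $(\C^\star)^d$. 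In the fiber decomposition the identity $(H_0-\lambda)\psi=f$ becomes $(\mathcal{D}(z)-\lambda I)\hat\psi(z)=\hat f(z)$ for $z\in\T^d$, where $\hat\psi\in L^2(\T^d;\C^Q)$, and Cramer's rule yields $\mathcal{P}(z,\lambda)\,\hat\psi(z)=\mathrm{adj}(\mathcal{D}(z)-\lambda I)\,\hat f(z)$ on $\T^d$.

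The crucial step is to show that $\mathrm{adj}(\mathcal{D}(z)-\lambda I)\hat f(z)$ vanishes identically on the Floquet variety $\mathcal{F}_{\lambda}(V)=\{\mathcal{P}(z,\lambda)=0\}$. Fix a smooth point $z_0$ of $\mathcal{F}_{\lambda}(V)$ lying on $\T^d$ with $\dim\ker(\mathcal{D}(z_0)-\lambda I)=1$; since $V$ is real, $\mathcal{D}(z)$ is Hermitian on $\T^d$, and near $z_0$ the operator $(\mathcal{D}(z)-\lambda I)^{-1}$ has a simple pole along the hypersurface $\mathcal{F}_{\lambda}(V)$ with residue the orthogonal projection onto that kernel. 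As a simple pole is not locally square-integrable across a real hypersurface, the requirement $\hat\psi\in L^2$ forces the numerator to vanish, i.e.\ $\hat f(z)\perp\ker(\mathcal{D}(z)-\lambda I)$, equivalently $\mathrm{adj}(\mathcal{D}(z)-\lambda I)\hat f(z)=0$, on a dense open subset of $\mathcal{F}_{\lambda}(V)\cap\T^d$. Now Condition 1 provides, in every irreducible component $W$ of $\mathcal{F}_{\lambda}(V)$, an open $(d-1)$-dimensional real-analytic hypersurface $S$; the singular points of $\mathcal{F}_{\lambda}(V)$ and the points where the kernel has dimension $\ge2$ form a proper analytic subset of $S$, so the components of $\mathrm{adj}(\mathcal{D}(z)-\lambda I)\hat f(z)$---holomorphic on $(\C^\star)^d$---vanish on an open piece of $S$. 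Because the real points of a complex-analytic variety $X$ satisfy $\dim_\R(X\cap\R^d)\le\dim_\C X$, the smallest complex-analytic subvariety of $W$ containing $S$ is $W$ itself; hence these holomorphic functions vanish throughout $W$, and, letting $W$ run over all components, throughout $\mathcal{F}_{\lambda}(V)$.

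Granting this, divide through by $\mathcal{P}$. By Theorems \ref{gcf} and \ref{thm2} the Laurent polynomial $\mathcal{P}(z,\lambda)$ is square-free---it is irreducible, or (only when $d=2$, $\lambda=[V]$) a product of two distinct irreducible factors each of multiplicity one---so the ideal it generates is radical at every point, and therefore $\hat\psi(z)=\mathrm{adj}(\mathcal{D}(z)-\lambda I)\hat f(z)/\mathcal{P}(z,\lambda)$ extends to a function holomorphic on all of $(\C^\star)^d$. A Laurent series that converges on all of $(\C^\star)^d$ has coefficients decaying faster than every exponential, so $|\psi(n)|\le C_R R^{-\|n\|_1}$ for every $R\ge1$. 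Finally, $\psi$ solves $(-\Delta+\tilde V)\psi=0$ on $\Z^d$ with $\tilde V:=V+v-\lambda$ bounded; splitting $\Z^d=\Z\times\Z^{d-1}$, the relation reads $\psi(m+1,\cdot)=(\tilde V_m-\Delta_{d-1})\psi(m,\cdot)-\psi(m-1,\cdot)$, whose transfer operators on $\ell^2(\Z^{d-1})\oplus\ell^2(\Z^{d-1})$ are uniformly bounded with uniformly bounded inverses; hence $\|\psi(m,\cdot)\|_{\ell^2(\Z^{d-1})}$ cannot decay faster than exponentially in $m$ unless $\psi(0,\cdot)=\psi(1,\cdot)=0$, and in the latter case the recursion propagates $\psi\equiv0$. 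Either way $\psi\equiv0$, contradicting $\psi\neq0$; this elementary unique continuation is exactly the discrete substitute for \eqref{gcon43} that lowers the exponent to $1$ and removes the dimension restriction.

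I expect the main obstacle to be the vanishing claim of the second paragraph: converting the mere membership $\hat\psi\in L^2(\T^d)$ into genuine pointwise orthogonality of $\hat f$ to the Bloch kernel along the Fermi surface (which requires a careful description of the locus inside $S$ where $\mathcal{F}_{\lambda}(V)$ is singular or the kernel jumps), and then the analytic continuation from the real hypersurface $S$ to the whole complex component $W$. Everything else is either already in hand---the irreducibility and square-freeness (Theorems \ref{gcf}, \ref{thm2}), together with the passage from irreducibility to Condition 1 for $\lambda$ interior to a band (Lemma \ref{led-1}), which also lets Theorem \ref{thm1} be restated as absence of embedded eigenvalues under only the decay hypothesis on $v$---or elementary (the discrete unique continuation).
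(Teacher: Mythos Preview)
Your proposal is correct and follows the same Kuchment--Vainberg strategy as the paper, but the two implementations differ in a way worth recording. The paper's proof quotes the entire-extension lemma (Theorem~\ref{thmentire}, i.e.\ \cite[Lemma~17]{kvcpde20}) as a black box: once $\hat u\in L^2$ and Condition~1 holds, the quotient $\tilde S\hat\psi/\tilde P$ is automatically entire, with no appeal to square-freeness of $\mathcal{P}$. Your argument instead re-derives this by hand---show the numerator $\mathrm{adj}(\mathcal D-\lambda)\hat f$ vanishes along $\mathcal F_\lambda(V)$ via the simple-pole/$L^2$ mechanism, then divide---and for the division you invoke square-freeness of $\mathcal P$ from Theorems~\ref{gcf} and~\ref{thm2}. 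That is logically fine here (those theorems are already available), but it makes your proof of Theorem~\ref{thm1} depend on the irreducibility results, whereas in the paper Theorem~\ref{thm1} stands on Condition~1 alone and the irreducibility enters only later, in deducing Theorem~\ref{thm1new}. A second difference: to pass from ``$\hat\psi$ extends holomorphically'' to ``$\psi$ decays fast,'' the paper works in the $k$-variables and combines Paley--Wiener-type bounds (Lemma~\ref{leande}) with an order-of-growth lemma for quotients of entire functions (Lemma~\ref{thmgrowth}) to get $|u(n)|\le e^{-|n|^w}$ for all $w<\gamma$; you work in the $z$-variables and read off $|\psi(n)|\le C_R R^{-\|n\|_1}$ for every $R$ directly from Cauchy estimates on polytori, which is slicker and yields a slightly stronger decay. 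The unique-continuation endgame is the same: your transfer-operator argument on $\ell^2(\Z^{d-1})$ is exactly the content of the paper's Lemma~\ref{lecontinuation}. The concern you flag about the vanishing step is real but manageable in this setting: square-freeness guarantees the singular locus of each component has complex dimension $\le d-2$, so it cannot swallow the real $(d{-}1)$-dimensional piece $S$ supplied by Condition~1, and on the locus where the kernel has dimension $\ge 2$ the adjugate vanishes identically anyway.
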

	
	\begin{remark}
		
		\begin{itemize}
			\item It is well known that for general periodic graphs even compactly supported solutions can exist (see e.g.~\cite{ku911}).
			\item  It is known that a  compactly supported  perturbation  of the operator on a graph might have  an embedded eigenvalue.
			If this case happens,  under  the assumption on    irreducibility of  the Fermi variety, 
			Kuchment and Vainberg  proved that  the corresponding eigenfunction is compactly supported (invalid the unique continuation) ~\cite{kv06cmp}. Shipman provided  examples of periodic graph operators with  unbounded support eigenfunctions 
			for embedded eigenvalues (the Fermi variety is reducible at every energy level) ~\cite{shi1}. 
			
		\end{itemize}
		
	\end{remark}

	
	Assume that $V$ is zero, which  can be viewed as  a  $\Gamma$-periodic function for any  $\Gamma$. 
	Denote by 
	$[a_m,b_m]$, $m=1,2,\cdots, Q$,  the  spectral bands  of $-\Delta$. Clearly,
	$${\bigcup }_{m = 1}^{Q}[a_m,b_m]=\sigma(-\Delta)=[-2d,2d].$$
	
	\begin{lemma}~\cite[Lemmas 1.2 and 1.3]{hj18}\label{Enot0}
		Let $d\geq 2$.
		Then
		\begin{itemize}
			\item for any  $\lambda\in (-2d, 2d)\setminus \{0\}$,  $\lambda \in  (a_m,b_m)$ for some $1\leq m\leq Q$,
			\item  	if  at least one of $q_j$'s is odd, then $0\in (a_m,b_m)$ for some $1\leq m \leq Q$.
		\end{itemize}
	\end{lemma}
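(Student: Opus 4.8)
The plan is to reduce the statement to a Fourier‑coefficient criterion for the sublevel sets of the symbol of $-\Delta$. Since $V\equiv0$, the finite Fourier transform on the period cell $W$ diagonalizes $\mathcal D(e^{2\pi ik})$, whose $Q$ eigenvalues are the ``sheets''
\[
S_{\boldsymbol\ell}(k)=-2\sum_{j=1}^d\cos\frac{2\pi(k_j+\ell_j)}{q_j},\qquad 0\le\ell_j\le q_j-1 ,
\]
so that, after sorting, one obtains the band functions $\lambda_m$ of \eqref{gband}. Writing $\Phi(\phi)=-2\sum_{j=1}^d\cos2\pi\phi_j$ on $\T^d=\R^d/\Z^d$, $\Omega_{\lambda_0}=\{\phi:\Phi(\phi)<\lambda_0\}$, and $L=\tfrac1{q_1}\Z/\Z\times\cdots\times\tfrac1{q_d}\Z/\Z$, the eigenvalue counting function $k\mapsto\#\{m:\lambda_m(k)<\lambda_0\}$ is exactly $f(\phi):=\#\bigl((\phi+L)\cap\Omega_{\lambda_0}\bigr)$ with $\phi=(k_1/q_1,\dots,k_d/q_d)$; moreover $\lambda_0\in\bigcup_m(a_m,b_m)$ if and only if $f$ is \emph{not} a.e.\ constant on $\T^d$ (if $f$ jumps between two points at which no $\lambda_m$ equals $\lambda_0$, a band function crosses $\lambda_0$ strictly, and conversely). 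Since $f$ is precisely the $L$‑orbit sum of $\mathbf 1_{\Omega_{\lambda_0}}$, its Fourier coefficients are $\widehat f(n)=Q\,\widehat{\mathbf 1_{\Omega_{\lambda_0}}}(n)$ when $q_j\mid n_j$ for all $j$ and $0$ otherwise; hence the criterion becomes
\[
\lambda_0\in\bigcup_m(a_m,b_m)\iff \widehat{\mathbf 1_{\Omega_{\lambda_0}}}(n)\ne0\ \text{ for some }\ n\in\bigl(q_1\Z\times\cdots\times q_d\Z\bigr)\setminus\{0\}.
\]

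For $\lambda_0\in(-2d,2d)\setminus\{0\}$ I would argue as follows. The integrated density of states $\mathcal N(\lambda_0)=|\Omega_{\lambda_0}|$ is continuous and strictly increasing on $[-2d,2d]$, so $\mathcal N(\lambda_0)\notin\tfrac1Q\Z$ for all but finitely many $\lambda_0$; for such $\lambda_0$ the average of $f$ is not an integer, $f$ cannot be a.e.\ constant, and we are done. For the finitely many remaining values I would evaluate the coefficient at $n=q_j\mathbf e_j$: Fubini in $\phi_j$ gives
\[
\widehat{\mathbf 1_{\Omega_{\lambda_0}}}(q_j\mathbf e_j)=\frac1{\pi q_j}\int \sqrt{1-w^2}\,U_{q_j-1}(w)\,d\mu_j(w),
\]
where $U$ is the Chebyshev polynomial of the second kind, $w=-\tfrac12\lambda_0-\sum_{i\ne j}\cos2\pi\phi_i$ runs through $(-1,1)$, and $\mu_j$ is the (explicit, real‑analytic) push‑forward of Lebesgue measure on the remaining torus; since $\mu_j$ is \emph{not} symmetric about $w=0$ when $\lambda_0\ne0$, showing this one‑variable integral is non‑zero is a direct estimate, and if for some exceptional $\lambda_0$ it happened to vanish one simply passes to $n=2q_j\mathbf e_j$, etc.\ (the coefficients cannot all vanish, by the criterion).

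The value $\lambda_0=0$ is where the parity hypothesis enters, and it is sharp. The identity $\Phi(\phi+\tfrac12\mathbf 1)=-\Phi(\phi)$ (with $\mathbf 1=(1,\dots,1)$) gives $\mathbf 1_{\Omega_0}(\phi+\tfrac12\mathbf 1)=1-\mathbf 1_{\Omega_0}(\phi)$ a.e., hence $\widehat{\mathbf 1_{\Omega_0}}(n)=-(-1)^{n_1+\cdots+n_d}\widehat{\mathbf 1_{\Omega_0}}(n)$ for $n\ne0$. If every $q_j$ is even, every $n\in q_1\Z\times\cdots\times q_d\Z$ has all coordinates even, so $\sum_i n_i$ is even and every relevant coefficient vanishes; thus $f\equiv Q/2$ a.e.\ and $0$ is a common band edge. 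If instead some $q_j$ is odd, take that $n=q_j\mathbf e_j$: now the weight $\sqrt{1-w^2}$ cancels against the arcsine‑type density and the push‑forward measure is \emph{even}, so the coefficient is (up to a non‑zero factor) $\int U_{q_j-1}(w)\,\rho(w)\,dw$ with $\rho$ even; this vanishes by parity when $q_j$ is even but is non‑zero when $q_j$ is odd — for $d=2$ it is literally $\tfrac1\pi\int_{-1}^1U_{q_j-1}(w)\,dw=\tfrac{2}{\pi q_j}$ — so $0\in\bigcup_m(a_m,b_m)$.

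The main obstacle is the last quantitative point: establishing the non‑vanishing of these Chebyshev‑type integrals for the finitely many ``resonant'' $\lambda_0$ (those with $\mathcal N(\lambda_0)\in\tfrac1Q\Z$), and in dimensions $d\ge3$ the positivity of $\int U_{q_j-1}\rho$ against the arcsine‑convolution density $\rho$ when $q_j$ is odd. The exceptional case ($\lambda_0=0$, all $q_j$ even) shows no softer argument can avoid some computation here, and it pins down exactly why the average of the potential is singled out; the rest of the proof is bookkeeping with the diagonalization above.
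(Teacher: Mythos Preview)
The paper does not prove this lemma; it is quoted verbatim from Han--Jitomirskaya \cite{hj18}. So there is no in-paper argument to compare against, and your proposal stands or falls on its own merits.

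Your Fourier set-up is sound: the diagonalisation, the identification of the counting function $f$ with the $L$-orbit sum of $\mathbf 1_{\Omega_{\lambda_0}}$, and the criterion $\lambda_0\in\bigcup_m(a_m,b_m)\iff f$ is not a.e.\ constant are all correct, as is the observation that $\widehat f(n)=Q\,\widehat{\mathbf 1_{\Omega_{\lambda_0}}}(n)$ on the sublattice and vanishes elsewhere. The symmetry argument at $\lambda_0=0$ correctly explains why the parity hypothesis is sharp, and your $d=2$ computation with $U_{q_j-1}$ is right up to harmless constants.

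The genuine gap is the step you yourself flag, but it is worse than you present it. For the finitely many $\lambda_0\neq0$ with $\mathcal N(\lambda_0)\in\tfrac1Q\Z$ you write ``if for some exceptional $\lambda_0$ it happened to vanish one simply passes to $n=2q_j\mathbf e_j$, etc.\ (the coefficients cannot all vanish, by the criterion).'' This is circular: the criterion says exactly that $\lambda_0\in\bigcup_m(a_m,b_m)$ \emph{iff} some sublattice coefficient is non-zero, so invoking it to guarantee a non-zero coefficient is assuming the conclusion. Nothing in your proposal rules out that \emph{all} $\widehat{\mathbf 1_{\Omega_{\lambda_0}}}(n)$ with $n\in q_1\Z\times\cdots\times q_d\Z\setminus\{0\}$ vanish for one of those exceptional $\lambda_0$. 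Likewise, for $\lambda_0=0$ and $d\ge3$ you need $\int_{-1}^{1}\sqrt{1-w^2}\,U_{q_j-1}(w)\,\rho(w)\,dw\neq0$ with $\rho$ the $(d-1)$-fold arcsine convolution; for $q_j\ge3$ the Chebyshev factor oscillates, and the asserted positivity is not obvious (nor does ``$\rho$ even'' help beyond parity, which you have already used). These are real obstacles, not bookkeeping; Han--Jitomirskaya's proof avoids them by a different, more direct analysis of how the sheets $S_{\boldsymbol\ell}$ overlap, and you should consult \cite{hj18} for the actual argument.
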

	
	For $d=2$, Lemma  \ref{Enot0}    was  also  proved in ~\cite{ef}.  Based on Lemma \ref{Enot0},  Han and Jitomirskaya   proved 
	the discrete Bethe-Sommerfeld conjecture \cite{hj18}. See   
	~\cite{par,dtcmh} for  the continuous Bethe-Sommerfeld conjecture.
	
	Theorem \ref{thm1new} and Lemma \ref{Enot0} imply 
	\begin{corollary}\label{cor14}
		Assume that there exist some $C>0$ and $\gamma>1$ such that 
		\begin{equation*}
		|v(n)|\leq Ce^{-|n|^{\gamma}}.
		\end{equation*}
		Then 
		$\sigma_{ p}(-\Delta+v)\cap(-2d,2d)=\emptyset$.
	\end{corollary}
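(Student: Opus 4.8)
The plan is to combine Theorem \ref{thm1new} with both bullets of Lemma \ref{Enot0}, treating the potential $V$ in Theorem \ref{thm1new} as the zero periodic potential $\mathbf{0}$. First I would fix a point $\lambda\in(-2d,2d)$ and show it cannot be an eigenvalue of $H=-\Delta+v$. The subtlety is that Lemma \ref{Enot0} as stated only guarantees $\lambda\in(a_m,b_m)$ for \emph{some} choice of the periods $q_j$: for $\lambda\neq 0$ the first bullet works for \emph{any} $\Gamma$, but for $\lambda=0$ the second bullet requires at least one $q_j$ to be odd. So the argument naturally splits into two cases, and in each case one must pick an appropriate periodicity lattice $\Gamma$ with respect to which $\mathbf{0}$ is periodic before invoking Theorem \ref{thm1new}.

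Concretely, I would argue as follows. \emph{Case $\lambda\in(-2d,2d)\setminus\{0\}$:} view $\mathbf 0$ as a $\Gamma$-periodic potential with $\Gamma=\Z^d$ (i.e. all $q_j=1$, whose gcd is trivially $1$). By the first bullet of Lemma \ref{Enot0}, $\lambda\in(a_m,b_m)$ for some $m$, so $\lambda$ lies in the interior of a spectral band of $-\Delta$. Since $v$ satisfies $|v(n)|\le Ce^{-|n|^\gamma}$ with $\gamma>1$, Theorem \ref{thm1new} (applied with $V=\mathbf 0$, which is real and periodic) shows that $H_0+v=-\Delta+v$ has no eigenvalue at $\lambda$. \emph{Case $\lambda=0$:} here I choose $\Gamma=2\Z\oplus\cdots\oplus 2\Z\oplus 3\Z$ (or any lattice with at least one odd period and with the gcd of the periods equal to $1$, e.g. one $q_j=1$ suffices only if that makes $0$ land inside a band—so take, say, $q_1=\dots=q_{d-1}=2$, $q_d=1$, whose gcd is $1$ and which has the odd period $q_d=1$). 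The second bullet of Lemma \ref{Enot0} then gives $0\in(a_m,b_m)$ for some $m$, and again Theorem \ref{thm1new} with $V=\mathbf 0$ rules out $0$ as an eigenvalue of $-\Delta+v$. Combining the two cases, no $\lambda\in(-2d,2d)$ is an eigenvalue of $-\Delta+v$, i.e. $\sigma_p(-\Delta+v)\cap(-2d,2d)=\emptyset$.

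The only point requiring any care is the bookkeeping of periodicity lattices: Theorem \ref{thm1new} is stated for a fixed periodic $V$ and a fixed $\Gamma$ with $\gcd(q_1,\dots,q_d)=1$, and the bands $[a_m,b_m]$ there depend on $\Gamma$ through $Q=q_1\cdots q_d$. But $\sigma(-\Delta)=\bigcup_{m=1}^Q[a_m,b_m]=[-2d,2d]$ regardless of which admissible $\Gamma$ we use to perform the Floquet decomposition, so it is legitimate to pick a different $\Gamma$ for the single excluded energy $\lambda=0$ than for the generic energies. I expect this is essentially the whole proof; there is no serious obstacle, since all the analytic work (the irreducibility input, the unique continuation, and the decay argument) has already been absorbed into Theorem \ref{thm1new}, and Lemma \ref{Enot0} is quoted from \cite{hj18}.
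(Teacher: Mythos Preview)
Your proposal is correct and follows essentially the same route as the paper, which simply states that the corollary is implied by Theorem \ref{thm1new} and Lemma \ref{Enot0}. One small simplification: the case split at $\lambda=0$ is unnecessary, since with $\Gamma=\Z^d$ (all $q_j=1$) one has $Q=1$, the single band is $[a_1,b_1]=[-2d,2d]$, and every $\lambda\in(-2d,2d)$ (including $0$) already lies in $(a_1,b_1)$; alternatively, $q_j=1$ is odd, so the second bullet of Lemma \ref{Enot0} applies with this same lattice.
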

	\begin{remark}
		Under a stronger  assumption that $v$ has compact support,   Isozaki and Morioka  proved that $\sigma_{ p}(-\Delta+v)\cap(-2d,2d)=\emptyset$ ~\cite{IM14}.
	\end{remark}
	The rest of this paper is organized as follows. The proof of Theorems \ref{gcf},  \ref{thm2} and \ref{thmbv1ne} is entirely self-contained.
	We  recall   the discrete Floquet-Bloch transform in Section \ref{Sdis}. In Section \ref{Sflo},    we do preparations for  proofs.
	Section \ref{Spro} is devoted to proving Theorems    \ref{gcf},   \ref{thm2} and \ref{thmbv1ne}. 
	Sections \ref{Semb} and \ref{sapp} are devoted to proving Theorems  \ref{thm1} and \ref{thmextrem} respectively.
	 In Section \ref{sapp1},  
	we  prove Theorems  \ref{thmex2}, \ref{thmex3}	and \ref{thm1new}.

	\section{ Discrete Floquet-Bloch transform}\label{Sdis}
 In this section, we recall the standard  discrete Floquet-Bloch transform. We refer readers to ~\cite{ksurvey,kru} for details.

Let 
\begin{equation*}
	\bar{W}=\left\{0,\frac{1}{q_1},\frac{2}{q_1},\cdots,\frac{q_1-1}{q_1}\right\}\times\cdots \times \left\{0,\frac{1}{q_d},\frac{2}{q_d},\cdots,\frac{q_d-1}{q_d}\right\}\subset [0,1]^d.
\end{equation*}
Define the discrete Fourier transform $\hat{V}(l) $ for $l\in \bar{W}$ by 
\begin{equation*}
	\hat{V}(l) =\frac{1}{{Q}}\sum_{ n\in {W} } V(n) e^{-2\pi i l\cdot n},
\end{equation*}
where $l\cdot n= \sum_{j=1}^d l_j n_j$ for $l=(l_1,l_2,\cdots,l_d)\in \bar{W}$ and $n=(n_1,n_2,\cdots,n_d)\in \Z^d$.
For convenience, we  extend  $\hat{V}(l)$ to  $\bar{W}+\Z^d$ periodically, namely for any $l\equiv \tilde{l}\mod\Z^d$,
\begin{equation*}
	\hat{V}(l)=\hat{V}(\tilde{l}).
\end{equation*}
The inverse of the discrete Fourier transform  is given by 
\begin{equation*}
	V(n)=\sum_{l \in \bar{W}}\hat{V}(l) e^{2\pi il\cdot n}.
\end{equation*}
For a function $u\in\ell^2(\Z^d)$, its Fourier transform
$\mathscr{F}(u) = \hat{u}:\T^d=\R^d/\Z^d\to\C$ is given by
\begin{equation*}
	\hat{u}(x)=\sum_{n\in \Z^d}u(n)e^{-2\pi i n\cdot x}.
\end{equation*}

For any   periodic function    $V$   and  any $u \in \ell^2(\Z^d)$, one has
\begin{equation*}
	\widehat{Vu}(x)=\sum_{l\in \bar{W} } \hat{V}(l)\hat{u}(x-l).
\end{equation*}
We remark that $\hat{u}$ is the Fourier transform  for $u\in \ell^2(\Z^d)$ and $\hat{V}$ is the discrete Fourier transform for $V(n)$, $n\in{W} $.
Let
\begin{equation*}
	\mathcal{B}=\prod_{j=1}^{d} [0,\frac{1}{q_j}).
\end{equation*}
Let $L^2(  \mathcal{B} \times \bar{W})$ be all functions with  the finite norm given by
\begin{equation*}
	||f||_{L^2( \mathcal{B}\times\bar{ W})}=\sum_{l\in \bar{W}}\int_{\mathcal{B}} |f(x,l)|^2dx.
\end{equation*}
Define the unitary map $U:\ell^2(\Z^d)\to L^2( \mathcal{B}\times \bar{W})$  by 
\begin{equation*}
	(U(u))(x,l)= \hat{u}(x+l)
\end{equation*}
for $x=(x_1,x_2,\cdots,x_d)\in \mathcal{B}$ and $l\in \bar{W}$.
For fixed $x\in \mathcal{B}$, define the operator $\tilde{H}_0(x)$ on $\ell^2(\bar{W})$:
\begin{equation}\label{gh0x}
	(\tilde{H}_0(x) u)(l)=\left(\sum_{j=1}^{d}-2\cos(2\pi(l_j+x_j))u(l)\right)+\sum_{j\in \bar{W}} \hat{V}(l-j)u(j),
\end{equation}
where $l=(l_1,l_2,\cdots,l_d)\in \bar{W}$.
Let  $\hat{H}_0:L^2(\mathcal{B}\times \bar{W})\to L^2(\mathcal{B}\times \bar{W})$  be given by
\begin{equation}\label{last6}
	(\hat{H}_0 u)(x,l)= \left(\sum_{j=1}^{d}-2\cos(2\pi(l_j+x_j))u(x,l)\right)+\sum_{j\in \bar{W}} \hat{V}(l-j)u(x,j).
\end{equation}
The following two Lemmas are well known.
\begin{lemma}
	\label{leuni}
	Let  $H_0=-\Delta+V$.
	Let   $\hat{H}_0$ be given by \eqref{last6}. Then
	\begin{equation}
		\hat{H}_0 = U  H_0  U^{-1}.
	\end{equation}
	
\end{lemma}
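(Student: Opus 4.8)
The plan is to verify the intertwining relation by a direct computation on the Fourier side, reducing it to two standard facts: the discrete Laplacian is diagonalized by the Fourier transform $\mathscr{F}\colon\ell^2(\Z^d)\to L^2(\T^d)$, and multiplication by the periodic potential $V$ becomes the finite-band convolution with $\hat V$ recorded above, $\widehat{Vu}(x)=\sum_{l\in\bar W}\hat V(l)\hat u(x-l)$. Since by construction $(Uu)(x,l)=\hat u(x+l)$ is just $\mathscr{F}u$ followed by the ``unfolding'' induced by the measure-preserving bijection $\mathcal{B}\times\bar W\ni(x,l)\mapsto x+l\in\T^d$ (whose bijectivity uses precisely that $\bar W$ has spacings $1/q_i$ and $\mathcal{B}=\prod_j[0,1/q_j)$, so that $\T^d=\bigsqcup_{l\in\bar W}(l+\mathcal{B})$), it suffices to show $(UH_0u)(x,l)=(\hat H_0\,Uu)(x,l)$ for every $u\in\ell^2(\Z^d)$ and then replace $u$ by $U^{-1}f$.

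First I would handle the kinetic term. From $(\Delta u)(n)=\sum_{j=1}^d\bigl(u(n+\textbf{e}_j)+u(n-\textbf{e}_j)\bigr)$ and the translation rule $\widehat{u(\cdot\pm\textbf{e}_j)}(y)=e^{\pm2\pi iy_j}\hat u(y)$, one gets $\widehat{\Delta u}(y)=\bigl(\sum_{j=1}^d2\cos(2\pi y_j)\bigr)\hat u(y)$; evaluating at $y=x+l$ gives $\widehat{(-\Delta u)}(x+l)=-\sum_{j=1}^d2\cos\bigl(2\pi(x_j+l_j)\bigr)(Uu)(x,l)$, which is exactly the multiplication part of \eqref{last6}. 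For the potential term I would start from $\widehat{Vu}(x+l)=\sum_{l'\in\bar W}\hat V(l')\hat u(x+l-l')$ and perform the index bookkeeping: for fixed $l$, the map $l'\mapsto j$, where $j\in\bar W$ is the unique representative of $l-l'$ modulo $\Z^d$, is a bijection of $\bar W$; since $\hat u$ is $\Z^d$-periodic (it lives on $\T^d$) we have $\hat u(x+l-l')=\hat u(x+j)=(Uu)(x,j)$, and since $\hat V$ was extended $\bar W+\Z^d$-periodically and $l'\equiv l-j\pmod{\Z^d}$ we have $\hat V(l')=\hat V(l-j)$. Summing, $\widehat{Vu}(x+l)=\sum_{j\in\bar W}\hat V(l-j)(Uu)(x,j)$, the convolution part of \eqref{last6}. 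Adding the two contributions gives $(UH_0u)(x,l)=(\hat H_0\,Uu)(x,l)$, and, $U$ being a bijection onto $L^2(\mathcal{B}\times\bar W)$, this is equivalent to $\hat H_0=UH_0U^{-1}$.

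The only genuinely delicate point is the compatibility of the two periodic extensions --- of $\hat u$ over $\Z^d$ and of $\hat V$ over $\bar W+\Z^d$ --- with the chosen fundamental domains: one must check that for each $l,l'\in\bar W$ the points $x+l-l'$ and $x+j$ really represent the same element of $\T^d$ with the \emph{same} $x\in\mathcal{B}$, which is where the explicit shapes of $\mathcal{B}$ and $\bar W$ enter. Unitarity of $U$ itself (needed only to make sense of $U^{-1}$) I would either cite as standard or dispatch in one line via Plancherel for $\ell^2(\Z^d)$ together with the partition $\T^d=\bigsqcup_{l\in\bar W}(l+\mathcal{B})$, namely $\|Uu\|_{L^2(\mathcal{B}\times\bar W)}^2=\sum_{l\in\bar W}\int_{\mathcal{B}}|\hat u(x+l)|^2\,dx=\int_{\T^d}|\hat u(y)|^2\,dy=\|u\|_{\ell^2(\Z^d)}^2$.
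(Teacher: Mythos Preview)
Your proposal is correct and is precisely the ``straightforward computation'' the paper alludes to in its one-line proof: you diagonalize $-\Delta$ via $\mathscr{F}$ and identify the potential term with the finite convolution $\widehat{Vu}(x)=\sum_{l'\in\bar W}\hat V(l')\hat u(x-l')$, then unfold both through the partition $\T^d=\bigsqcup_{l\in\bar W}(l+\mathcal{B})$. There is nothing to add; your index bookkeeping for the potential term (the bijection $l'\leftrightarrow j$ on $\bar W$ modulo $\Z^d$, together with the periodic extensions of $\hat u$ and $\hat V$) is exactly the point one has to check, and you have done so.
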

\begin{proof}
	Straightforward computations.
\end{proof}
Given $x \in \R^d$, let $\mathscr{F}^{x}$ be the Floquet-Bloch transform  on $ \ell^2(W)$:
for any vector on $W$, $\{u(n)\}_{n\in W}$, 
\[ [\mathscr{F}^{x} u] (n^\prime) = \frac{1}{\sqrt{Q}} \sum_{n \in W} e^{-2\pi i  \sum_{j=1}^d  \left(\frac{n_j^\prime}{q_j}+x_j\right) n_j }u(n), \quad n^\prime \in W .\]

Let $\tilde{{D}} (x)$ be the  $Q\times Q$ matrix given by $D(q_1 x_1,q_2x_2,\cdots,x_d q_d)$. 
\begin{lemma}
	\label{le27}
	The operator $\tilde{H}_0(x)$ given by  \eqref{gh0x} is unitarily equivalent  to  $\tilde{{D}}(x)$. 
\end{lemma}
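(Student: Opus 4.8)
The plan is to verify directly that the twisted Floquet--Bloch transform $\mathscr{F}^{x}$ on $\ell^2(W)$ introduced just above is the conjugating unitary, i.e. that $\mathscr{F}^{x}\,\tilde D(x)\,(\mathscr{F}^{x})^{-1}=\tilde H_0(x)$; this is the ``fiber'' counterpart of Lemma \ref{leuni}. By construction $\tilde D(x)=D(q_1x_1,\dots,q_dx_d)$ is the matrix of $-\Delta+V$ acting on the $Q$-dimensional space of functions on $W$, extended to $\Z^d$ by the Floquet--Bloch rule $u(n+q_j\textbf{e}_j)=e^{2\pi i q_jx_j}u(n)$, so it suffices to track what conjugation by $\mathscr{F}^{x}$ does to the two summands $-\Delta$ and multiplication by $V$ separately. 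Throughout I identify the image index $n'\in W$ with the point $l=(n_1'/q_1,\dots,n_d'/q_d)\in\bar W$, so that $\mathscr{F}^{x}$ is, up to the unimodular twist $e^{-2\pi i x\cdot n}$, the finite Fourier transform on $\prod_{j=1}^{d}\Z/q_j\Z$.

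First I would record that $\mathscr{F}^{x}$ is unitary on $\ell^2(W)$, which follows from the character orthogonality $\frac1Q\sum_{n\in W}e^{2\pi i\sum_{j}(m_j-m_j')n_j/q_j}=\delta_{m,m'}$ together with unimodularity of the twist. Next, for the difference operator I would compute $\mathscr{F}^{x}$ of the shifted functions $u(\cdot\pm\textbf{e}_j)$. For $n+\textbf{e}_j\in W$, the substitution $n\mapsto n-\textbf{e}_j$ in the defining sum for $\mathscr{F}^{x}$ produces a factor $e^{2\pi i(l_j+x_j)}$; the terms with $n+\textbf{e}_j\notin W$ are returned to $W$ by the Floquet--Bloch rule, and there the boundary phase $e^{2\pi i q_jx_j}$ cancels against the Fourier-kernel phase $e^{-2\pi i q_j(l_j+x_j)}=e^{-2\pi i q_jx_j}$ (using $q_jl_j\in\Z$), so these terms reassemble into the same sum over $W$ with the same overall factor. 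Hence $\mathscr{F}^{x}(u(\cdot+\textbf{e}_j))=e^{2\pi i(l_j+x_j)}\,\mathscr{F}^{x}u$ and likewise $\mathscr{F}^{x}(u(\cdot-\textbf{e}_j))=e^{-2\pi i(l_j+x_j)}\,\mathscr{F}^{x}u$; summing over the $2d$ nearest neighbours shows that $\mathscr{F}^{x}(-\Delta)(\mathscr{F}^{x})^{-1}$ is multiplication by $-2\sum_{j=1}^{d}\cos(2\pi(l_j+x_j))$, which is the diagonal part of \eqref{gh0x}.

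For the potential term I would substitute the inverse discrete Fourier expansion $V(n)=\sum_{m\in\bar W}\hat V(m)e^{2\pi i m\cdot n}$ into $\mathscr{F}^{x}(Vu)$ and carry out the character orthogonality on $W$; this turns multiplication by $V$ into the discrete convolution $\bigl(\mathscr{F}^{x}(Vu)\bigr)(l)=\sum_{m\in\bar W}\hat V(m)\,(\mathscr{F}^{x}u)(l-m)$, where $l-m$ is read modulo $\Z^d$ consistently with the periodic extension of $\hat V$ to $\bar W+\Z^d$; in the $\bar W$-indexing this is exactly the term $\sum_{j\in\bar W}\hat V(l-j)u(j)$ of \eqref{gh0x}. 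Combining the two computations gives $\mathscr{F}^{x}\tilde D(x)(\mathscr{F}^{x})^{-1}=\tilde H_0(x)$, which is the asserted unitary equivalence.

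The only point requiring care is the bookkeeping at the boundary faces of $W$ in the difference-operator step: one must check that the Floquet--Bloch phase $e^{2\pi i q_jx_j}$ and the mod-$q_j$ reindexing conspire with $e^{-2\pi i n_j'}=1$ to leave a single clean multiplier $e^{\pm2\pi i(l_j+x_j)}$, with no leftover terms. Everything else is a routine application of finite Fourier analysis on $\prod_{j=1}^{d}\Z/q_j\Z$, in the same spirit as the (omitted) computation behind Lemma \ref{leuni}.
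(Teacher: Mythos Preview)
Your proposal is correct and follows essentially the same route as the paper: conjugation by the twisted Floquet--Bloch transform $\mathscr{F}^{x}$, together with the identification $n'\leftrightarrow l=(n_1'/q_1,\dots,n_d'/q_d)$, which is precisely the reindexing map $T:\ell^2(\bar W)\to\ell^2(W)$ that the paper makes explicit in writing $\tilde H_0(x)=T\,\mathscr{F}^{x}\,\tilde D(x)\,(\mathscr{F}^{x})^{-1}T^{-1}$. The paper simply says ``direct computations imply'' this identity, whereas you have spelled out those computations (diagonalizing the shifts and turning multiplication by $V$ into convolution by $\hat V$), including the boundary-phase cancellation, in full.
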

\begin{proof}
	By \eqref{spect_0} and \eqref{Fl}, 
	$\tilde{{D}}(x)$ is    the restriction of $-\Delta+V$ to $W$ with boundary conditions:
	\begin{equation}\label{flo1}
		u(n+q_j\textbf{e}_j)=e^{2\pi i q_jx_j} u(n),j=1,2,\cdots,d, n\in\Z^d.
	\end{equation}
	Let $T:\ell^2(\bar W)\to \ell^2({W}) $ given by $T(l_1,l_2,\cdots,l_d)=(q_1l_1,q_2l_2,\cdots,q_d l_d)$, where $(l_1,l_2,\cdots,l_d)\in\bar{W}. $
	Direct computations  imply that $$\tilde{H}_0(x)=  T \mathscr{F}^{x} \tilde{{D}}(x) (\mathscr{F}^{x})^* T^{-1}= T \mathscr{F}^{x}\tilde{{D}}(x) (\mathscr{F}^{x})^{-1} T^{-1}.$$
\end{proof}
Assume $V$ is real.
For each $k\in[0,1)^d$, it is easy to see that $D(k)$ has $Q=q_1q_2\cdots q_d$ eigenvalues. Order them in non-decreasing  order
\begin{equation*}
	\lambda_1(k)\leq \lambda_2(k)\leq\cdots \leq \lambda_Q(k).
\end{equation*}
We call   $\lambda_m(k)$ the $m$-th (spectral) band function, $m=1,2,\cdots, Q$.
Then we have
\begin{lemma}\label{lespband} 
	\begin{align*}
		[a_m,b_m]= [\min_{k\in[0,1)^d}\lambda_m (k),\max_{k\in[0,1)^d}\lambda_m (k)]
	\end{align*}
	and $a_m<b_m$, 
	$m=1,2,\cdots, Q$.
	
\end{lemma}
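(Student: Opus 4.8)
The plan is to prove the two assertions of Lemma \ref{lespband} separately: first the band identity, and then the strict inequality $a_m<b_m$, which is the substantive point.

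For the band identity I would invoke the direct integral decomposition set up in Lemmas \ref{leuni} and \ref{le27}. Through the unitary $U$, the operator $H_0$ is unitarily equivalent to $\hat H_0=\int^{\oplus}_{\mathcal B}\tilde H_0(x)\,dx$, which by Lemma \ref{le27} is a direct integral of the matrices $\tilde D(x)$, equivalently (after the change of variables $k_j=q_jx_j$) of the matrices $D(k)$, $k\in[0,1)^d$. The spectrum of a direct integral of a continuous family of self-adjoint matrices is the closure of the union of the fibre spectra, so $\sigma(H_0)=\overline{\bigcup_{k}\sigma(D(k))}=\overline{\bigcup_{k}\{\lambda_1(k),\dots,\lambda_Q(k)\}}$. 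Since $D(k)$ depends on $k$ only through $e^{2\pi ik_j}$, it is $\Z^d$-periodic and descends to the torus $\T^d=\R^d/\Z^d$, and the ordered eigenvalues $\lambda_m$ are continuous on $\T^d$ (standard, e.g.\ from the min--max formula). As $\T^d$ is compact and connected, each image $\lambda_m(\T^d)$ is a compact interval $[\min_k\lambda_m(k),\max_k\lambda_m(k)]$; hence $\sigma(H_0)=\bigcup_{m=1}^Q[\min_k\lambda_m(k),\max_k\lambda_m(k)]$, already closed as a finite union of compact intervals, and setting $a_m:=\min_k\lambda_m(k)$, $b_m:=\max_k\lambda_m(k)$ yields the claimed band decomposition.

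For $a_m<b_m$ the crucial reduction is this: if $a_m=b_m$ for some $m$, then $\lambda_m\equiv c$ is constant, so $c\in\sigma(D(k))$ for every $k\in\R^d$, i.e.\ $\mathcal P(z,c)=0$ for all $z$ with $|z_1|=\dots=|z_d|=1$; but a Laurent polynomial vanishing on this torus has all Laurent coefficients zero, so $\mathcal P(\cdot,c)\equiv0$. Thus it suffices to show that $\mathcal P(z,\lambda)=\det(\mathcal D(z)-\lambda I)$ is, for \emph{every} fixed $\lambda\in\C$, not identically zero as a Laurent polynomial in $z$. I would establish this by inspecting one coordinate, say $z_1$. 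If $q_1\ge2$, write $\mathcal D(z)=A(z_2,\dots,z_d)+z_1B_1+z_1^{-1}B_{-1}$, where, up to sign, $B_{\pm1}$ are partial permutation matrices of rank $Q_1:=q_2\cdots q_d$ carrying the hoppings in the $e_1$-direction across the period boundary. Expanding the determinant, the coefficient of the top power $z_1^{Q_1}$ equals $\pm\det\widehat M$, where $\widehat M$ is obtained from $A(z')-\lambda I$ by deleting the $Q_1$ rows and $Q_1$ columns corresponding respectively to the two boundary faces $\{n_1=q_1-1\}$ and $\{n_1=0\}$. The surviving entries of $\widehat M$ come only from the diagonal potential, the hoppings in the other coordinate directions, and the internal $e_1$-hoppings, and a direct inspection shows that, after shifting the column labels by one unit in the $n_1$-coordinate, $\widehat M$ becomes block triangular with $-I_{Q_1}$ repeated along the diagonal; hence $\det\widehat M=(-1)^{Q_1(q_1-1)}$, a nonzero constant independent of $z'$ and of $\lambda$. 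So $\mathcal P(z,\lambda)$, viewed as a polynomial in $z_1$, has a nonzero constant leading coefficient, and therefore $\mathcal P(\cdot,\lambda)\not\equiv0$. The remaining case $q_1=1$ is easier: there $\mathcal D(z)=\mathcal D'(z_2,\dots,z_d)-(z_1+z_1^{-1})I$, so $\mathcal P(z,\lambda)$ is, up to sign, a degree-$Q$ polynomial in $z_1+z_1^{-1}$ with leading coefficient $\pm1$, hence of $z_1$-degree $Q$ with nonzero leading coefficient. Consequently no $\lambda_m$ is constant and $a_m<b_m$ for every $m$.

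The main obstacle is this flat-band step, and inside it the structural claim that the extreme-$z_1$ coefficient of $\mathcal P(z,\lambda)$ is a nonzero \emph{constant} rather than merely a nonzero polynomial in $\lambda$ --- this is exactly what rules out a flat band sitting at an exceptional value of $\lambda$. The direct-integral reduction, the continuity of the band functions, and the identity-theorem step for Laurent polynomials on the torus are all routine.
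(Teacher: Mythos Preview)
Your proposal is correct. The paper does not actually supply a proof of Lemma~\ref{lespband}; it is stated as a standard fact in Section~\ref{Sdis}, which is explicitly framed as a recollection of the Floquet--Bloch setup. Your direct-integral argument for the band identity is the usual one, and your flat-band step is sound: the reduction from $\lambda_m\equiv c$ to $\mathcal{P}(\cdot,c)\equiv 0$ via vanishing on the unit torus is clean, and your computation that the coefficient of $z_1^{Q/q_1}$ in $\mathcal{P}(z,\lambda)$ equals $\pm 1$ independently of $\lambda$ and of $z_2,\dots,z_d$ goes through exactly as you describe --- after the column shift the complementary minor $\widehat M$ is genuinely block lower triangular with diagonal blocks $-I_{Q_1}$.

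It is worth noting that this last point is precisely Fact~(2) recorded at the end of Section~\ref{Sflo}, and the implication you draw from it (that $\mathcal{P}(z,\lambda_0)$ cannot vanish identically in $z$ for any $\lambda_0$) is invoked verbatim in the paper's proof of Theorem~\ref{thmbv1ne}. So while the paper omits the proof of Lemma~\ref{lespband}, the ingredient it records and later uses is exactly the one you identified; your argument fits seamlessly into the paper's framework.
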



	
	\section{ Preparations }\label{Sflo}
	For readers' convenience, we   collect  some notations   and define  a few new notations  here, which  will be  constantly used in the proofs.
		\begin{enumerate}\label{re8}
			\item  $\mathcal{D}(z)$ is the $Q\times Q$ matrix arising from the eigen-equation \eqref{flo} and \eqref{flo2}.
			
				\item $z_j=e^{2\pi i k_j}$ and $k_j= q_jx_j$, $j=1,2,\cdots,d$.  
			$	\tilde{D}(x) =	D(k) =\mathcal{D}(z)$. $\tilde{\mathcal{D}}(z)=\mathcal{D} (z_1^{q_1},z_2^{q_2},\cdots,z_d^{q_d})$.
			\item 
			$\mathcal{P}(z,\lambda)=\det(\mathcal{D}(z)-\lambda I)$, $\tilde{\mathcal{P}}(z,\lambda)=\det(\tilde{\mathcal{D}}(z)-\lambda I)$, ${P}(k,\lambda)=\det({D}(k)-\lambda I)$,
			$\tilde{P}(x,\lambda)=\det(\tilde{D}(x)-\lambda I)$.
			\item 	Let $$\rho^j_{n_j}=e^{2\pi i \frac{n_j}{q_j}},$$
			where $0\leq n_j \leq q_j-1$, $j=1,2,\cdots,d$. Denote by  $\mu_{q_j}$  the multiplicative group of $q_j$ roots of unity, $j=1,2,\cdots, d$.   Let $\mu=\mu_{q_1}\times \mu_{q_2}\times \cdots \times \mu_{q_d}$.
			
			For any $\rho=(\rho^1,\rho^2,\cdots, \rho^d)\in \mu$, we can define a natural action  on $\C^d   $ 
			$$
			\rho \cdot\left(z_{1}, z_{2},\cdots, z_d\right)=\left(\rho^{1} z_{1}, \rho^{2} z_{2},\cdots,\rho^dz_d\right).
			$$
		
			
			\item 	 For a polynomial $f(z)$, denote by $\deg(f)$ the degree of $f$. 
			\item Let	$\mathcal{P}_1(z,\lambda)= (-1)^Q z_1^{\frac{Q}{q_1} }z_2^{\frac{Q}{q_2}}\cdots z_d^{\frac{Q}{q_d}}\mathcal{P}(z,\lambda)$. 
			\item 	For any $a=(a_1,a_2,\cdots,a_d)\in \Z^d$, let $z^a=z_1^{a_1}z_2^{a_2}\cdots z_d^{a_d}$.
			
		
		\end{enumerate}


	The following lemma is standard. 
	
	\begin{lemma}\label{lesep} 
		Let $n=(n_1,n_2,\cdots,n_d) \in {W}$ and  $n^\prime=(n_1^\prime,n_2^\prime,\cdots,n_d^\prime) \in {W}$. Then 
		$\tilde{\mathcal{D}}(z)$ is unitarily  equivalent to 		
		$
		A+B,
		$
		where $A$ is a diagonal matrix with entries
		\begin{equation}\label{A}
		A(n;n^\prime)=-\left(\left(\sum_{j=1}^d \left(\rho^j_{n_j}z_j+\frac{1}{\rho^j_{n_j} z_j} \right)\right)-\lambda \right) \delta_{n,n^{\prime}}
		\end{equation}
		and $B$
		$$B(n;n^\prime)=\hat{V} \left(\frac{n_1-n_1^\prime}{q_1},\frac{n_2-n_2^\prime}{q_2},\cdots, \frac{n_d-n_d^\prime}{q_d}\right).$$
		In particular,
		\begin{equation*}
		\tilde{\mathcal{P}}(z, \lambda) =\det(A+B).
		\end{equation*}

	\end{lemma}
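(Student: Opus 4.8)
The statement is the standard discrete Floquet--Bloch diagonalization of $-\Delta+V$ on the periodicity cell, transcribed into the multiplicative variables $z$, and the plan is to exhibit an explicit conjugation realizing it. Recall from Section \ref{Sflo} that $\tilde{\mathcal D}(z)=\mathcal D(z_1^{q_1},\dots,z_d^{q_d})$ is the restriction of $-\Delta+V$ to $\ell^2(W)$ subject to the boundary condition $u(m+q_j\textbf{e}_j)=z_j^{q_j}u(m)$. First I would untwist this boundary condition by the diagonal multiplier $(R_zu)(m)=\big(\prod_{j=1}^{d}z_j^{-m_j}\big)u(m)$ on $\ell^2(W)$: a one-line check shows that $R_z$ carries solutions of that twisted problem to $\Gamma$-periodic functions on $W$, that $R_z$ commutes with multiplication by $V$, and that $R_z(-\Delta)R_z^{-1}=-\sum_{j=1}^{d}\big(z_j\tau_j+z_j^{-1}\tau_j^{-1}\big)$ as an operator on $\Gamma$-periodic functions, where $\tau_j$ denotes the unit shift in the $j$-th coordinate. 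Next I would apply the discrete Fourier transform $S_0$ on the cell, i.e.\ the unitary whose $n$-th column is the normalized character $m\mapsto Q^{-1/2}\prod_{j=1}^{d}(\rho^j_{n_j})^{m_j}$, $n\in W$; acting on $\Gamma$-periodic functions it simultaneously diagonalizes $-\sum_{j}\big(z_j\tau_j+z_j^{-1}\tau_j^{-1}\big)$ with eigenvalues $-\sum_{j}\big(\rho^j_{n_j}z_j+(\rho^j_{n_j}z_j)^{-1}\big)$, and turns multiplication by $V$ into convolution by $\hat V$. Hence conjugation of $\tilde{\mathcal D}(z)-\lambda I$ by $S_0^{-1}R_z$ produces precisely $A+B$, the diagonal part (coming from $-\Delta-\lambda I$) being the matrix $A$ of \eqref{A} and the convolution part being $B$; this is the same computation that underlies \eqref{gh0x} and Lemma \ref{le27}.

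When $|z_j|=1$ for every $j$ the multiplier $R_z$ is unitary, so $S_0^{-1}R_z$ is unitary and the above is a genuine unitary equivalence, which is the assertion of the lemma; for general $z\in(\C^\star)^d$ the operator $R_z$ is merely invertible, but the identity $(S_0^{-1}R_z)^{-1}\big(\tilde{\mathcal D}(z)-\lambda I\big)(S_0^{-1}R_z)=A+B$ still holds as a similarity. Either way, taking determinants --- conjugation preserves the determinant --- yields $\tilde{\mathcal P}(z,\lambda)=\det\big(\tilde{\mathcal D}(z)-\lambda I\big)=\det(A+B)$. One may also observe that all entries of $\tilde{\mathcal D}(z)$, of $A$, and of $B$ are Laurent monomials in $z_1,\dots,z_d$, so that the determinant identity, once established on the torus $\{|z_j|=1:\,j=1,\dots,d\}$, extends to all $z\in(\C^\star)^d$ by uniqueness of analytic continuation.

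I do not expect a genuine obstacle: the content is classical, and the only care required is bookkeeping --- keeping straight the relation between $\mathcal D$ and $\tilde{\mathcal D}$ (equivalently the substitution $z_j\mapsto z_j^{q_j}$, or the relation $k_j=q_jx_j$), fixing the normalization so that $S_0$ is genuinely unitary, and verifying that the orthogonality relations for the roots of unity $\rho^j_{n_j}$ collapse the transform of the multiplication operator $V$ to the single-term convolution matrix $B(n;n')=\hat V\big(\frac{n_1-n_1'}{q_1},\dots,\frac{n_d-n_d'}{q_d}\big)$. The lemma is recorded precisely so that the subsequent analysis can be carried out on the concrete matrix $A+B$ rather than on $\tilde{\mathcal D}(z)$ itself.
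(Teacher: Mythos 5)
Your proposal is correct and follows essentially the same route as the paper: the paper's own proof is a one-line citation of Lemma \ref{le27} (which conjugates $\tilde{D}(x)$ to $\tilde{H}_0(x)$ by the Floquet--Bloch transform $\mathscr{F}^x$ followed by the relabelling $T$) together with the formula \eqref{gh0x}, and your two-step conjugation by the diagonal multiplier $R_z$ followed by the cell DFT $S_0$ is precisely the factorization of that same transform, written out in the $z$-variables. Your remark distinguishing genuine unitary equivalence on the torus $|z_j|=1$ from mere similarity for general $z\in(\C^\star)^d$, with the determinant identity extending by analyticity, is a sensible clarification that the paper leaves implicit.
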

	\begin{proof}
		Recall that $x_j=\frac{k_j}{q_j}$, $z_j=e^{2\pi i k_j}$, $j=1,2,\cdots,d$. 
		Lemma \ref{lesep}  follows from  Lemma \ref{le27} and \eqref{gh0x}.
	\end{proof}
	We note that  $B$ is  independent of $z_1,z_2,\cdots,z_d$ and $\lambda$.
	
Here are some simple facts about $\mathcal{P}$,   $\tilde{\mathcal{P}} $ and $\mathcal{P}_1$.  
\begin{enumerate}
	\item  $ \mathcal{P}(z,\lambda)$  is symmetric with respect to $z_j$ and $z_j^{-1}$, $j=1,2,\cdots,d$.
	\item   
	$\mathcal{P}(z,\lambda)$  is a   polynomial in the variables  $z_1,z_1^{-1}, z_2,z_2^{-1}, \cdots,z_d,z_d^{-1}$ and   $\lambda$ with highest degree terms (up to a  $\pm$ sign)
	$z_1^{ \frac{Q}{q_1}},z_1^{-\frac{Q}{q_1}}, z_2^{ \frac{Q}{q_2}},z_2^{- \frac{Q}{q_2}} \cdots ,z_d^{ \frac{Q}{q_d}},z_d^{- \frac{Q}{q_d}}$ and $\lambda^{Q}$.
	\item  
	$\tilde{\mathcal{P}}(z,\lambda)$  is a    polynomial  in the variables   $z_1,z_1^{-1}, z_2,z_2^{-1}, \cdots,z_d,z_d^{-1}$ and  $\lambda$ with highest  degree terms (up to a  $\pm$ sign)
	$z_1^{ Q},z_1^{-Q}, z_2^{Q}, z_2^{-Q},\cdots, z_d^{Q},z_d^{-Q}$ and $\lambda^{Q}$.	
	\item 
	$\mathcal{P}_1(z,\lambda)$  is a polynomial of $z$ and $\lambda$.   $\mathcal{P}_1(z,\lambda)$ can not have a factor  $z_j$, $j=1,2,\cdots,d$, namely 
	\begin{equation}\label{nomo}
	z_j\nmid \mathcal{P}_1(z,\lambda), j=1,2,\cdots,d.
	\end{equation}
	Therefore,  the Laurent polynomial  $\mathcal{P}(z,\lambda)$ is irreducible (as a function of $z$) if and only if the polynomial $\mathcal{P}_1(z,\lambda)$ (as a function of $z$) 
	is irreducible in the traditional way, namely, there are no non-constant polynomials $f\left(z\right)$  and $g\left(z\right)$ such that 
	$\mathcal{P}_1(z,\lambda)=f(z)g(z)$. 
	
\end{enumerate}

	\section{Proof of Theorems \ref{gcf} ,  \ref{thm2} and \ref{thmbv1ne}}\label{Spro}


Let 
\begin{equation}\label{cone1'}
	\tilde{h}_1(z)=z_1^Qz_2^Q\cdots z_d^Q\prod_{ 0\leq n_j\leq q_j-1\atop {1\leq j\leq q}} \left(\sum_{j=1}^d\frac{1}{\rho^j_{n_j}z_j}\right),
\end{equation}
and 
\begin{equation}\label{cone22'}
	\tilde{h}_2(z)=z_1^Qz_2^Q\cdots z_{d-1}^Qz_d^{-Q}\prod_{ 0\leq n_j\leq q_j-1\atop {1\leq j\leq q}} \left(\rho^d_{n_d} z_d+ \sum_{j=1}^{d-1}\frac{1}{\rho^j_{n_j}z_j}\right).
\end{equation}
One can see that $	\tilde{h}_1(z) $ is a polynomial in variables $z_1,\cdots,z_{d-1}, z_d$ and $	\tilde{h}_2(z) $ is a polynomial in variables $z_1,\cdots,z_{d-1}, z_d^{-1}$.

Since both $\tilde{h}_1(z)$ and $\tilde{h}_2(z)$ are unchanged under the  action of the   group  $\mu$, we have that there exist $h_1(z)$ (a polynomial of $z_1,\cdots,z_{d-1}, z_d$) and   
$h_2(z)$ (a polynomial of $z_1,\cdots,z_{d-1}, z_d^{-1}$)  such that
\begin{equation}\label{cone3}
	\tilde{h}_1(z_1,z_2,\cdots,z_d)=h_1(z_1^{q_1},z_2^{q_2},\cdots, z_d^{q_d}),
\end{equation}
and 
\begin{equation}\label{cone4}
	\tilde{h}_2(z_1,z_2,\cdots,z_d)=h_2(z_1^{q_1},z_2^{q_2},\cdots, z_d^{q_d}).
\end{equation}

\begin{lemma}\label{key}
	Both $h_1(z)$ and $h_2(z)$ are irreducible.
\end{lemma}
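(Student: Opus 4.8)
\textbf{Proof proposal for Lemma \ref{key}.}

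The plan is to exploit the combinatorial structure of the products defining $\tilde h_1$ and $\tilde h_2$, using that after the substitution $z_j \mapsto z_j^{q_j}$ these polynomials become $h_1$ and $h_2$ respectively. I will treat $\tilde h_1$ in detail; the case of $\tilde h_2$ is the same after the change of variable $z_d \mapsto z_d^{-1}$ (which is a unit in the Laurent ring and merely permutes the roles of $z_d$ and $z_d^{-1}$), so I expect the two arguments to be completely parallel. The first step is to clear denominators: multiplying each factor $\sum_{j=1}^d \frac{1}{\rho^j_{n_j}z_j}$ by $z_1 z_2 \cdots z_d$ turns it into $\sum_{j=1}^d \rho^j_{n_j}\,\frac{z_1\cdots z_d}{z_j}$, i.e. a linear form in the $d$ monomials $w_j := \prod_{i\neq j} z_i$, and the monomial prefactor $z_1^Q\cdots z_d^Q$ in \eqref{cone1'} is exactly the right power to absorb all of these, so $\tilde h_1$ is genuinely a polynomial whose irreducible factors I must control. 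The key observation is that the product over the index set $\{(n_1,\dots,n_d): 0\le n_j\le q_j-1\}$ runs over one representative from each $\mu$-orbit after the $z_j\mapsto z_j^{q_j}$ substitution, and the $\mu$-invariance (noted just before \eqref{cone3}) is precisely what guarantees $h_1$ exists as a polynomial in the original variables.

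The heart of the argument is to show that the single factor $L(z) := \sum_{j=1}^d \frac{1}{\rho^j_{n_j}z_j}$, cleared of denominators and then with $z_j\mapsto z_j^{q_j}$, is irreducible, and that the full product $h_1$ inherits irreducibility because the $\mu$-action permutes these cleared factors transitively in a way compatible with the substitution. Concretely: after clearing denominators a single factor becomes $\ell_{n}(z) = \sum_{j=1}^d \rho^j_{n_j} w_j$ where $w_j = \prod_{i\neq j}z_i$; this is a linear form in distinct monomials with no common monomial factor, hence irreducible as a polynomial. When we instead work with $\tilde h_1$ directly, $\tilde h_1 = \prod_{n}\ell_n(z)$ up to the monomial normalization, and the claim $\tilde h_1(z) = h_1(z_1^{q_1},\dots,z_d^{q_d})$ means $h_1$ is obtained by "descending" this product along the $\mu$-cover $z_j\mapsto z_j^{q_j}$. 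To prove $h_1$ is irreducible I would argue by contradiction: a nontrivial factorization $h_1 = fg$ pulls back to a factorization $\tilde h_1 = \tilde f\tilde g$ with $\tilde f,\tilde g$ both $\mu$-invariant; but $\tilde h_1$ is (up to a monomial) a product of the pairwise non-associate irreducible linear-in-monomials forms $\ell_n$, so $\tilde f$ and $\tilde g$ would each be a monomial times a sub-product of the $\ell_n$'s, and $\mu$-invariance of each sub-product forces the sub-product to be either empty or the whole product (because $\mu$ acts transitively on the factors $\{\ell_n\}$ — the action $\rho\cdot z$ sends $\rho^j_{n_j}$ to $\rho^j \rho^j_{n_j}$, cycling through all the roots of unity). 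Hence one of $f,g$ is a unit, contradicting nontriviality, so $h_1$ is irreducible; $h_2$ follows identically.

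I expect the main obstacle to be the bookkeeping in the descent step: verifying that the $\mu$-orbit of a single cleared factor $\ell_n$, when expressed back in the variables $z_1,\dots,z_d$, recombines to exactly the stated monomial-normalized product, and that the transitivity of the $\mu$-action on the factors is genuine rather than partitioning the factors into several orbits. I would handle this by pinning down that a factor is indexed by the tuple of roots of unity $(\rho^1_{n_1},\dots,\rho^d_{n_d}) \in \mu$ itself, so the $\mu$-action on indices is just multiplication in the group $\mu$ — which is simply transitive — making the orbit structure transparent. A secondary technical point is checking that the $\ell_n$ are pairwise non-associate (no factor is a monomial multiple of another), which holds because each $\ell_n$ has the same monomial support $\{w_1,\dots,w_d\}$ but distinct coefficient vectors $(\rho^1_{n_1},\dots,\rho^d_{n_d})$; two linear forms with the same support and non-proportional coefficient vectors are never associates. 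Once these two structural facts are in place, the irreducibility of $h_1$ and $h_2$ is immediate.
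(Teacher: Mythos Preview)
Your approach is essentially the same as the paper's: pull a hypothetical factorization $h_1=fg$ back along $z_j\mapsto z_j^{q_j}$ to get a $\mu$-invariant factorization of $\tilde h_1$, then use that $\mu$ permutes the linear factors $\ell_n$ transitively to force $\tilde f$ or $\tilde g$ to be a unit. The paper phrases the transitivity step slightly differently (it says: if $\tilde f$ contains one factor $\ell_n$, then by $\mu$-invariance it contains them all), but the content is identical.

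There is, however, a genuine gap in your ``secondary technical point.'' You assert that the $\ell_n$ are pairwise non-associate because they have \emph{distinct} coefficient vectors $(\rho^1_{n_1},\dots,\rho^d_{n_d})$, and then immediately slide to \emph{non-proportional}. Distinct does not imply non-proportional: if $q_1=q_2=2$ (so $\gcd>1$), then $(1,1)$ and $(-1,-1)$ are distinct but proportional, $\ell_{(0,0)}$ and $\ell_{(1,1)}$ are associates, $\tilde h_1=(z_1^2-z_2^2)^2$, and $h_1(z)=(z_1-z_2)^2$ is reducible. The non-proportionality is exactly where the standing hypothesis $\gcd(q_1,\dots,q_d)=1$ enters, and the paper invokes it explicitly at this step (equation \eqref{dckey}): if $(\rho^1_{n_1'},\dots,\rho^d_{n_d'})=c\,(\rho^1_{n_1},\dots,\rho^d_{n_d})$ then $c^{q_j}=1$ for every $j$, so the order of $c$ divides $\gcd(q_1,\dots,q_d)=1$, whence $c=1$ and $n=n'$. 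Without this, your $\mu$-invariance argument only controls the multiset of associate classes of factors of $\tilde f$, and proper $\mu$-invariant sub-products can exist (as in the example above). You should insert this one-line use of the coprimality hypothesis; once that is in place your argument is complete and matches the paper's.
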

\begin{proof}
	Without loss of generality, we only  show that  $h_1(z)$ is irreducible. 
	Suppose the statement is not true.  Then there are two non-constant  polynomials $f(z)$ and $g(z)$ such that 
	$h_1(z)=f(z)g(z)$.
	Let $$\tilde{f}(z)=f(z_1^{q_1},z_2^{q_2},\cdots, z_d^{q_d}),\tilde{g}(z)=g(z_1^{q_1},z_2^{q_2},\cdots, z_d^{q_d}).$$
	Therefore,
	\begin{equation}\label{key1}
		\tilde{f}(z) \tilde{g}(z)= z_1^Qz_2^Q\cdots z_d^Q	\prod_{ 0\leq n_j\leq q_j-1\atop {1\leq j\leq q}} \left(\sum_{j=1}^d\frac{1}{\rho^j_{n_j}z_j}\right).
	\end{equation}
	By the assumption that the greatest common factor of $q_1,q_2,\cdots,q_d$ is 1,  we have  for any $n_j,n_j^\prime$ with $0\leq n_j,n_j^\prime\leq q_j-1$ and $ (n_1,n_2,\cdots,n_d)\neq (n_1^\prime,n_2^\prime,\cdots, n_d^\prime),$
	\begin{equation}\label{dckey}
		\left\{z\in (\C^{\star})^d: \sum_{j=1}^d\frac{1}{\rho^j_{n_j}z_j}=0  \right\} \neq \left\{z\in (\C^{\star})^d: \sum_{j=1}^d\frac{1}{\rho^j_{n_j^\prime}z_j}=0  \right\} .
	\end{equation}
	By the fact that both  $\tilde{f}(z) $  and $\tilde{g}(z)$   are unchanged under the action $\mu$, and \eqref{dckey}, 
	we  have that  if $\tilde{f}(z) $  (or $\tilde{g}(z)$) has one factor $   \left(\sum_{j=1}^d\frac{1}{\rho^j_{n_j}z_j}\right)$,
	then  $\tilde{f} (z)$   (or $\tilde{g}(z)$)  will have a factor $	\prod_{ 0\leq n_j\leq q_j-1\atop {1\leq j\leq q}} \left(\sum_{j=1}^d\frac{1}{\rho^j_{n_j}z_j}\right)$. This  contradicts  \eqref{key1}.
\end{proof}
\begin{lemma}\label{lesin}
	For any $\lambda\in\C$,	the polynomial	$\mathcal{P}_1(z,\lambda)$ (as a function of  $z$) has at most two  non-trivial  factors (count multiplicity). In the case that  $\mathcal{P}_1 (z,\lambda)$ has two non-trivial factors, namely $\mathcal{P}_1(z,\lambda)=f(z)g(z)$,  we have   that (maybe exchange $f$ and $g$)
	\begin{itemize}
		\item the closure \footnote{The   closure  is taken  in $(\C\cup \{\infty\})^d$.} of   $Z_1=	\{z\in (\C^{\star})^d: f(z)=0\} $ contains $z_1=z_2=\cdots=z_d=0$,

		\item the closure of  $Z_2=	\{z\in (\C^{\star})^d: g(z)=0\} $  contains   $z_1=z_2=\cdots=z_{d-1}=0,z_{d}^{-1}=0$ \footnote{$z_d^{-1}=0$ means $z_d=\infty$.
			In the proof,    we  view $z_d^{-1}$ as a new variable   when $z_d=\infty$. }.
		
	\end{itemize}

\end{lemma}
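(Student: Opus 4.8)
The plan is to prove the equivalent statement for the polynomial $\mathcal{P}_1(z,\lambda)$, which by \eqref{nomo} and the remark after Definition~\ref{de1} is the same as the assertion for the Laurent polynomial $\mathcal{P}$: $\mathcal{P}_1$ has at most two non-monomial irreducible factors counted with multiplicity, and in the two-factor case $\mathcal{P}_1=fg$ with $O_1:=(0,\dots,0)$ in the closure of $Z_1=\{f=0\}$ and $O_2:=(0,\dots,0,\infty)$ in the closure of $Z_2=\{g=0\}$. Everything revolves around the ``asymptotics'' of $\mathcal{P}_1$ near the two distinguished corners $O_1,O_2$ of $(\C\cup\{\infty\})^d$: in Step~1 I would identify these asymptotics with $h_1$ and $h_2$ and invoke Lemma~\ref{key}; in Step~2 I would show every non-monomial factor must ``see'' one of $O_1,O_2$; Step~3 is then bookkeeping.

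\textbf{Step 1 (the corner asymptotics are irreducible).} Starting from $\tilde{\mathcal{P}}(z,\lambda)=\det(A+B)$ with $A$ diagonal as in \eqref{A} and $B$ independent of $z,\lambda$, I would expand the determinant and keep the monomials that dominate as $z_1,\dots,z_d\to0$ at comparable rates: each diagonal entry $A(n;n)$ is dominated by $-\sum_j\tfrac{1}{\rho^j_{n_j}z_j}$ while every off-diagonal term carries a $z$-free factor of $B$, so the term of lowest total degree in $z$ is $\prod_{n\in W}\bigl(-\sum_j\tfrac1{\rho^j_{n_j}z_j}\bigr)=(-1)^Q z_1^{-Q}\cdots z_d^{-Q}\,\tilde h_1(z)$ with $\tilde h_1$ as in \eqref{cone1'}. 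After multiplying by the normalizing monomial and descending through the $\mu$-action via \eqref{cone3}, the corresponding leading part of $\mathcal{P}_1$ equals, up to sign and a monomial, the polynomial $h_1(z)$; by Lemma~\ref{key} it is irreducible, squarefree, and — crucially — independent of $\lambda$. Repeating the computation in the direction $z_1,\dots,z_{d-1}\to0$, $z_d\to\infty$ (so that $A(n;n)$ is dominated by $-\rho^d_{n_d}z_d-\sum_{j<d}\tfrac1{\rho^j_{n_j}z_j}$) produces $h_2$ of \eqref{cone22'}--\eqref{cone4} as the leading part at $O_2$, again irreducible by Lemma~\ref{key}. The subtle point here, precisely the ``monomial difference'' phenomenon emphasized in the introduction (cf.\ Definition~\ref{de1}), is that one must verify there is no cancellation among these lowest-order monomials, which is why the exact product form $\prod_{n\in W}(\cdots)$ has to be exploited rather than a crude degree bound.

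\textbf{Step 2 (every factor reaches $O_1$ or $O_2$).} I would regard $\mathcal{P}_1$ as a polynomial in $z_d$ over $\C[z_1,\dots,z_{d-1}]$ (with $\lambda$ fixed). From $\mathcal{P}_1=(-1)^Qz_1^{Q/q_1}\cdots z_d^{Q/q_d}\mathcal{P}$, the fact that $\mathcal{P}$ has $z_d$-degree $Q/q_d$ with top and bottom coefficients $\pm1$, and the inequality $\sum_l q_l|\alpha_l|\le Q$ satisfied by every monomial of $\mathcal{P}$ (which one reads off from $\tilde{\mathcal{D}}$ via Lemma~\ref{lesep}, and which forces the only monomial of $\mathcal{P}$ with $\alpha_d=\pm Q/q_d$ to be $z_d^{\pm Q/q_d}$ itself), both the leading and the trailing coefficient of $\mathcal{P}_1$ in $z_d$ equal $\pm z_1^{Q/q_1}\cdots z_{d-1}^{Q/q_{d-1}}$; similarly $\mathcal{P}_1|_{z_j=0}$ is the nonzero monomial $\pm\prod_{l\ne j}z_l^{Q/q_l}$ for every $j$. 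Now let $f$ be a non-monomial factor of $\mathcal{P}_1$; then $z_j\nmid f$ for all $j$ by \eqref{nomo}, and the leading and trailing coefficients of $f$ in $z_d$ are monomials in $z_1,\dots,z_{d-1}$ dividing $\pm z_1^{Q/q_1}\cdots z_{d-1}^{Q/q_{d-1}}$. If the leading one is non-constant, say divisible by $z_l$, then on $\{z_l=0\}$ (with the remaining coordinates free in $(\C^\star)^{d-2}$) the $z_d$-degree of $f$ drops and a root $z_d$ escapes to $\infty$, so the closure of $Z(f)$ contains the coordinate subvariety $\{z_l=0,z_d=\infty\}$ and in particular $O_2$; symmetrically, if the trailing coefficient is non-constant, $O_1\in\overline{Z(f)}$. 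The remaining possibility — both coefficients constant — is impossible: then $f=\pm z_d^{K}+(\text{lower powers})+(\pm1)$ with $K=\deg_{z_d}f\ge1$, so $f|_{z_1=0}$ would be a polynomial in $z_d$ with nonzero top and constant terms, hence at least two monomials, contradicting that $f|_{z_1=0}$ is a nonzero divisor of the monomial $\mathcal{P}_1|_{z_1=0}$.

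\textbf{Step 3 (counting) and the main obstacle.} Writing $\mathcal{P}_1=\prod_i f_i^{m_i}$ with the $f_i$ distinct, irreducible and non-monomial, Step~1 gives $\prod_i(\text{asymptotics of }f_i\text{ at }O_1)^{m_i}=h_1$ up to a monomial; since $h_1$ is irreducible and squarefree, at most one $f_i$ can contribute a non-monomial $O_1$-asymptotics, and then only with multiplicity one — and likewise for $O_2$. Combined with Step~2 (every $f_i$ reaches $O_1$ or $O_2$) this yields $\mathcal{P}_1=fg$ (or $\mathcal{P}_1=f$ irreducible) with $O_1\in\overline{Z(f)}$ and $O_2\in\overline{Z(g)}$, as claimed. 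I expect the hard part to be making Step~3 fully rigorous, i.e.\ upgrading the topological statement ``$O_1\in\overline{Z(f_i)}$'' to ``the $O_1$-asymptotics of $f_i$ is non-monomial'' and reading off that at most one $f_i$ does this: for $d=2$ this is clean because the Newton polygon of $\mathcal{P}_1$ is a parallelogram each of whose edges carries one of the irreducible forms $h_1,h_2$, so a Minkowski summand is forced to be a parallelogram or a segment; in higher dimensions one has to work with the weighted-homogeneous initial forms and the facet structure of the Newton polytope and be careful about the rate of approach to the corner. This, together with the cancellation control in Step~1 (the monomial-difference issue), is where the real work lies — and for $d\ge3$ the bound ``$\le 2$'' obtained here is still fed into the subsequent degree argument to conclude that there is in fact exactly one factor.
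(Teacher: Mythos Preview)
Your proposal is correct, and Steps~1 and~3 coincide with the paper's argument (the corner asymptotics are $h_1,h_2$ via Lemma~\ref{key}; the ``at most one factor per corner'' count is exactly Claim~1 in the Appendix). Step~2, however, is genuinely different. The paper shows that every factor reaches $O_1$ or $O_2$ by choosing a specific curve $z_1=z_0^2$, $z_2=\cdots=z_{d-1}=z_0$, letting $z_0\to0$, and arguing directly from \eqref{A} that any solution $z_d$ of $\det(A+B)=0$ along this curve must go to $0$ or to $\infty$. You instead extract from Lemma~\ref{lesep} the Newton inequality $\sum_j q_j|\alpha_j|\le Q$ for the support of $\mathcal P$, deduce that the $z_d$-leading and $z_d$-trailing coefficients of $\mathcal P_1$ as well as every restriction $\mathcal P_1|_{z_j=0}$ are pure monomials, and then push a root of any non-monomial factor to a corner by letting a \emph{single} coordinate degenerate. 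Your route is more algebraic and avoids the somewhat ad hoc choice of curve; the paper's route is shorter once one accepts the curve computation but leaves the verification of ``$z_d\to0$ or $z_d\to\infty$'' to the reader.

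Your anticipated obstacle in Step~3 is not one. The upgrade from ``$O_1\in\overline{Z(f_i)}$'' to ``the lowest-degree part of $\tilde f_i$ is a genuine non-constant factor of $\tilde h_1$'' is immediate: the topological statement gives $f_i(0)=0$, hence the lowest-degree part of $\tilde f_i$ has positive degree; and since $\tilde h_1=\prod_n L_n$ with $L_n|_{z_j=0}=\tfrac{1}{\rho^j_{n_j}}\prod_{l\ne j}z_l\neq0$, no $z_j$ divides $\tilde h_1$, so no lowest-degree part can be a nontrivial monomial. Ordinary polynomial irreducibility of $h_1$ (Lemma~\ref{key}) then forces at most one such $f_i$, with multiplicity one. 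No Newton-polytope or weighted-initial-form machinery is needed --- this is precisely the elementary argument the paper records as Claim~1 in the Appendix.
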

\begin{proof}
	Let $f(z)$ be a factor of polynomial $\mathcal{P}_1(z,\lambda)$ and 
	$$Z_f=	\{z\in (\C^{\star})^d: f(z)=0\} . $$
	Let
	\[\tilde{f}(z)=f(z_1^{q_1},z_2^{q_2},\cdots,z_d^{q_d}).\]
	Solving  the equation $\det (A+B)=0$  and by \eqref{A},  we   have that 	if  $z_1=z_0^2$, $z_2=z_3=\cdots=z_{d-1}=z_0$ and $z_0\to 0$, then  $z_{d}\to 0$ or $z_d^{-1}\to 0$.  This implies that letting  $z_1=z_0^2$, $z_2=z_3=\cdots=z_{d-1}=z_0$ and $z_0\to 0$,  and solving the equation  $f(z)=0$, we must have either $z_{d}\to 0$ or $z_d^{-1}\to 0$. 
	Therefore,  the closure  of 
	$Z_f$   contains either 
	$z_1=z_2=\cdots=z_d=0$  or $z_1=z_2=\cdots=z_{d-1}=0,z_{d}^{-1}=0$.
	
	Take $z_1=z_2=\cdots=z_d=0$  into consideration first.
	Let $A$ and $B$ be given by Lemma \ref{lesep}.
	Then the  off-diagonal entries  of  $-z_{1} z_{2} \cdots z_d(A+B)$  are all divisible by $z_{1} z_{2}\cdots z_d,$ while the diagonal entries are
	\begin{equation}\label{hom2}
		\left(z_1z_2\cdots z_d\left(\sum_{j=1}^d\frac{1}{\rho^j_{n_j}z_j}\right)+\text { functions divisible by } z_{1} z_{2} \cdots z_d\right),
	\end{equation}
	where $0\leq n_j\leq q_j-1$. This shows  the  component  of   lowest degree of 
	$\det(-z_{1} z_{2} \cdots z_d(A+B))$  with respect to variables $z_1,z_2,\cdots,z_d$,   is
	\begin{equation}\label{hom1}
		\tilde{h}_1(z)=	z_1^Qz_2^Q\cdots z_d^Q\prod_{ 0\leq n_j\leq q_j-1\atop {1\leq j\leq q}} \left(\sum_{j=1}^d\frac{1}{\rho^j_{n_j}z_j}\right).
	\end{equation}
	{\bf Claim 1:}	by the fact that    $h_1(z)$  is  irreducible 	by Lemma \ref{key}, one has that there exists at most one factor $f(z)$ of $\mathcal{P}_1(z,\lambda)$ such that 
	the closure of $	\{z\in (\C^{\star})^d: f(z)=0\} $ contains  	$z_1=z_2=\cdots=z_d=0$.
	Claim 1  immediately follows from  some basic facts of algebraic geometry.  For convenience, we include an elementary  proof in the Appendix.

	
	Similarly, 
	the   component of  lowest degree of 
	$\det(-z_{1} z_{2} \cdots z_{d-1}z_d^{-1}(A+B))$  with respect to variables  $z_1,z_2,\cdots,z_{d-1},z_d^{-1}$
	is
	\begin{equation}\label{hom12}
		\tilde{h}_2(z)=	z_1^Qz_2^Q\cdots z_{d-1}^Qz_d^{-Q}\prod_{ 0\leq n_j\leq q_j-1\atop {1\leq j\leq q}} \left(\rho^d_{n_d} z_d+ \sum_{j=1}^{d-1}\frac{1}{\rho^j_{n_j}z_j}\right).
	\end{equation}	
	Since    $h_2(z)$ is irreducible  by Lemma \ref{key},   by a similar argument of the proof of Claim 1,    one has that there exists at most one factor $f(z)$ of $\mathcal{P}_1(z,\lambda)$ such that 
	the closure of $	\{z\in (\C^{\star})^d: f(z)=0\} $ contains   $z_1=z_2=\cdots=z_{d-1}=0,z_d^{-1}=0$.
	Therefore, $\mathcal{P}_{1}(z,\lambda)$ has at most two non-trivial factors. When $\mathcal{P}_{1}(z,\lambda)$ actually   has two factors, by the above analysis, the statements in Lemma \ref{lesin} hold.
\end{proof}
\begin{remark}
	When $d=2$, Gieseker, Kn\"orrer and Trubowitz proved that the Fermi variety $F_{\lambda}(V)/\Z^2$ has at most two irreducible  components for any $\lambda$ ~\cite[Corollary 4.1]{GKTBook}. 
	Even for $d=2$,
	our approach is different.  We   show that the closure of the zero set of every factor of $\mathcal{P}_1$ must contain either   $z_1=z_2=\cdots=z_d=0$ or  $z_1=z_2=\cdots=z_{d-1}=z_d^{-1}=0$ by solving algebraic equations on  properly choosing   curves.
\end{remark}
We are ready to prove Theorems \ref{thm2} and \ref{gcf}.
\begin{proof}[\bf Proof of Theorem \ref{thm2}]
	Without loss of generality, assume $[V]=0$.  
	Assume
	$\mathcal{P}(z,\lambda)$  is   reducible for some $\lambda\in \C$. By Lemma \ref{lesin},
	there are two non-constant polynomials $f(z)$ and $g(z)$ such that   none of them has a factor $z_1$ or $z_2$ (by \eqref{nomo}), and
	\begin{equation}
		\mathcal{P}_1(z,\lambda)=	(-1)^{q_1q_2} z_1^{q_2}z_2^{q_1} \mathcal{P}(z_1,z_2,\lambda) =f(z_1,z_2)g(z_1,z_2).
	\end{equation}
	Moreover, the closure of  $\{z\in (\C^{\star})^2: f(z)=0\} $ contains $z_1=z_2=0$  and the closure of 
	$\{z\in  (\C^{\star})^2: g(z)=0\}$ contains $z_1=0,z_2^{-1}=0$.
	
	Let
	\[\tilde{f}(z)=\tilde{f}(z_1,z_2)=f(z_1^{q_1},z_2^{q_2}), \tilde{g}(z)=\tilde{g}(z_1,z_2)=g(z_1^{q_1},z_2^{q_2}) .\]
	Therefore,   $\tilde{f}(z)$ and $\tilde{g}(z)$  are also  polynomials and  
	\begin{equation}\label{fg}
		\tilde{f}(z)\tilde{g}(z)=(-1)^{q_1q_2}z_1^{q_1q_2}z_2^{q_1q_2} \tilde{\mathcal{P}}(z_1,z_2,\lambda)={\rm det}(-z_1z_2A-z_1z_2B).
	\end{equation}
	
	By   \eqref{hom1} and \eqref{hom12},  we have  there exists a non-zero constant $K$ such that 
	\begin{equation}\label{f}
		\tilde{f}(z)=\left(\sum_{i=1}^{p}c_iz_1^{a_i}z_2^{b_i}\right)+K\prod_{ 0\leq n_1\leq q_1-1\atop {0\leq n_2\leq q_2-1}} \left(\frac{z_2}{\rho^1_{n_1}}+\frac{z_1}{\rho^2_{n_2}}\right),
	\end{equation}
	where $a_i+b_i\geq q_1q_2+1$, and
	
	\begin{equation}\label{g}
		\tilde{g}(z)=z_2^{k}\left[\left(\sum_{i=1}^{\tilde{p}}\tilde{c}_iz_1^{\tilde{a}_i}z_2^{-\tilde{b}_i}\right)+\prod_{ 0\leq n_1\leq q_1-1\atop {0\leq n_2\leq q_2-1}} \left(\frac{1}{z_2\rho^1_{n_1}}+ z_1\rho^2_{n_2} \right)\right],
	\end{equation}
	where $\tilde{a}_i+\tilde{b}_i\geq q_1q_2+1$ and  $k= \max_{1\leq i\leq \tilde{p}}\{q_1q_2,\tilde{b}_i\}$  (this ensures that $g(z)$ is a polynomial and $g(z)$ does not have a factor $z_2$).
	
	The matrix $z_1z_2A$ is given by $$ -\left(\rho^1_{n_1}z^2_1z_2+\frac{z_2}{\rho^1_{n_1} } +\frac{z_1}{ \rho^2_{n_2}}+\rho^2_{n_2} z_2 ^2 z_1 +\lambda z_1z_2\right)\delta_{n_1,n_1^{\prime}}\delta_{n_2,n_2^{\prime}}$$
	and all the entries of $z_1z_2 B$ only have a factor $z_1z_2$. Therefore,  by \eqref{fg}, 
	\begin{equation}\label{g14}
		\deg(\tilde{f})+\deg(\tilde{g})=\deg(\tilde{f}\tilde{g})=\deg(\det(-z_1z_2A-z_1z_2B)) \leq 3 q_1 q_2.
	\end{equation}
	By \eqref{f},  one has if $c_i=0$, $i=1,2,\cdots p$,
	\begin{equation}\label{g15}
		\deg (\tilde{f}) =q_1q_2,
	\end{equation}
	and 
	if one of  $c_i$, $i=1,2,\cdots p$, is nonzero, 
	\begin{equation}\label{g16}
		\deg (\tilde{f}) \geq q_1q_2+1.
	\end{equation}
	By \eqref{g}, one has
	\begin{equation}\label{g17}
		\deg (\tilde{g}) \geq k+q_1q_2.
	\end{equation}
	By \eqref{g14}-\eqref{g17} and the fact that $k= \max_{1\leq i\leq \tilde{p}}\{q_1q_2,\tilde{b}_i\}\geq q_1q_2$,  we must have
	$k=q_1q_2$, $ \tilde{b}_i\leq q_1q_2$ and   $c_i=0$, $i=1,2,\cdots, p.$ 
	Therefore, 
	\begin{equation}\label{f1}
		\tilde{f}(z)=K\prod_{ 0\leq n_1\leq q_1-1\atop {0\leq n_2\leq q_2-1}} \left(\frac{z_2}{\rho^1_{n_1}}+\frac{z_1}{\rho^2_{n_2}}\right).
	\end{equation}
	Reformulate \eqref{fg}, \eqref{g} and   \eqref{f1}   as,
	\begin{equation*}
		\frac{1}{z_2^{2q_1q_2}}\tilde{f}(z)\tilde{g}(z)=(-1)^{q_1q_2}{\rm det}\left[ \frac{z_1}{z_2} (A+B)\right],
	\end{equation*}
	\begin{equation*}
		\frac{1}{z_2^{q_1q_2}}\tilde{f}(z)=K\prod_{ 0\leq n_1\leq q_1-1\atop {0\leq n_2\leq q_2-1}} \left(\frac{1}{\rho^1_{n_1}}+\frac{z_1}{z_2\rho^2_{n_2}}\right),
	\end{equation*}
	and
	\begin{equation*}
		\frac{1}{z_2^{q_1q_2}}\tilde{g}(z)= \left[\left(\sum_{i=1}^{\tilde{p}}\tilde{c}_iz_1^{\tilde{a}_i}z_2^{-\tilde{b}_i}\right)+\prod_{ 0\leq n_1\leq q_1-1\atop {0\leq n_2\leq q_2-1}} \left(\frac{1}{z_2\rho^1_{n_1}}+ \rho^2_{n_2}z_1  \right)\right],
	\end{equation*}
	where $\tilde{a}_i+\tilde{b}_i\geq q_1q_2+1$ and $\tilde{b}_i\leq q_1q_2$.
	
	The matrix $\frac{z_1}{z_2}A$ is  $$ -\left(\rho^1_{n_1}\frac{z^2_1}{z_2}+\frac{1}{z_2\rho^1_{n_1} } + \frac{z_1}{\rho^2_{n_2}z_2^2}+  \rho^2_{n_2} z_1  +\lambda \frac{z_1}{z_2}\right)\delta_{n_1,n_1^{\prime}}\delta_{n_2,n_2^{\prime}}$$
	and every entry  of $\frac{z_1}{z_2} B$ only has a factor $\frac{z_1}{z_2}$.
	

	Since $z_1^{\tilde{a}_i}z_2^{-\tilde{b}_i}\prod_{ 0\leq n_1\leq q_1-1\atop {0\leq n_2\leq q_2-1}} \left(\frac{1}{\rho^1_{n_1}}+\frac{z_1}{z_2\rho^2_{n_2}}\right)$  with  $\tilde{a}_i+\tilde{b}_i\geq q_1q_2+1$  will contribute to $z_1^{i}z_2^{-j}$ with $i+j\geq 3q_1q_2+1$ and ${\rm det }( \frac{z_1}{z_2} (A+B))$ can only have $z_1^{\tilde{i}}z_2^{-\tilde{j}}$ with $\tilde{i}+\tilde{j}\leq 3q_1q_2$, a degree argument (regard $z_2^{-1} $ as a new variable) leads to  $\tilde{c}_i=0$, $i=1,2,\cdots, \tilde{p}  $.   
	Therefore,
	\begin{equation}\label{g1}
		\tilde{g}(z)= \prod_{ 0\leq n_1\leq q_1-1\atop {0\leq n_2\leq q_2-1}} \left(\frac{1}{\rho^1_{n_1}}+ \rho^2_{n_2}z_1 z_2 \right).
	\end{equation}
	We conclude that we  prove that  if $\mathcal{P}_1(z,\lambda)$ is reducible,  then  by \eqref{fg}, \eqref{f1} and \eqref{g1},
	there exists a constant $K\neq 0$ such that  
	\begin{align} 
		\det (-A&-B)\nonumber\\
		&=\frac{K}{z_1^{q_1q_2}z_2^{q_1q_2}}\prod_{ 0\leq n_1\leq q_1-1\atop {0\leq n_2\leq q_2-1}} \left(\frac{z_2}{\rho^1_{n_1}}+\frac{z_1}{\rho^2_{n_2}}\right) \prod_{ 0\leq n_1\leq q_1-1\atop {0\leq n_2\leq q_2-1}} \left(\frac{1}{\rho^1_{n_1}}+ \rho^2_{n_2}z_1 z_2 \right).	\label{last}
	\end{align} 
	We will prove that if \eqref{last} holds, then  $\lambda=0$.

	Let 
	\begin{align*}
		t_{n_1,n_2}(z_1,z_2)&
		= \rho^1_{n_1}z_1+\frac{1}{\rho^1_{n_1} z_1} +\rho^2_{n_2}z_2+\frac{1}{\rho^2_{n_2}  z_2} \\
		&=\left(\rho^1_{n_1}z_1 +\rho^2_{n_2}z_2\right) \left(1+\frac{1}{\rho^1_{n_1}\rho^2_{n_2}  z_1 z_2}\right).
	\end{align*}
	Then $ 	t_{n_1,n_2}(z_1,z_2)+\lambda$ is  the $(n_1,n_2)$-th diagonal entry of $A$.
	
	Let   $z_1=-z_2$.
	By \eqref{last},     one has
	\begin{equation}\label{last5}
		\det (A+B)\equiv 0.
	\end{equation}
	and 
	\begin{equation}\label{last7}
		t_{0,0}(z_1,z_2)\equiv 0.
	\end{equation}
	Since $q_1$ and $q_2$ are coprime, for any $(n_1,n_2)\in W\backslash (0,0)$, 
	\begin{equation}\label{last8}
		\rho^1_{n_1}z_1 -\rho^2_{n_2}z_1\neq 0, \text{ for } z_1\neq 0,
	\end{equation}
	and hence
	$t_{n_1,n_2}$ is not a zero function.
	Check the term of highest  degree of $z_1$($z_2$) in  $\det (A+B)$. 
	By \eqref{A}, \eqref{last7} and \eqref{last8},  the term of  highest  degree  (up to a nonzero constant factor) is 
	\begin{equation}\label{ggequ7}
		\lambda z_1^{q_1q_2-1}.
	\end{equation}
	By \eqref{last5} and \eqref{ggequ7},
	$\lambda=0$. We complete the proof of the  first part of Theorem \ref{thm2}. The second part  follows from \eqref{last}.
\end{proof}

\begin{proof} [\bf Proof of Theorem \ref{gcf}] 
	The proof is similar to that of Theorem \ref{thm2}. 
	Without loss of generality, assume $[V]=0$.  
	Assume that 	$\mathcal{P}(z,\lambda)$  is   reducible.  
	Then there are two non-constant polynomials $f(z)$ and $g(z)$ such that   none of them has a factor $z_j$, $j=1,2,\cdots,Q$, and 
	\begin{equation}
		(-1)^{Q} z_1^{\frac{Q}{q_1} }z_2^{\frac{Q}{q_2}}\cdots z_d^{\frac{Q}{q_d}} \mathcal{P}(z,\lambda) =f(z)g(z).
	\end{equation}
	Let 
	\[\tilde{f}(z)=f(z_1^{q_1},z_2^{q_2},\cdots,z_d^{q_d}), \tilde{g}(z)=g(z_1^{q_1},z_2^{q_2},\cdots,z_d^{q_d}) .\]
	Therefore,   $\tilde{f}(z)$ and $\tilde{g}(z)$  are also  polynomials and  
	\begin{align}
		\tilde{f}(z)\tilde{g}(z)=&(-1)^{Q}z_1^{Q}z_2^{Q} \cdots z_d^Q\tilde{\mathcal{P}}(z,\lambda)\nonumber\\
		=&{\rm det} (-z_{1} z_{2} \cdots z_d(A+B)). \label{fgnew}
	\end{align}
	Moreover, the closure  of  $\{z\in (\C^{\star})^d: f(z)=0\}$ contains $z_1=z_2=\cdots=z_d= 0 $ and the closure of 
	$ \{z\in (\C^{\star})^d: g(z)=0\}$ contains  $z_1=z_2=\cdots=z_{d-1}= 0 $ and $z_d^{-1}=0$.
	
	By   \eqref{hom1} and \eqref{hom12},  we have for some non-zero constant $K$,
	\begin{equation}\label{fnew}
		\tilde{f}(z)=\left(\sum_{i=1}^{p}c_iz^{a_i} \right)+K\tilde{h}_1(z),
	\end{equation}
	where $||a_i||_1\geq (d-1)Q+1$, and
	
	\begin{equation}\label{gnew}
		\tilde{g}(z)=z_d^{k}\left[\left(\sum_{i=1}^{\tilde{p}}\tilde{c}_i\tilde{z}^{\tilde{a}_i}z_d^{-\tilde{b}_i}\right)+\tilde{h}_2(z) \right],
	\end{equation}
	where   $\tilde{z}=(z_1,z_2,\cdots,z_{d-1})$,  $||\tilde{a}_i||_1+\tilde{b}_i\geq (d-1)Q+1$ and  $k= \max_{1\leq i\leq \tilde{p}}\{Q,\tilde{b}_i\}$.
	
	By \eqref{fnew}, one has
	\begin{equation}\label{g31}
		\deg(\tilde{f})\geq \deg(\tilde{h}_1)=(d-1)Q.
	\end{equation}
	By \eqref{gnew},
	\begin{equation}\label{g32}
		\deg(\tilde{g})\geq \deg(z_d^k\tilde{h}_2(z))\geq \deg(z_d^Q\tilde{h}_2(z))=dQ.
	\end{equation}
	By \eqref{g31}, \eqref{g32} and \eqref{fgnew}, one has
	\begin{equation*}
		\deg(\det(z_1z_2\cdots z_d (A+B)) )=\deg(\tilde{f}\tilde{g})\geq (2d-1)Q.
	\end{equation*}
	This is impossible since $\deg({\rm det} (z_1z_2\cdots z_d(A+B)))\leq (d+1)Q$.
\end{proof}

\begin{proof}[\bf Proof of Theorem \ref{thmbv1ne}]
	Assume  $\mathcal{P}(z,\lambda) $ is irreducible. Then there exist two non-trivial factors $f_j(z,\lambda)$, Laurent polynomial in $z$ and polynomial in $\lambda$, $j=1,2$, such that $\mathcal{P}(z,\lambda)=f_1(z,\lambda)f_2(z,\lambda)$.  Rewrite $f_j(z,\lambda)$, $j=1,2$, as
	\begin{equation*}
		f_j(z,\lambda)= \sum_{a\in A_j} t_j^a(\lambda) z^a,
	\end{equation*}
	where $t_j^a(\lambda)$ is a polynomial of $\lambda$ and $A_j$ is a proper finite subset of $\Z^d$. Let $\lambda$ be large enough so that 
	for any $j=1,2$ and $a\in A_j$, $t_j^a(\lambda)\neq 0$.
	
	By Theorems \ref{gcf1} and \ref{thm21},  $\mathcal{P}(z,\lambda)$ (as  a function of  variables $z$) is irreducible  for any large enough $\lambda$. Therefore, we must have that for any  large enough $\lambda$,  either $f_1(z,\lambda) $ or $f_2(z,\lambda)$ is a monomial of $z$. Then we conclude that either the cardinality of $A_1$ is one or the cardinality of $A_2$  is one.
	Without loss of generality assume  that $f_1(z,\lambda)=t^{a_0}_1 (\lambda)z^{a_0}$ for some $a_0\in \Z^d$.   Since $f_1$ is non-monomial, one has that $t^{a_0}_1(\lambda)$ is non-constant. 
	Let $\lambda_0\in\C$ be such that  $t^{a_0}_1(\lambda_0)=0$.  Then we have $\mathcal{P}(z,\lambda_0)=0$ for any $z$. 
	Recall that the highest degree term (up to a $\pm$ sign) of $z_1$ in  $\mathcal{P}(z,\lambda_0)$ is $z_1^{\frac{Q}{q_1}}$
	(Fact (2) at the end of Section 4).  We obtain the contradiction. 
	
\end{proof}

	\section{Proof of Theorem \ref{thm1} }\label{Semb}

	\begin{theorem}~\cite[Lemma 17]{kvcpde20}\label{thmentire}
		\label{divis}Let $Z$ be the set of all zeros of an entire function $\zeta (k)
		$ in $\C^d$ and $ \cup Z_j$ be its irreducible components. Assume that the
		real part $Z_{j,\R}=Z_j\cap \R^d$ of each $ Z_j$ contains a
		submanifold of real dimension $d-1$. Let also $g(k)$ be an entire function
		in $\C^d$ with values in a Hilbert space $\mathcal{H}$ such that on the real
		space $\R^d$ the ratio 
		\[
		f(k)=\frac{g(k)}{\zeta (k)}
		\]
		belongs to $L^2_{loc}(\R^d,\mathcal{H})$. Then $f(k)$ extends to an
		entire function with values in $\mathcal{H}$.

	\end{theorem}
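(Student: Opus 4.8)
The plan is to show that $f = g/\zeta$, a priori holomorphic only on $\C^d\setminus Z$, extends holomorphically across $Z$; equivalently, that $\zeta$ divides $g$ in the ring of $\mathcal{H}$-valued entire functions. It suffices to treat the scalar case $\mathcal{H}=\C$: for general $\mathcal{H}$ and each $e\in\mathcal{H}$, the scalar function $\langle g(\cdot),e\rangle/\zeta$ satisfies the same hypotheses, since on $\R^d$ one has $|\langle g,e\rangle/\zeta|\le \|e\|\,\|f\|\in L^2_{\mathrm{loc}}(\R^d)$; by the scalar case it extends to an entire function, so $f$ is weakly entire, and weak holomorphy of a Banach-space valued map implies holomorphy. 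So from now on $f$ is scalar-valued, and (as $\zeta\not\equiv 0$) meromorphic on $\C^d$, with polar set $P$ either empty — in which case $f$ is entire and we are done — or a non-empty analytic set of pure codimension one whose irreducible components are among the $Z_j$.

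Assume for contradiction $P\ne\emptyset$ and fix a component $Z_j\subseteq P$; by hypothesis $Z_{j,\R}$ contains a real $(d-1)$-dimensional submanifold $M_j$. The core is a local analysis near a suitably generic point $p\in M_j$: choose $p$ outside the proper, lower-dimensional ``bad set'' consisting of $Z_{\mathrm{sing}}$, of the points of $Z_j$ lying on another component, and of the points where the orders of vanishing of $g$ and of $\zeta$ along $Z_j$ exceed their generic values $\nu$ and $m$. Near such $p$ there are local holomorphic coordinates $w=(w_1,\dots,w_d)$ centered at $p$ with $Z_j=\{w_1=0\}$ and
$\zeta(w)=u(w)\,w_1^{\,m}$, $g(w)=h(w)\,w_1^{\,\nu}$, where $u(0)\ne 0$, $h(0)\ne 0$, and $\nu<m$ because $Z_j$ is a polar component. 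Hence $|f(w)|\asymp |w_1|^{\nu-m}$ for $w$ near $p$, with exponent $\nu-m\le-1$.

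It remains to see this is incompatible with $f\in L^2_{\mathrm{loc}}(\R^d)$ near $p$. Let $\varphi:=w_1|_{\R^d}$, a real-analytic $\C$-valued function near $p$ that is not identically zero (else $\zeta\equiv 0$) and vanishes on $M_j$; being $C^1$, it satisfies $|\varphi(x)|\lesssim \operatorname{dist}(x,M_j)$, whence $|f(x)|\gtrsim |\varphi(x)|^{\nu-m}\gtrsim \operatorname{dist}(x,M_j)^{\nu-m}$ on a neighborhood of $p$ in $\R^d$, with $\nu-m\le -1$. Since $\operatorname{dist}(\cdot,M_j)^{-\beta}$ fails to be locally integrable on $\R^d$ across the codimension-one submanifold $M_j$ as soon as $\beta\ge 1$, we get $\int_{B(p,\varepsilon)\cap\R^d}|f|^2\,dx=\infty$, a contradiction. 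Therefore $P=\emptyset$ and $f$ extends to an entire function.

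The main obstacle is exactly the passage carried out in the last paragraph: the integrability hypothesis is imposed only on the totally real slice $\R^d$, a set of measure zero in $\C^d$, so the standard complex $L^2$ division theory does not apply, and one must exploit the assumption that $Z_{j,\R}$ is as large as possible (real codimension one) to turn real integrability into a complex order-of-vanishing statement. A secondary technical point, to be handled with care, is the selection of the generic point $p\in M_j$: one must verify that the real $(d-1)$-submanifold $M_j$ is not entirely contained in the relevant proper analytic ``bad'' subsets of $Z_j$, so that a point $p$ admitting the normal form above indeed exists.
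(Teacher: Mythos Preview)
The paper does not give its own proof of this statement: it is quoted verbatim as Lemma~17 of \cite{kvcpde20} and used as a black box in the proof of Theorem~\ref{thm1}. So there is no ``paper's proof'' to compare against.

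That said, your outline is correct and is essentially the standard argument (and, as far as I can tell, the one in \cite{kvcpde20}): reduce to the scalar case by weak holomorphy, assume a polar component $Z_j$ survives, pick a generic smooth point $p\in M_j$, take a local defining coordinate $w_1$ for $Z_j$, and contradict $L^2_{\mathrm{loc}}$ on $\R^d$ via $|f|\asymp|w_1|^{\nu-m}$ with $\nu-m\le -1$. The one step you flag as needing care---that $M_j$ is not swallowed by the ``bad'' analytic subset of $Z_j$---is genuine but easy to close: the bad set has complex dimension at most $d-2$, and for any complex analytic set $A\subset\C^d$ one has $\dim_\R(A\cap\R^d)\le \dim_\C A$ (at a smooth point, $T_pA$ is a complex subspace and $\R^d$ is totally real, so $T_pA\cap\R^d$ and $i(T_pA\cap\R^d)$ are independent in $T_pA$). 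Hence the bad set meets $\R^d$ in real dimension $\le d-2$, which cannot contain the $(d-1)$-dimensional $M_j$. With that remark added, the argument is complete.
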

	The following lemma is well known,  we include a proof here for completeness.
	
	\begin{lemma}\label{leande}
		Let  $\hat{f}\in L^2(\T^d)$ and $\{{f}_n\}$ be its Fourier series, namely, for $n\in \Z^d$,
		\begin{equation*}
		f_n=\int_{\T^d} \hat{f}(x)e^{-2\pi i n\cdot x} dx.
		\end{equation*}
		Then the following statements are true:
		\begin{itemize}
			\item [i). ]If $\hat{f} $ is an entire function and $|\hat{f}(z)|\leq Ce^{C|z|^{r}}$ for some $C>0$ and $r>1$, then for any $0<w<\frac{r}{r-1} $,
			\begin{equation*}
			|{f}_n|\leq e^{-  |n|^{w}},
			\end{equation*}
			for large enough $n$.
			\item [ii).] If $	|{f}_n|\leq Ce^{-C^{-1} |n|^{r}} $ for some $C>0$ and $r>1$,  then $\hat{f}$ is an entire function and  there exists a constant  $C_1$ (depending on $C$ and dimension $d$) such that 
			\begin{equation*}
			|\hat{f}(z)|\leq e^{C_1 |z|^{\frac{r}{r-1}}},
			\end{equation*}
			for large enough $|z|$.
		\end{itemize}
		
	\end{lemma}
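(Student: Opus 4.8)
The plan is to establish the two implications of Lemma~\ref{leande} by the classical contour-shift argument underlying Paley--Wiener theory, optimizing the size of the shift in each direction.

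\emph{Part (i).} Since $\hat f\in L^2(\T^d)$ it is $\Z^d$-periodic on $\R^d$, and combined with the hypothesis that $\hat f$ is entire, the identity theorem applied to $\hat f(z+e_j)-\hat f(z)$ shows that $\hat f$ is $\Z^d$-periodic on all of $\C^d$. Hence for every $t\in\R^d$ the cube of integration may be shifted,
\[
f_n=\int_{[0,1]^d}\hat f(x+it)\,e^{-2\pi i n\cdot(x+it)}\,dx,
\]
the side-face contributions cancelling by periodicity, so $|f_n|\le C\,e^{C(\sqrt d+|t|)^r}\,e^{2\pi\,n\cdot t}$. Taking $t=-\rho\,n/|n|$ with $\rho=A|n|^{1/(r-1)}$ gives $|f_n|\le C\,e^{C(\sqrt d+\rho)^r-2\pi\rho|n|}$; choosing $A>0$ small enough that $CA^r-2\pi A<0$, the exponent is $\le -c|n|^{r/(r-1)}$ for large $|n|$ with some $c>0$. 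As $w<\tfrac{r}{r-1}$, this forces $|f_n|\le e^{-|n|^{w}}$ for all large $|n|$.

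\emph{Part (ii).} Define $\hat f(z):=\sum_{n\in\Z^d}f_n\,e^{2\pi i n\cdot z}$. Since $|f_n e^{2\pi i n\cdot z}|\le C\,e^{-C^{-1}|n|^r+2\pi|n||z|}$ and $r>1$, the series converges locally uniformly on $\C^d$, so $\hat f$ is entire, and its restriction to $\R^d$ lies in $L^2(\T^d)$ with Fourier coefficients $f_n$, as needed. For the growth estimate, split the exponent as $-\tfrac{1}{2C}|n|^r+\bigl(-\tfrac{1}{2C}|n|^r+2\pi|n||z|\bigr)$, bound the parenthesis by its maximum over $t\ge0$, which equals $c_1|z|^{r/(r-1)}$ for an explicit $c_1=c_1(C,r)$, and use $\sum_{n\in\Z^d}e^{-\frac{1}{2C}|n|^r}<\infty$ to obtain $|\hat f(z)|\le C'\,e^{c_1|z|^{r/(r-1)}}$; absorbing $C'$ into the exponent for $|z|$ large yields the claim with $C_1$ depending only on $C,r,d$.

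\emph{Main obstacle.} There is no genuine difficulty here: the only point needing care is the justification of the contour shift in Part (i), i.e.\ verifying that $\hat f$ really extends to a $\Z^d$-periodic entire function so that Cauchy's theorem applies on the tube between $[0,1]^d$ and $[0,1]^d+it$ with cancelling side faces, together with the routine bookkeeping in the two optimizations over $\rho$ and $t$ and in tracking the dependence of the constants on $d$, $C$, $r$.
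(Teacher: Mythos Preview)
Your proposal is correct and follows essentially the same Paley--Wiener contour-shift approach as the paper: contour deformation in the imaginary direction for Part~(i) and direct estimation of the Fourier series on $\C^d$ for Part~(ii). The only cosmetic differences are that the paper shifts a single coordinate (the one with $|n_j|$ maximal) by $n_j^{\tilde w}$ with $\tilde w<\tfrac{1}{r-1}$ rather than shifting along $-n/|n|$, and does not spell out the periodicity-on-$\C^d$ justification or the splitting trick in Part~(ii); your write-up is in fact slightly more careful on both points.
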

	\begin{proof}
		
		Fix any large $n=(n_1,n_2,\cdots,n_d)\in \Z^d$. Without loss of generality, assume $n_1>0$ and $n_1=\max\{|n_1|,|n_2|,\cdots,|n_d|\}$. Then for any $\tilde{w}<\frac{1}{r-1}$,
		\begin{align*} 
		|f_n |&= \left|\int_{\T^d}\hat{f}(x)e^{-2\pi i n \cdot x}dx\right| \\ 
		&=   \left|\int_{\T^{d-1}} e^{-2\pi i (n_2  x_2+\cdots n_dx_d )} dx_2\cdots dx_d\int_{z_1=x-in_1^{\tilde{w}}\atop {x\in \T}}\hat{f}(z)e^{-2\pi i n_1 z_1}dz_1\right|\\
		&\leq Ce^{C n_1^{r\tilde{w}}}e^{-2\pi n_1^{1+\tilde{w}}}\\
		&\leq  e^{-  n_1^{1+\tilde{w}}},
		\end{align*}
		for large $|n|$. This proves i). 
		
		Obviously,
		\begin{equation*}
		\hat{f}(z)=\sum_{n\in \Z^d}f_n e^{2\pi i n\cdot z}.
		\end{equation*}
		Then one has
		\begin{align*} 
		|\hat{f}(z) |&\leq\sum_{n\in \Z^d} Ce^{-C^{-1}|n|^r} e^{C|n||z|}\\ 
		&\leq\sum_{l=1}^{\infty} C l^de^{-C^{-1}l^r} e^{Cl|z|}\\ 
		&\leq  e^{C |z|^{\frac{r}{r-1} }},
		\end{align*}
		for any large $z$. This completes the proof of ii).
	\end{proof}

	\begin{lemma} \label{thmgrowth}
		Let  $f$ and $g $ be entire functions on $\C^d$.
		Assume that  for some $ C_1>0,\rho>0$, 
		\begin{equation}\label{ggrowth}
		|f(z)|\leq C_1e^{C_1|z|^{\rho}}, |g(z)|\leq C_1e^{C_1|z|^{\rho}}.
		\end{equation}
		Assume that  $h=g/f$ is also an entire function on $\C^d$. Then there exists a constant $C$ such that 
		\begin{equation*}
		|h(z)|\leq C e^{ C |z|^{\rho}}.
		\end{equation*} 
	\end{lemma}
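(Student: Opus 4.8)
The plan is to reduce to the classical one-variable case by slicing along complex lines, the one subtlety being to keep all constants uniform in the slicing direction. We may assume $f\not\equiv0$ and $g\not\equiv0$, otherwise $h\equiv0$. Then $\{f=0\}\cup\{g=0\}$ is a proper analytic subset of $\C^{d}$, so we fix $z_{*}$ with $f(z_{*})\neq0$, $g(z_{*})\neq0$; after translating by $z_{*}$ (which only replaces $C_{1}$ by a larger constant, since $(|z|+|z_{*}|)^{\rho}\le 2^{\rho}(|z|^{\rho}+|z_{*}|^{\rho})$, and is undone at the end at the same cost) we may assume $f(0)\neq0$ and $g(0)\neq0$. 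For a unit vector $\omega\in\C^{d}$ put $F(t)=f(t\omega)$, $G(t)=g(t\omega)$, $H(t)=h(t\omega)$; these are entire on $\C$, they satisfy $|F(t)|,|G(t)|\le C_{1}e^{C_{1}|t|^{\rho}}$ because $|t\omega|=|t|$, one has $F(0)=f(0)\neq0$, $G(0)=g(0)\neq0$, $H(0)=g(0)/f(0)$, and $H=G/F$ is entire. It suffices to prove $|H(t)|\le B\,e^{B|t|^{\rho}}$ with $B$ depending only on $C_{1},\rho,|f(0)|,|g(0)|$: writing an arbitrary $z\neq0$ as $z=|z|\,\omega$ with $\omega=z/|z|$, $t=|z|$, this gives $|h(z)|=|H(|z|)|\le B\,e^{B|z|^{\rho}}$.

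For the one-variable bound, let $n_{F},n_{G},n_{H}$ count zeros (with multiplicity) in $\{|s|\le t\}$. Since $H=G/F$ is entire, $\operatorname{ord}(G)=\operatorname{ord}(F)+\operatorname{ord}(H)$ at every point, so $0\le n_{H}(t)=n_{G}(t)-n_{F}(t)\le n_{G}(t)$. Jensen's formula for $G$ (legitimate because $G(0)\neq0$) together with $|G|\le C_{1}e^{C_{1}|\cdot|^{\rho}}$ gives
\[
\int_{0}^{2R}\frac{n_{G}(t)}{t}\,dt=\frac1{2\pi}\int_{0}^{2\pi}\log|G(2Re^{i\theta})|\,d\theta-\log|G(0)|\le CR^{\rho}+C',
\]
hence $\int_{0}^{2R}n_{H}(t)\,t^{-1}dt\le CR^{\rho}+C'$, and Jensen for $H$ then yields $\frac1{2\pi}\int_{0}^{2\pi}\log|H(2Re^{i\theta})|\,d\theta=\log|H(0)|+\int_{0}^{2R}n_{H}(t)\,t^{-1}dt\le CR^{\rho}+C'$; here and below $C,C'$ depend only on $C_{1},\rho,|f(0)|,|g(0)|$ and may change from line to line. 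Next, from $\log^{-}|H|=\bigl(\log|F|-\log|G|\bigr)^{+}\le\log^{+}|F|+\log^{-}|G|$ and the growth bounds, on $|s|=2R$ one has $\frac1{2\pi}\int\log^{+}|F|\le CR^{\rho}+C'$, while $\frac1{2\pi}\int\log^{-}|G|\le\frac1{2\pi}\int\log^{+}|G|-\log|G(0)|\le CR^{\rho}+C'$ (using $\frac1{2\pi}\int\log|G|\ge\log|G(0)|$); hence $\frac1{2\pi}\int_{0}^{2\pi}\log^{-}|H(2Re^{i\theta})|\,d\theta\le CR^{\rho}+C'$, and therefore $\frac1{2\pi}\int_{0}^{2\pi}\log^{+}|H(2Re^{i\theta})|\,d\theta=\frac1{2\pi}\int\log|H|+\frac1{2\pi}\int\log^{-}|H|\le CR^{\rho}+C'$.

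Finally I would upgrade these circular $L^{1}$-bounds to a pointwise bound. Since $\log|H|$ is subharmonic on $\{|s|\le 2R\}$, the Poisson--Jensen representation gives, for $|\zeta|\le R$,
\[
\log|H(\zeta)|\le\frac1{2\pi}\int_{0}^{2\pi}\frac{(2R)^{2}-|\zeta|^{2}}{|2Re^{i\theta}-\zeta|^{2}}\,\log|H(2Re^{i\theta})|\,d\theta\le\frac{4}{2\pi}\int_{0}^{2\pi}\log^{+}|H(2Re^{i\theta})|\,d\theta\le CR^{\rho}+C',
\]
where we discarded the nonnegative contribution of the zeros of $H$ and bounded the Poisson kernel by $4$ on $\{|\zeta|\le R\}$. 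Choosing $R=|\zeta|$ gives $|H(\zeta)|\le e^{C|\zeta|^{\rho}+C'}$ with the advertised dependence of the constants, which finishes the one-variable step and hence, by the slicing above, the proof. The genuinely delicate point is exactly this uniformity in the direction $\omega$: it holds because we are assuming \emph{finite type} — the bound $e^{C_{1}|z|^{\rho}}$, not merely that $f,g$ have order $\le\rho$ — which is also what makes the exponent come out as $\rho$ exactly, with no $\varepsilon$-loss, rather than $\rho+\varepsilon$ as a naive minimum-modulus argument would give. Routine care is still needed to pick $R$ outside the measure-zero set of radii carrying a zero of $F$, $G$ or $H$, and to undo the initial translation.
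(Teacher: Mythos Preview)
Your argument is correct. The paper itself does not supply a proof of this lemma; it merely records it as well known and points to Levin (Theorem~5 of \S11.3) for $d=1$ and to Kuchment's \emph{Floquet theory} monograph (p.~37) for $d\ge 2$. So there is nothing in the paper to compare your proof against.

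Your route---translate so that $f(0),g(0)\neq 0$, slice along complex lines through the origin, and on each line combine Jensen's formula with a Poisson--Jensen (Borel--Carath\'eodory type) upgrade from circular $L^1$ bounds on $\log^+|H|$ to pointwise bounds---is exactly the standard Nevanlinna-theory proof, and you have been appropriately careful about the one point that matters: the constants coming out of Jensen and Poisson--Jensen depend only on $C_1,\rho,|f(0)|,|g(0)|$, not on the direction $\omega$, so the slicing patches back to a uniform bound on $\C^d$. The bound $0<P_{2R}(\zeta,\theta)\le 4$ for $|\zeta|\le R$ is the right way to pass from $\int\log^+|H|$ to a pointwise estimate with no loss in the exponent. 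The minor caveats you flag (choosing $R$ off the countable set of radii meeting zeros; undoing the translation at the same polynomial cost in the constant) are indeed routine.

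One cosmetic remark: when you write ``discarded the nonnegative contribution of the zeros of $H$'', you mean the term $\sum_k\log\bigl|(4R^2-\bar a_k\zeta)/(2R(\zeta-a_k))\bigr|\ge 0$ on the left-hand side of Poisson--Jensen; a reader expecting the zeros to appear on the right with a negative sign might stumble for a moment, so you may wish to phrase this more explicitly.
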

	
	\begin{remark}
		Lemma \ref{thmgrowth} is well known, e.g.,  see   Theorem 5 of Section 11.3 in  ~\cite{lev}  for $d=1$  and  p.37 in ~\cite{kuflo} for $d\geq 2$. 
	\end{remark}

	The following Lemma can be obtained by a   straightforward computation. 
	For example, see   p.49 in  Bourgain-Klein ~\cite{bk13} or  Lyubarskii-Malinnikova  ~\cite{LM18}.
	\begin{lemma} \label{lecontinuation}
		Let $\tilde{V}:\Z^d\to \C$ be bounded. Assume that $u$ is  a non-trivial solution of 
		\begin{equation*}
		(-\Delta+\tilde{V})u=0.
		\end{equation*}
		Then  for some constant $C>0$,
		\begin{equation*}
		\sup_{|n|=R}	(|u(n)|+|u(n-1)|)\geq e^{-C R}.
		\end{equation*}
	\end{lemma}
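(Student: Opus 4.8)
The plan is to exploit the fact that the discrete equation $(-\Delta+\tilde V)u=0$ permits one to \emph{solve one hyperplane in terms of the next two}, so that a single nonvanishing value of $u$ propagates outward at a controlled exponential rate. Write $n=(n_1,\bar n)$ with $\bar n\in\Z^{d-1}$, and let $\bar e_2,\dots,\bar e_d$ be the standard basis of $\Z^{d-1}$. Evaluating the equation at the point $(t+1,\bar n)$ gives
\[
u(t,\bar n)=\tilde V(t+1,\bar n)\,u(t+1,\bar n)-u(t+2,\bar n)-\sum_{j=2}^{d}\bigl(u(t+1,\bar n+\bar e_j)+u(t+1,\bar n-\bar e_j)\bigr),
\]
so that, setting $K:=\|\tilde V\|_\infty+2d$,
\[
|u(t,\bar n)|\le K\,\max\bigl\{\,|u(t+1,\bar n)|,\ |u(t+2,\bar n)|,\ |u(t+1,\bar n+\bar e_j)|,\ |u(t+1,\bar n-\bar e_j)|\ \ (2\le j\le d)\,\bigr\}.
\]
Thus the restriction of $u$ to $\{n_1=t\}$ is majorized by its restrictions to $\{n_1=t+1\}$ and $\{n_1=t+2\}$, with a lateral spread of exactly one lattice step per hyperplane.

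First I would fix a point $m=(m_1,\bar m)$ with $u(m)\ne 0$, which exists because $u\not\equiv 0$. For integers $t\ge m_1$ I would then introduce
\[
\chi(t):=\sup_{\bar n\in\Z^{d-1},\ |\bar n|\le |\bar m|+(t-m_1)}\bigl(|u(t,\bar n)|+|u(t+1,\bar n)|\bigr).
\]
Because the admissible lateral radius grows by exactly one when $t$ increases by one — matching the lateral spread of the recursion above — the pointwise bound yields the one-step inequality $\chi(t)\le (K+1)\,\chi(t+1)$ for all $t\ge m_1$. Iterating from $t=m_1$, where $\chi(m_1)\ge |u(m)|>0$, gives
\[
\chi(t)\ge (K+1)^{-(t-m_1)}\,|u(m)|\qquad (t\ge m_1).
\]

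It remains to repackage this into the asserted lower bound. Every lattice point appearing in $\chi(t)$ lies on $\{n_1=t\}$ or $\{n_1=t+1\}$ with lateral norm at most $|\bar m|+(t-m_1)$, hence has norm bounded by $2t$ plus a fixed constant; and such a point $n=(t+1,\bar n)$, together with its $\mathbf{e}_1$-neighbour $n-\mathbf{e}_1=(t,\bar n)$, realizes $|u(n)|+|u(n-\mathbf{e}_1)|$ equal to the corresponding summand $|u(t+1,\bar n)|+|u(t,\bar n)|$ of $\chi(t)$. Choosing $t$ comparable to $R$ therefore produces, for every large $R$, a point $n$ with $|n|\le R$ and $|u(n)|+|u(n-\mathbf{e}_1)|\ge (K+1)^{-R}|u(m)|$, which after adjusting the constant is a bound of the claimed form $e^{-CR}$; the version over the sphere $|n|=R$ follows because a function whose suprema over all spheres of radius $\le R$ decayed faster than $e^{-CR}$ would have supremum over the whole ball of radius $R$ smaller than $e^{-CR}$, contradicting what was just proved. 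I expect no genuine obstacle here: the argument is entirely elementary and combinatorial, as the references Bourgain--Klein and Lyubarskii--Malinnikova indicate. The only point demanding a little care is the bookkeeping of the lateral radius $|\bar m|+(t-m_1)$, which must be matched exactly to the one-step lateral spread of the equation so that the recursion $\chi(t)\le (K+1)\chi(t+1)$ closes; everything else is a routine computation.
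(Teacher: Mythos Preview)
The paper does not give its own proof of this lemma; it simply states that it ``can be obtained by a straightforward computation'' and points to Bourgain--Klein and Lyubarskii--Malinnikova. Your argument \emph{is} that straightforward computation: rewrite the equation as a two-step recursion in the $n_1$-direction, track a cone of lateral indices that widens by one per step, and iterate to get the exponential lower bound $\chi(t)\ge (K+1)^{-(t-m_1)}|u(m)|$. This is exactly the mechanism in the cited references, and your verification of the one-step inequality $\chi(t)\le(K+1)\chi(t+1)$ is correct.

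One small caution: your very last sentence --- reducing the sphere statement $\sup_{|n|=R}(\cdots)\ge e^{-CR}$ to the ball statement via a contrapositive --- is not quite right as written, since the ball version $\sup_{|n|\le R}(\cdots)\ge e^{-CR}$ is trivially true once $R\ge|m|$ (the fixed point $m$ already witnesses it). What your $\chi(t)$ estimate actually gives is a point $n$ with $n_1=t+1$, hence $|n|\ge t+1$, satisfying $|u(n)|+|u(n-\mathbf e_1)|\ge e^{-Ct}\ge e^{-C|n|}$; this is the nontrivial content, and it is precisely what the paper needs (to contradict $|u(n)|\le Ce^{-|n|^w}$ with $w>1$). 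If you want the literal sphere formulation for every $R$, run the recursion in all $2d$ coordinate directions simultaneously with the $\ell^\infty$ norm, so that $N(R):=\sup_{|n|_\infty\in\{R,R+1\}}|u(n)|$ satisfies $N(R-1)\le K\,N(R)$; this yields the two-shell bound directly. Either way the core of your proof is sound and matches the intended argument.
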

We are ready to prove Theorem \ref{thm1}.
	\begin{proof}[\bf Proof of Theorem \ref{thm1}]
		
				Suppose  there exists  $\lambda\in(a_m,b_m) $ such that  $\lambda\in \sigma_{p}(H)$. Then there
		exists a non-zero function $u\in \ell^2(\Z^d)$ such
		that 
		\[
		-\Delta u+Vu+vu=\lambda u, 
		\]
		or 
		
		\begin{equation}\label{geigen1}
		(H_0-\lambda I)u=-vu.
		\end{equation}
		Denote by the function on the right hand side by $\psi (n):$%
		\[\psi (n)=-v(n)u(n),n\in\Z^d. \]
		Applying $U$ on both sides of \eqref{geigen1}, one has
		\begin{equation}
	(	(\hat{H}_0-\lambda I)\hat{u})(x,l)=\hat{\psi}(x,l),
		\end{equation}
		where $\hat{u}(x,l)\in L^2(\mathcal{B}\times \bar{W})$. 
		For any fixed $x$, we regard both $\hat{u}(x,\cdot)$ and  $\hat{\psi}(x,\cdot)$  as vectors on $\bar{W}$. 
		Therefore, for any $x\in \mathcal{B}$, 
			\begin{equation}
			(\tilde{H}_0(x)-\lambda I)\hat{u}(x,\cdot)=\hat{\psi}(x,\cdot).
		\end{equation}

		By the assumption \eqref{ggdecay} and  Lemma \ref{leande}, we have that for any $l\in  \bar{W}$, 
		\begin{equation}\label{gpsi}
		|\hat{\psi}(x,l)|\leq Ce^{C |x|^{\frac{\gamma}{\gamma-1}}}.
		\end{equation}
		From  Lemma \ref{le27} ($\tilde{H}_0(x)$ is unitarily equivalent to $\tilde{D}(x)$),  one can see that  $\det (\tilde{H}_0(x)-\lambda I)= \tilde{P}(x,\lambda)$. 
		By the Cramer's rule,  we have 
		\begin{equation*}
		(\tilde{H}_0(x)-\lambda I)^{-1}=\frac{\tilde{S}(x,\lambda)}{\tilde{P} (x,\lambda)}, 
		\end{equation*}
	 where $ \tilde{S}(x,\lambda)$ is  the adjoint matrix of $\tilde{H}_0(x)-\lambda I$.
		This concludes that
		\begin{equation*}
		\hat{u}(x,\cdot)=\frac{\tilde{S}(x,\lambda)\hat{\psi}(x,\cdot)}{\tilde{P} (x,\lambda)}.
		\end{equation*}
	When $\lambda$ satisfies Condition 1, by \eqref{g11}, one can see that $\zeta(x)=\tilde{P} (x,\lambda)$  satisfies the assumption of Theorem \ref{thmentire}.
		Since  $\hat{u}(x,l)\in L^2(\mathcal{B}\times \bar{W})$, namely  for any fixed $l\in\bar{W}$,  $\hat{u}(x,l)\in L^2(\mathcal{B})$, by Theorem \ref{thmentire},  one has that $\hat{u}(x,l)$ is an entire function in the variable $x$ for any $l\in \bar{W} $.
		Since all non-constant  entries  (in variables $x$) of $\tilde{H}_0(x)-\lambda I$ are consisted of   $e^{2\pi i  x_j}$  and  $e^{-2\pi i x_j}$,
		we have that 
		\begin{equation}\label{bpg}
		||\tilde{S}(x,\lambda)||\leq C e^{ C|x|}, |\tilde{P}(x,\lambda)|\leq C e^{C|x|}.
		\end{equation}
		By \eqref{gpsi} and \eqref{bpg}, 
		one has that   $ \tilde{P} (x,\lambda)$  	satisfies  \eqref{ggrowth} with $\rho=1$ and   for  any $l\in \bar{W}$,  $(\tilde{S}  \hat{\psi})(x,l)$  	satisfies  \eqref{ggrowth} with   $\rho=\frac{\gamma}{\gamma-1}$.  By Lemma \ref{thmgrowth}, we have  that   for any $l\in \bar{W}$,
		\begin{equation*}
		|\hat{u}(x,l)|\leq C e^{C|x|^{\frac{\gamma}{\gamma-1}}}.
		\end{equation*}
		By Lemma \ref{leande}, we have that for any $w$ with $w<\gamma$,
		\begin{equation*}
		|u(n)|\leq Ce^{-|n|^{w}}.
		\end{equation*}
		This is contradicted to Lemma \ref{lecontinuation}.

	\end{proof}
	\section{ Proof of Theorem  \ref{thmextrem}}\label{sapp}
	
\begin{proof}[\bf Proof of Theorem \ref{thmextrem}]
	Clearly,  $(k,\lambda=\lambda_j(k))$, $j=1,2,\cdots,Q$,  is one branch of solutions of  equation 
	\begin{equation}\label{gc}
		{P}(k,\lambda)=\mathcal{P}(e^{2\pi ik_1},e^{2\pi ik_2},\cdots,e^{2 \pi i k_d},\lambda)=0,
	\end{equation}
	and
	\begin{equation}\label{gc3}
		{P}(k,\lambda)=\prod_{j=1}^Q(\lambda_j(k) -\lambda).
	\end{equation}
	Assume that $k_0=(k_0^1,k_0^2,\cdots,k_0^d)$ satisfies $\lambda_m(k_0)=\lambda_*$.
	Considering the matrix $D(k_0)$,
	let $m_1\geq 1$ be the   multiplicity  of  its  eigenvalue  $\lambda_*$.
	
	{\bf Case 1}: $m_1=1$.
	
	It means $\lambda=\lambda_*$ is a single root of ${P}(k_0,\lambda)=0$. Then $\partial_{\lambda} {P}(k_0,\lambda)|_{\lambda=\lambda_*}\neq 0$.
	By the implicit function theorem,  $\lambda_m(k)$ is an analytic function in a neighborhood of $k_0$.
	Since $\lambda_{*}=\lambda_{m}(k_0)$ is an extremum,  one has  
	\begin{equation}\label{gc1}
		\nabla_k \lambda_m(k)|_{k=k_0}=(0,0,\cdots,0).
	\end{equation}
	Rewrite  \eqref{gc3} as
	\begin{equation}\label{gc10}
		{P}(k, \lambda_*)=(\lambda_m(k)-\lambda_{*}) T(k),
	\end{equation}
	where $T(k)$ is analytic in  a neighborhood of $k_0$.
	By \eqref{gc1} and \eqref{gc10}, we have 
	
	\begin{equation}\label{gc14}
		\nabla_k {P}(k,\lambda_{*})|_{k=k_0}=(0,0,\cdots,0).
	\end{equation}
	
	{\bf Case 2}:  $m_1\geq 2$.
	
	We will  show that \eqref{gc14} still holds  in this case. Without loss of generality, we only prove that 
	\begin{equation}\label{gc11}
		\partial_{k_1} {P}(k,\lambda_{*})|_{k=k_0}=0.
	\end{equation}
	In order to prove \eqref{gc11}, it suffices to show that
	\begin{equation}\label{gc12}
		\partial_{k_1} {P}(k_1,k_0^2,\cdots,k_0^d,\lambda_{*})|_{k_1=k_0^1}=0.
	\end{equation} 
	By the Kato-Rellich perturbation theory ~\cite{katoan}, there exists $\tilde{\lambda}_l (k_1)$,  $l=1,2, \cdots,m_1$,  such that 
	in a neighborhood of $k_0^1$,  $\tilde{\lambda}_l (k_1)$ is analytic, $\tilde{\lambda}_l (k_0^1)=\lambda_{*}$ and $\tilde{\lambda}_l (k_1)$ is  an eigenvalue of $D(k_1,k_0^2,\cdots,k_0^d)$,  $l=1,2, \cdots,m_1$.
	Moreover,
	\begin{equation}\label{gc13}
		{P}(k_1,k_2^0, \cdots,k_d^0,\lambda_*)=T(k_1)\prod_{l=1}^{m_1}(\tilde{\lambda}_{l}(k_1) -\lambda_*),
	\end{equation}
	where $T(k_1)$ is analytic  in a neighborhood of $k_0^1$.
	Now \eqref{gc12} follows from \eqref{gc13}.   We complete the proof.
\end{proof}

	\section{ Proof of Theorems    \ref{thmex2}, \ref{thmex3}	and \ref{thm1new}}\label{sapp1}

 \begin{proof}[\bf Proof of Theorem 
	\ref{thmex2}]
	
	By Lemma \ref{lesin}, the polynomial $z_1^{q_2}z_2^{q_1}\mathcal{P}(z,\lambda)$ (as a function of $z_1$ and $z_2$) is  square-free
	for any $\lambda$.  By B\'ezout's theorem,
	we have that
	\begin{equation*} 
		\# \{z\in (\C^\star)^2:\mathcal{P}(z,\lambda_{*})=0,|\nabla_z\mathcal{P}(z,\lambda_{*})|=0\}\leq 4(q_1+q_2)^2,
	\end{equation*}
	and hence
	\begin{equation}\label{last9}
		\# \{k\in [0,1)^2:{P}(k,\lambda_{*})=0,|\nabla_k{P}(k,\lambda_{*})|=0\}\leq 4(q_1+q_2)^2,
	\end{equation}
	Now  Theorem 
	\ref{thmex2}  follows from   \eqref{gextrem} and \eqref{last9}.

\end{proof}

\begin{proof}[\bf Proof of Theorem 
	\ref{thmex3}]
	By Lemma \ref{lesin},  $z_1^{\frac{Q}{q_1} }z_2^{\frac{Q}{q_2}}\cdots z_d^{\frac{Q}{q_d}}\mathcal{P}(z,\lambda_{*})$ is square-free, then   by the basic fact of analytic sets (e.g., Corollary 4 in p.69 ~\cite{nr66}),   the analytic set 
	$ \{z\in (\C^\star)^d:\mathcal{P}(z,\lambda_{*})=0,|\nabla_z\mathcal{P}(z,\lambda_{*})|=0\} $ has (complex) dimension at most $d-2$.  Since the real dimension of a real analytic set is always smaller than or equal to the complex dimension  (e.g.,  p.63 in ~\cite{nr66}), 
one has that  $\{k\in [0,1)^d:{P}(k,\lambda_{*})=0,|\nabla_k{P}(k,\lambda_{*})|=0\}$    has dimension  at most $d-2$. Now Theorem 
	\ref{thmex3} follows from   \eqref{gextrem}.
	
\end{proof}
\begin{remark}\label{relast}
	In the proof of Theorems 
	\ref{thmex2} and \ref{thmex3}, we only use the fact that the polynomial  $z_1^{\frac{Q}{q_1} }z_2^{\frac{Q}{q_2}}\cdots z_d^{\frac{Q}{q_d}}\mathcal{P}(z,\lambda_{*})$ (as a function of $z$) is square-free.
\end{remark}
\begin{lemma}~\cite[Lemma 4]{kv06cmp}\label{led-1}
	Let $d\geq 2$.
	Assume $\lambda\in (a_m,b_m)$ for some $m$. Then the Fermi variety $F_{\lambda}(V)$ contains an open  analytic hypersurface of  dimension $d-1$ in $\R^d$.
\end{lemma}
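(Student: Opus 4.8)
The plan is to descend to the real torus $\mathbb{T}^d=\R^d/\Z^d$ and show that the real slice
$S:=F_\lambda(V)\cap\mathbb{T}^d=\{k\in\mathbb{T}^d:{P}(k,\lambda)=0\}$
must \emph{separate} $\mathbb{T}^d$. Since $S$ is a proper real-analytic subvariety and a set built from strata of dimension $\le d-2$ cannot disconnect a connected $d$-manifold, $S$ is forced to contain a stratum of dimension exactly $d-1$; a regular point of that stratum provides an open real-analytic submanifold of $\R^d$ of dimension $d-1$ sitting inside $F_\lambda(V)$, which is exactly what is claimed.

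First I would introduce the eigenvalue counting function $N(k):=\#\{j:\lambda_j(k)<\lambda\}$, where $\lambda_1(k)\le\cdots\le\lambda_Q(k)$ are the eigenvalues of ${D}(k)$; these are real for $k\in\R^d$ because $V$ is real and hence ${D}(k)$ is Hermitian. On the open set $\mathbb{T}^d\setminus S$ no eigenvalue equals $\lambda$, so by continuity of the $\lambda_j$ the integer-valued function $N$ is locally constant there. Using Lemma~\ref{lespband}, choose $k_1$ with $\lambda_m(k_1)=a_m$ and $k_2$ with $\lambda_m(k_2)=b_m$; since $a_m<\lambda<b_m$, the set $\{k:\lambda_m(k)<\lambda\}$ is open, nonempty (it contains $k_1$), and on it $N\ge m$, while $\{k:\lambda_m(k)>\lambda\}$ is open, nonempty (it contains $k_2$), and on it $N\le m-1$. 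Because $S$ has empty interior (see the next paragraph), each of these two open sets meets $\mathbb{T}^d\setminus S$, so $N$ takes two distinct values on $\mathbb{T}^d\setminus S$; therefore $\mathbb{T}^d\setminus S$ is disconnected, i.e. $S$ separates $\mathbb{T}^d$. (In particular this also shows $S\neq\emptyset$.)

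Next I would record that ${P}(\cdot,\lambda)$ is real-analytic on $\mathbb{T}^d$ and not identically zero for any $\lambda$: by Fact (2) at the end of Section~\ref{Sflo}, the leading term of $\mathcal{P}(z,\lambda)$ in $z_1$ is the monomial $\pm z_1^{Q/q_1}$ regardless of $\lambda$, so $\mathcal{P}(\cdot,\lambda)$ cannot vanish on the torus $\{|z_1|=\cdots=|z_d|=1\}$. Hence $S$ is a proper real-analytic subvariety of $\mathbb{T}^d$, in particular closed with empty interior. By the standard stratification of real-analytic sets, $S$ is a locally finite union of real-analytic submanifolds; if all of them had dimension $\le d-2$, then $\mathbb{T}^d\setminus S$ would be connected (any two points can be joined by a path in general position, avoiding a locally finite union of submanifolds of codimension $\ge 2$), contradicting the previous paragraph. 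So some stratum has dimension $d-1$; at a point of it, $S$ is locally an open real-analytic hypersurface of dimension $d-1$ in $\R^d$, contained in $F_\lambda(V)$, completing the proof.

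The main obstacle is precisely the last implication: a purely topological separation statement does not by itself yield smoothness, so the real-analytic (equivalently, algebraic in $z=e^{2\pi ik}$) structure of $S$ and the stratification theorem are genuinely needed, as is the codimension-$\ge2$ connectivity fact. An alternative, closer to \cite{kv06cmp}, is to produce a point $k_0$ at which $\lambda_m(k_0)=\lambda$, $\lambda$ is a \emph{simple} eigenvalue of ${D}(k_0)$ (so $\lambda_m$ is real-analytic near $k_0$), and $\nabla_k\lambda_m(k_0)\neq0$ — the non-vanishing of the gradient being exactly the property that must fail at extrema, cf.\ Theorem~\ref{thmextrem} — and then invoke the implicit function theorem; that route is essentially equivalent but requires separately controlling the (proper analytic) set where $\lambda_m$ is a multiple eigenvalue.
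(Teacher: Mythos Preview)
The paper does not supply its own proof of this lemma; it is quoted verbatim from \cite[Lemma~4]{kv06cmp}, so there is no in-paper argument to compare against. Your proposal is a valid, self-contained proof: the eigenvalue-counting function $N(k)$ is indeed locally constant off $S$, the choice of $k_1,k_2$ via Lemma~\ref{lespband} produces two different values of $N$, and the observation from Fact~(2) in Section~\ref{Sflo} that the coefficient of $z_1^{Q/q_1}$ is $\pm1$ independently of $\lambda$ shows ${P}(\cdot,\lambda)\not\equiv0$, so $S$ is a proper real-analytic subvariety and the stratification/codimension-$\ge2$ connectivity argument finishes.

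As you yourself note, the argument actually in \cite{kv06cmp} is the alternative route you sketch at the end: one finds a point $k_0$ where $\lambda$ is a simple eigenvalue of $D(k_0)$ and $\nabla_k\lambda_m(k_0)\neq0$, then applies the implicit function theorem. Your separation-plus-stratification argument trades the analytic-perturbation-theory ingredients (simplicity of the eigenvalue along a suitable set, Rellich-type smoothness) for soft topological input (connectivity of complements of codimension-$\ge2$ sets) together with the real-analytic stratification theorem. Both are short; yours has the mild advantage that it avoids any discussion of eigenvalue multiplicity, while the \cite{kv06cmp} approach is slightly more constructive in that it locates the hypersurface explicitly as a level set of a single analytic branch $\lambda_m(k)$.
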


\begin{proof}[\bf Proof of Theorem 
	\ref{thm1new}]
	For $d=1$, $ H_0+v$ does not have embedded eigenvalues  if $v(n)=\frac{o(1)}{|n|}$ as $n\to \infty$~\cite{ld}.
	Therefore, it suffices to  prove Theorem \ref{thm1new} for $d\geq 2$.
	
	By Lemma  \ref{led-1},   if  $\lambda\in \cup(a_m,b_m)$ and ${F}_{\lambda}(V)$ is irreducible, then $\lambda$ satisfies   Condition 1.
	For $d=2$, if $F_{\lambda}(V)$ is reducible, by Theorem \ref{thm21}, $\lambda=[V]$. By \eqref{last}, $\lambda=[V]$ satisfies
	Condition 1. 
	For $d\geq 3$,  by Theorem \ref{gcf1}, the Condition 1 holds for every $\lambda\in \cup(a_m,b_m)$.
	Now Theorem 
	\ref{thm1new} follows from Theorem \ref{thm1}.
\end{proof}
\appendix
\section{ Proof of Claim 1  }

\begin{proof}
	Otherwise, $\mathcal{P}_1(z,\lambda)$ has two non-trivial polynomial factors $f(z)$ and $g(z)$ such that 
	both  $\{z\in\C^d: f(z)=0\}$ and   $\{z\in \C^d: g(z)=0\}$ contain $z_1=z_2=\cdots=z_d=0$.
	Let 
	\begin{equation*}
		\tilde{f}(z)=f(z_1^{q_1},z_2^{q_2},\cdots,z_d^{q_d}),   \tilde{g}(z)=g(z_1^{q_1},z_2^{q_2},\cdots,z_d^{q_d}).
	\end{equation*}
	Let $\tilde{f}_1(z)$ ($\tilde{g}_1(z)$)  be the  component of the  lowest degree of $\tilde{f}(z)$  ($\tilde{g}(z)$). 
	Since both  $\{z\in\C^d: f(z)=0\}$ and   $\{z\in \C^d: g(z)=0\}$ contain $z_1=z_2=\cdots=z_d=0$, one has  that 
	$\tilde{f}_1(z)$  and $\tilde{g}_1(z)$  are non-constant.

	Since both $\tilde{f}(z)$ and $\tilde{g}(z)$ are polynomials of $z_1^{q_1},z_2^{q_2},\cdots, z_d^{q_d}$, we have 
	$\tilde{f}_1(z)$  and $\tilde{g}_1(z)$ are also polynomials of $z_1^{q_1},z_2^{q_2},\cdots, z_d^{q_d}$ and hence there exist 
	$f_1(z)$ and $g_1(z)$ such that 
	\begin{equation*}
		\tilde{f}_1(z)=f_1(z_1^{q_1},z_2^{q_2},\cdots,z_d^{q_d}),   \tilde{g}_1(z)=g_1(z_1^{q_1},z_2^{q_2},\cdots,z_d^{q_d}).
	\end{equation*}
	By \eqref{hom2} and \eqref{hom1}, one has
	\begin{equation*}
		\tilde{f}_1(z) \tilde{g}_1(z)=\tilde{h}_1(z)
	\end{equation*}
	and hence
	\begin{equation*}
		{f}_1(z) {g}_1(z)= {h}_1(z).
	\end{equation*}
	This is impossible since $h_1(z)$ is irreducible.
	
\end{proof}

%
%
%
%

\section*{Acknowledgments}
I would like to thank  Constanza Rojas-Molina for drawing me attention to  ~\cite{kvcpde20}  
and the organizers of  the Workshop ``Spectral Theory of Quasi-Periodic and Random Operators"  in CRM, November 2018, during which this research was started. 
I    wish to  thank  Ilya Kachkovskiy and   Peter Kuchment    for comments on earlier versions of the manuscript, which 
greatly improved  the exposition.    I also wish  to thank  Rupert Frank and Simon Larson  for inviting me to give a talk  in the  ``38th Annual Western States
Mathematical Physics Meeting". During the meeting, Rupert Frank's comments 
made  me realize that   proofs of  the irreducibility  work for complex-valued  potentials without any changes. 
Finally, I wish to express my  gratitude to  anonymous referees, whose  comments  greatly helped the exposition of the manuscript.
This research was 
supported by   NSF DMS-1700314/2015683,  DMS-2000345 and DMS-2052572.

 \bibliographystyle{abbrv} 

\begin{thebibliography}{10}
	
	\bibitem{AIM16}
	K.~Ando, H.~Isozaki, and H.~Morioka.
	\newblock Spectral properties of {S}chr\"{o}dinger operators on perturbed
	lattices.
	\newblock {\em Ann. Henri Poincar\'{e}}, 17(8):2103--2171, 2016.
	
	\bibitem{battig1988toroidal}
	D.~B{\"a}ttig.
	\newblock {\em A toroidal compactification of the two dimensional
		{B}loch-manifold}.
	\newblock PhD thesis, ETH Zurich, 1988.
	
	\bibitem{bat1}
	D.~B\"{a}ttig.
	\newblock A directional compactification of the complex {F}ermi surface and
	isospectrality.
	\newblock In {\em S\'{e}minaire sur les \'{E}quations aux {D}\'{e}riv\'{e}es
		{P}artielles, 1989--1990}, pages Exp. No. IV, 11. \'{E}cole Polytech.,
	Palaiseau, 1990.
	
	\bibitem{batcmh92}
	D.~B\"{a}ttig.
	\newblock A toroidal compactification of the {F}ermi surface for the discrete
	{S}chr\"{o}dinger operator.
	\newblock {\em Comment. Math. Helv.}, 67(1):1--16, 1992.
	
	\bibitem{bktcm91}
	D.~B\"{a}ttig, H.~Kn\"{o}rrer, and E.~Trubowitz.
	\newblock A directional compactification of the complex {F}ermi surface.
	\newblock {\em Compositio Math.}, 79(2):205--229, 1991.
	
	\bibitem{BK05}
	J.~Bourgain and C.~E. Kenig.
	\newblock On localization in the continuous {A}nderson-{B}ernoulli model in
	higher dimension.
	\newblock {\em Invent. Math.}, 161(2):389--426, 2005.
	
	\bibitem{bk13}
	J.~Bourgain and A.~Klein.
	\newblock Bounds on the density of states for {S}chr\"{o}dinger operators.
	\newblock {\em Invent. Math.}, 194(1):41--72, 2013.
	
	\bibitem{chi}
	E.~M. Chirka.
	\newblock {\em Complex analytic sets}, volume~46 of {\em Mathematics and its
		Applications (Soviet Series)}.
	\newblock Kluwer Academic Publishers Group, Dordrecht, 1989.
	\newblock Translated from the Russian by R. A. M. Hoksbergen.
	
	\bibitem{col91}
	Y.~Colin~de Verdi\`ere.
	\newblock Sur les singularit\'{e}s de van {H}ove g\'{e}n\'{e}riques.
	\newblock Number~46, pages 99--110. 1991.
	\newblock Analyse globale et physique math\'{e}matique (Lyon, 1989).
	
	\bibitem{dtcmh}
	B.~E.~J. Dahlberg and E.~Trubowitz.
	\newblock A remark on two-dimensional periodic potentials.
	\newblock {\em Comment. Math. Helv.}, 57(1):130--134, 1982.
	
	\bibitem{dkw19}
	B.~Davey, C.~Kenig, and J.-N. Wang.
	\newblock On {L}andis' conjecture in the plane when the potential has an
	exponentially decaying negative part.
	\newblock {\em Algebra i Analiz}, 31(2):204--226, 2019.
	
	\bibitem{dks}
	N.~Do, P.~Kuchment, and F.~Sottile.
	\newblock Generic properties of dispersion relations for discrete periodic
	operators.
	\newblock {\em J. Math. Phys.}, 61(10):103502, 19, 2020.
	
	\bibitem{ef}
	M.~Embree and J.~Fillman.
	\newblock Spectra of discrete two-dimensional periodic {S}chr\"{o}dinger
	operators with small potentials.
	\newblock {\em J. Spectr. Theory}, 9(3):1063--1087, 2019.
	
	\bibitem{flm}
	J.~Fillman, W.~Liu, and R.~Matos.
	\newblock Irreducibility of the {B}loch variety for finite-range
	{S}chr\"odinger operators.
	\newblock {\em arXiv preprint arXiv:2107.06447}, 2021.
	
	\bibitem{fk2}
	N.~Filonov and I.~Kachkovskiy.
	\newblock On spectral bands of discrete periodic operators. {I}n preparation.
	
	\bibitem{fk18}
	N.~Filonov and I.~Kachkovskiy.
	\newblock On the structure of band edges of 2-dimensional periodic elliptic
	operators.
	\newblock {\em Acta Math.}, 221(1):59--80, 2018.
	
	\bibitem{fls}
	L.~Fisher, W.~Li, and S.~P. Shipman.
	\newblock Reducible {F}ermi surface for multi-layer quantum graphs including
	stacked graphene.
	\newblock {\em Comm. Math. Phys.}, 385(3):1499--1534, 2021.
	
	\bibitem{fhh}
	R.~Froese, I.~Herbst, M.~Hoffmann-Ostenhof, and T.~Hoffmann-Ostenhof.
	\newblock {$L\sp{2}$}-lower bounds to solutions of one-body {S}chr\"{o}dinger
	equations.
	\newblock {\em Proc. Roy. Soc. Edinburgh Sect. A}, 95(1-2):25--38, 1983.
	
	\bibitem{GKToverview}
	D.~Gieseker, H.~Kn\"{o}rrer, and E.~Trubowitz.
	\newblock An overview of the geometry of algebraic {F}ermi curves.
	\newblock In {\em Algebraic geometry: {S}undance 1988}, volume 116 of {\em
		Contemp. Math.}, pages 19--46. Amer. Math. Soc., Providence, RI, 1991.
	
	\bibitem{GKTBook}
	D.~Gieseker, H.~Kn\"{o}rrer, and E.~Trubowitz.
	\newblock {\em The geometry of algebraic {F}ermi curves}, volume~14 of {\em
		Perspectives in Mathematics}.
	\newblock Academic Press, Inc., Boston, MA, 1993.
	
	\bibitem{hj18}
	R.~Han and S.~Jitomirskaya.
	\newblock Discrete {B}ethe-{S}ommerfeld conjecture.
	\newblock {\em Comm. Math. Phys.}, 361(1):205--216, 2018.
	
	\bibitem{IM14}
	H.~Isozaki and H.~Morioka.
	\newblock A {R}ellich type theorem for discrete {S}chr\"{o}dinger operators.
	\newblock {\em Inverse Probl. Imaging}, 8(2):475--489, 2014.
	
	\bibitem{Jf07}
	S.~Jitomirskaya.
	\newblock Ergodic {S}chr\"{o}dinger operators (on one foot).
	\newblock In {\em Spectral theory and mathematical physics: a {F}estschrift in
		honor of {B}arry {S}imon's 60th birthday}, volume~76 of {\em Proc. Sympos.
		Pure Math.}, pages 613--647. Amer. Math. Soc., Providence, RI, 2007.
	
	\bibitem{fk5}
	I.~Kachkovskiy.
	\newblock A talk in the ``{M}athematical {P}hysics and {H}armonic {A}nalysis
	{S}eminar" at {T}exas {A}\&{M} {U}niversity, {M}ay 22, 2020. {L}ink:
	https://www.math.tamu.edu/seminars/harmonic/index.php.
	
	\bibitem{katoan}
	T.~Kato.
	\newblock {\em Perturbation theory for linear operators}.
	\newblock Classics in Mathematics. Springer-Verlag, Berlin, 1995.
	\newblock Reprint of the 1980 edition.
	
	\bibitem{ksw15}
	C.~Kenig, L.~Silvestre, and J.-N. Wang.
	\newblock On {L}andis' conjecture in the plane.
	\newblock {\em Comm. Partial Differential Equations}, 40(4):766--789, 2015.
	
	\bibitem{ks87}
	W.~Kirsch and B.~Simon.
	\newblock Comparison theorems for the gap of {S}chr\"{o}dinger operators.
	\newblock {\em J. Funct. Anal.}, 75(2):396--410, 1987.
	
	\bibitem{krs}
	A.~Kiselev, C.~Remling, and B.~Simon.
	\newblock Effective perturbation methods for one-dimensional {S}chr\"{o}dinger
	operators.
	\newblock {\em J. Differential Equations}, 151(2):290--312, 1999.
	
	\bibitem{kr}
	F.~Klopp and J.~Ralston.
	\newblock Endpoints of the spectrum of periodic operators are generically
	simple.
	\newblock volume~7, pages 459--463. 2000.
	\newblock Cathleen Morawetz: a great mathematician.
	
	\bibitem{ktcmh90}
	H.~Kn\"{o}rrer and E.~Trubowitz.
	\newblock A directional compactification of the complex {B}loch variety.
	\newblock {\em Comment. Math. Helv.}, 65(1):114--149, 1990.
	
	\bibitem{kru}
	H.~Krueger.
	\newblock Periodic and limit-periodic discrete {S}chr{\"o}dinger operators.
	\newblock {\em arXiv preprint arXiv:1108.1584}, 2011.
	
	\bibitem{kuflo}
	P.~Kuchment.
	\newblock {\em Floquet theory for partial differential equations}, volume~60 of
	{\em Operator Theory: Advances and Applications}.
	\newblock Birkh\"{a}user Verlag, Basel, 1993.
	
	\bibitem{kusu01}
	P.~Kuchment.
	\newblock The mathematics of photonic crystals.
	\newblock In {\em Mathematical modeling in optical science}, volume~22 of {\em
		Frontiers Appl. Math.}, pages 207--272. SIAM, Philadelphia, PA, 2001.
	
	\bibitem{ksurvey}
	P.~Kuchment.
	\newblock An overview of periodic elliptic operators.
	\newblock {\em Bull. Amer. Math. Soc. (N.S.)}, 53(3):343--414, 2016.
	
	\bibitem{private}
	P.~Kuchment.
	\newblock Private communication.
	\newblock 2019.
	
	\bibitem{kp07}
	P.~Kuchment and Y.~Pinchover.
	\newblock Liouville theorems and spectral edge behavior on abelian coverings of
	compact manifolds.
	\newblock {\em Trans. Amer. Math. Soc.}, 359(12):5777--5815, 2007.
	
	\bibitem{kvcpde20}
	P.~Kuchment and B.~Vainberg.
	\newblock On absence of embedded eigenvalues for {S}chr\"{o}dinger operators
	with perturbed periodic potentials.
	\newblock {\em Comm. Partial Differential Equations}, 25(9-10):1809--1826,
	2000.
	
	\bibitem{kv06cmp}
	P.~Kuchment and B.~Vainberg.
	\newblock On the structure of eigenfunctions corresponding to embedded
	eigenvalues of locally perturbed periodic graph operators.
	\newblock {\em Comm. Math. Phys.}, 268(3):673--686, 2006.
	
	\bibitem{ku911}
	P.~A. Kuchment.
	\newblock On the {F}loquet theory of periodic difference equations.
	\newblock In {\em Geometrical and algebraical aspects in several complex
		variables ({C}etraro, 1989)}, volume~8 of {\em Sem. Conf.}, pages 201--209.
	EditEl, Rende, 1991.
	
	\bibitem{kn07}
	P.~Kurasov and S.~Naboko.
	\newblock Wigner-von {N}eumann perturbations of a periodic potential: spectral
	singularities in bands.
	\newblock {\em Math. Proc. Cambridge Philos. Soc.}, 142(1):161--183, 2007.
	
	\bibitem{lev}
	B.~J. Levin.
	\newblock {\em Distribution of zeros of entire functions}, volume~5 of {\em
		Translations of Mathematical Monographs}.
	\newblock American Mathematical Society, Providence, R.I., revised edition,
	1980.
	\newblock Translated from the Russian by R. P. Boas, J. M. Danskin, F. M.
	Goodspeed, J. Korevaar, A. L. Shields and H. P. Thielman.
	
	\bibitem{ls}
	W.~Li and S.~P. Shipman.
	\newblock Irreducibility of the {F}ermi surface for planar periodic graph
	operators.
	\newblock {\em Lett. Math. Phys.}, 110(9):2543--2572, 2020.
	
	\bibitem{liucriteria}
	W.~Liu.
	\newblock Criteria for {E}mbedded {E}igenvalues for {D}iscrete
	{S}chr\"{o}dinger {O}perators.
	\newblock {\em Int. Math. Res. Not. IMRN}, (20):15803--15832, 2021.
	
	\bibitem{liu21}
	W.~Liu.
	\newblock Fermi isospectrality for discrete periodic {S}chr\"odinger operators.
	\newblock {\em arXiv:2106.03726}, 2021.
	
	\bibitem{liuprivate}
	W.~Liu.
	\newblock Fermi isospectrality of discrete periodic {S}chr\"odinger operators
	with separable potentials on {$\mathbb{Z}^2$}. {P}reprint.
	\newblock 2021.
	
	\bibitem{liuicmp}
	W.~Liu.
	\newblock Topics on {F}ermi varieties of discrete periodic {S}chr{\"o}dinger
	operators.
	\newblock {\em arXiv:2111.01062}, 2021.
	
	\bibitem{ld}
	W.~Liu and D.~C. Ong.
	\newblock Sharp spectral transition for eigenvalues embedded into the spectral
	bands of perturbed periodic operators.
	\newblock {\em J. Anal. Math.}, 141(2):625--661, 2020.
	
	\bibitem{lmnn}
	A.~Logunov, E.~Malinnikova, N.~Nadirashvili, and F.~Nazarov.
	\newblock The {L}andis conjecture on exponential decay.
	\newblock {\em arXiv preprint arXiv:2007.07034}, 2020.
	
	\bibitem{LM18}
	Y.~Lyubarskii and E.~Malinnikova.
	\newblock Sharp uniqueness results for discrete evolutions.
	\newblock In {\em Non-linear partial differential equations, mathematical
		physics, and stochastic analysis}, EMS Ser. Congr. Rep., pages 423--436. Eur.
	Math. Soc., Z\"{u}rich, 2018.
	
	\bibitem{Mesh92}
	V.~Z. Meshkov.
	\newblock On the possible rate of decrease at infinity of the solutions of
	second-order partial differential equations.
	\newblock {\em Mat. Sb.}, 182(3):364--383, 1991.
	
	\bibitem{ns12}
	S.~Naboko and S.~Simonov.
	\newblock Zeroes of the spectral density of the periodic {S}chr\"{o}dinger
	operator with {W}igner-von {N}eumann potential.
	\newblock {\em Math. Proc. Cambridge Philos. Soc.}, 153(1):33--58, 2012.
	
	\bibitem{nr66}
	R.~Narasimhan.
	\newblock {\em Introduction to the theory of analytic spaces}.
	\newblock Lecture Notes in Mathematics, No. 25. Springer-Verlag, Berlin-New
	York, 1966.
	
	\bibitem{par}
	L.~Parnovski.
	\newblock Bethe-{S}ommerfeld conjecture.
	\newblock {\em Ann. Henri Poincar\'{e}}, 9(3):457--508, 2008.
	
	\bibitem{par1}
	L.~Parnovski.
	\newblock Private communication.
	\newblock 2021.
	
	\bibitem{rof64}
	F.~S. Rofe-Beketov.
	\newblock A finiteness test for the number of discrete levels which can be
	introduced into the gaps of the continuous spectrum by perturbations of a
	periodic potential.
	\newblock {\em Dokl. Akad. Nauk SSSR}, 156:515--518, 1964.
	
	\bibitem{shva}
	W.~Shaban and B.~Vainberg.
	\newblock Radiation conditions for the difference {S}chr\"{o}dinger operators.
	\newblock {\em Appl. Anal.}, 80(3-4):525--556, 2001.
	
	\bibitem{shi1}
	S.~P. Shipman.
	\newblock Eigenfunctions of unbounded support for embedded eigenvalues of
	locally perturbed periodic graph operators.
	\newblock {\em Comm. Math. Phys.}, 332(2):605--626, 2014.
	
	\bibitem{shi2}
	S.~P. Shipman.
	\newblock Reducible {F}ermi surfaces for non-symmetric bilayer quantum-graph
	operators.
	\newblock {\em J. Spectr. Theory}, 10(1):33--72, 2020.
	
\end{thebibliography}

\end{document}